\theoremstyle{plain}
\newtheorem{theorem}{Theorem}[section]
\newtheorem{proposition}[theorem]{Proposition}
\newtheorem{definition}[theorem]{Definition}
\newtheorem{lemma}[theorem]{Lemma}
\newtheorem{example}[theorem]{Example}
\newtheorem{corollary}[theorem]{Corollary}
\newtheorem{obs}[theorem]{Observation}
\newcommand{\R}{\mathbb{R}}
\newcommand{\sub}{\subseteq}
\newcommand{\sgn}{\operatorname{sgn}}
\newcommand{\ds}{\displaystyle}
\newcommand{\U}{\mathcal{U}}
\DeclareMathOperator*{\argmin}{arg\,min}
\newcommand{\st}{\operatorname{s.t.}}
\newcommand{\prox}{\operatorname{prox}}
\newcommand{\soft}{\operatorname{soft}}
\newcommand{\sd}{\partial}
\newcommand{\cg}{\succcurlyeq}
\newcommand{\psd}{\cg\mb0}
\newcommand{\diag}{\operatorname{diag}}
\newcommand{\rank}{\operatorname{rank}}
\newcommand{\bs}{\boldsymbol}
\newcommand{\mb}[1]{\mathbf{#1}}
\newcommand{\be}{\bs\epsilon}
\newcommand{\bb}{{\bs\beta}}
\newcommand{\bd}{\bs\delta}
\newcommand{\ba}{\bs\alpha}
\newcommand{\ph}{\bs \phi}
\newcommand{\bg}{\bs\gamma}
\newcommand{\D}{\bs\Delta}
\newcommand{\s}{\bs\Sigma}
\newcommand{\Ph}{\bs\Phi}
\newcommand{\T}{\bs\Theta}
\newcommand{\barl}{\overline{\lambda}}
\newcommand{\ubarl}{\underline{\lambda}}
\newcommand{\w}{\mb w}
\newcommand{\x}{\mb x}
\newcommand{\zz}{\mb z}
\newcommand{\y}{\mb y}
\newcommand{\A}{\mb A}
\newcommand{\X}{{\mb X}}
\newcommand{\ns}{1000} 
\newcommand{\cl}{(\textsc{CL}_{\mu,\gamma})}
\newcommand{\tl}{(\textsc{TL}_{\lambda,\ell})}
\newcommand{\tla}[1]{(\textsc{TL}_{#1})}
\newcommand{\cla}[1]{(\textsc{CL}_{#1})}
\newcommand{\I}{\mb I}
\newcommand{\tk}[1]{T_k\left(#1\right)}
\newcommand{\tka}[2]{T_{#1}\left(#2\right)}
\newcommand{\tke}{T_k}
\newcommand{\N}{\mathcal{N}}
\newcommand{\cL}{\mathcal{L}}
\newcommand{\whbb}{\widehat{\bb}}
\renewcommand\[{\begin{equation}}
\renewcommand\]{\end{equation}}
\newcommand{\bbs}{\bb^*}
\begin{document}

\title{The Trimmed Lasso: Sparsity and Robustness}

\author{Dimitris~Bertsimas,
        Martin~S.~Copenhaver,        and~Rahul~Mazumder\thanks{Authors' affiliation: Sloan School of Management and Operations Research Center, MIT. 
\newline
Emails: \{dbertsim,mcopen,rahulmaz\}@mit.edu.}
}

\date{August 15, 2017}

\maketitle

\begin{abstract}
Nonconvex penalty methods for sparse modeling in linear regression have been a topic of fervent interest in recent years.  
Herein, we study a family of nonconvex penalty functions that we call the trimmed Lasso and that offers exact control over the desired level of sparsity of estimators. We analyze its structural properties and in doing so show the following:
\begin{enumerate}
\item Drawing parallels between robust statistics and robust optimization, we show 
that the trimmed-Lasso-regularized least squares problem can be viewed as a generalized form of total least squares under a specific model of uncertainty. In contrast, 
this same model of uncertainty, viewed instead through a robust optimization lens, leads to the convex SLOPE (or OWL) penalty.

\item Further, in relating the trimmed Lasso to commonly used sparsity-inducing penalty functions, we provide a succinct characterization of the connection between trimmed-Lasso-like approaches and penalty functions that are coordinate-wise separable, showing that the trimmed penalties subsume existing coordinate-wise separable penalties, with strict containment in general.

\item Finally, we describe a variety of exact and heuristic algorithms, both existing and new, for trimmed Lasso regularized estimation problems. 
We include a comparison between the different approaches and an accompanying implementation of the algorithms.

\end{enumerate}

\end{abstract}


\section{Introduction}\label{sec:intro}

Sparse modeling in linear regression has been a topic of fervent interest in recent years \cite{hastie,bvdg}. This interest has taken several forms, from substantial developments in the theory of the Lasso to advances in algorithms for convex optimization. Throughout there has been a strong emphasis on the increasingly high-dimensional nature of linear regression problems; in such problems, where the number of variables $p$ can vastly exceed the number of observations $n$, sparse modeling techniques are critical for performing inference.

\subsubsection*{Context}

One of the fundamental approaches to sparse modeling in the usual linear regression model of $\y=\X\bb+\be$, with $\y\in\R^n$ and $\X\in\R^{n\times p}$, is the
best subset selection \cite{miller2002subset} problem:
\begin{equation}\label{eqn:a10}
\ds\min_{\|\bb\|_0\leq k}\frac{1}{2} \|\y-\X\bb\|_2^2,
\end{equation}
which seeks to find the best choice of $k$ from among $p$ features that best explain the response in terms of the least squares loss function.
The problem \eqref{eqn:a10} has received extensive attention from a variety of statistical and optimization perspectives---see for example \cite{bkm} and references therein. One can also consider the Lagrangian, or penalized, form of \eqref{eqn:a10}, namely,
\begin{equation}\label{eqn:a10p}
\ds\min_{\bb}\frac{1}{2} \|\y-\X\bb\|_2^2 + \mu \|\bb\|_0,
\end{equation}
for a regularization parameter $\mu>0$. One of the advantages of \eqref{eqn:a10} over \eqref{eqn:a10p} is that it offers direct control over estimators' sparsity via the discrete parameter $k$, as opposed to the Lagrangian form \eqref{eqn:a10p} for which the correspondence between the continuous parameter $\mu$ and the resulting sparsity of estimators obtained is not entirely clear. For further discussion, see \cite{convspen}.

Another class of problems that have received considerable attention in the statistics and machine learning literature is the following:
\begin{equation}\label{eqn:a1}
\min_\bb \frac{1}{2}\|\y-\X\bb\|_2^2 + R(\bb),
\end{equation}
where $R(\bb)$ is a choice of regularizer which encourages sparsity in $\bb$.
For example, the popularly used Lasso \cite{tibshirani} takes the form of problem \eqref{eqn:a1} with $R(\bb)=\mu\|\bb\|_1$, where $\|\cdot\|_1$ is the $\ell_1$ norm; in doing so, the Lasso simultaneously selects variables and also performs shrinkage. 
The Lasso has seen widespread success across a variety of applications.

In contrast to the convex approach of the Lasso, there also has been  been growing interest in considering richer classes of regularizers $R$ which include nonconvex functions. Examples of such penalties include the $\ell_{q}$-penalty (for $q\in [0,1]$), minimax concave penalty (MCP) \cite{mcp}, and the smoothly clipped absolute deviation (SCAD) \cite{scad}, among others. Many of the nonconvex penalty functions considered are \emph{coordinate-wise separable}; in other words, $R$ can be decomposed as
$$R(\bb) = \sum_{i=1}^p \rho(|\beta_i|),$$
where $\rho(\cdot)$ is a real-valued function \cite{zhangzhang}. There has been a variety of evidence suggesting the promise of such nonconvex approaches in overcoming certain shortcomings of Lasso-like approaches.

One of the central ideas of nonconvex penalty methods used in sparse modeling is that of creating a continuum of estimation problems which bridge the gap between convex methods for sparse estimation (such as Lasso) and subset selection in the form \eqref{eqn:a10}. However, as noted above, such a connection does not necessarily offer direct control over the desired level of sparsity of estimators.

\subsection*{The trimmed Lasso}

In contrast with coordinate-wise separable penalties as considered above, we consider a family of penalties that are not separable across coordinates. One such penalty which forms a principal object of our study herein is 
$$\tk{\bb} := \min_{\substack{\|\ph\|_0\leq k}} \|\ph-\bb\|_1.$$
The penalty $\tke$ is a measure of the distance from the set of $k$-sparse estimators as measured via the $\ell_1$ norm. In other words, when used in problem \eqref{eqn:a1}, the penalty $R=\tke$ controls the amount of shrinkage towards sparse models. 

The penalty $\tke$ can equivalently be written as 
$$\tk{\bb} = \sum_{i=k+1}^p |\beta_{(i)}|,$$
where $|\beta_{(1)}|\geq |\beta_{(2)}|\geq \cdots\geq |\beta_{(p)}|$ are the sorted entries of $\bb$. In words, $\tk{\bb}$ is the sum of the absolute values of the $p-k$ smallest magnitude entries of $\bb$. The penalty was first introduced in \cite{thiao,hempel,gotoh1,gotoh2}. We refer to this family of penalty functions (over choices of $k$) as the \emph{trimmed Lasso}.\footnote{The choice of name is our own and is motivated by the least trimmed squares regression estimator, described below} The case of $k=0$ recovers the usual Lasso, as one would suspect. The distinction, of course, is that for general $k$, $\tke$ no longer shrinks, or biases towards zero, the $k$ largest entries of $\bb$.

Let us consider the least squares loss regularized via the trimmed lasso penalty---this leads to the following optimization criterion:
\begin{equation}\label{eqn:rmaux1}
\ds\min_{\bb} \frac{1}{2}\|\y-\X\bb\|_2^2 + \lambda \tk{\bb},
\end{equation}
where $\lambda>0$ is the regularization parameter. The penalty term shrinks the smallest $p-k$ entries of $\bb$ and does not impose any penalty on the largest $k$ entries of $\bb$. If $\lambda$ becomes larger, the smallest $p-k$ entries of $\bb$ are shrunk further; after a certain threshold---as soon as $\lambda \geq \lambda_0$ for some finite $\lambda_0$---the smallest $p-k$ entries are set to zero.
The existence of a finite $\lambda_0$ (as stated above) is an attractive feature of the trimmed Lasso and is known as its \emph{exactness} property, namely, for $\lambda$ sufficiently large, the problem \eqref{eqn:rmaux1} exactly solves constrained best subset selection as in problem \eqref{eqn:a10} (\emph{c.f.} \cite{gotoh1}). Note here the contrast with the separable penalty functions which correspond instead with problem \eqref{eqn:a10p}; as such, the trimmed Lasso is distinctive in that it offers precise control over the desired level of sparsity vis-\`a-vis the discrete parameter $k$. Further, it is also notable that many algorithms developed for separable-penalty estimation problems can be directly adapted for the trimmed Lasso.

Our objective in studying the trimmed Lasso is distinctive from previous approaches. In particular, while previous work on the penalty $\tke$ has focused primarily on its use as a tool for reformulating sparse optimization problems \cite{thiao,hempel} and on how such reformulations can be solved computationally \cite{gotoh1,gotoh2}, we instead aim to explore the trimmed Lasso's structural properties and its relation to existing sparse modeling techniques.

In particular,  a natural question we seek to explore is, what is the connection of the trimmed Lasso penalty with existing separable penalties commonly used in sparse statistical learning? For example, the trimmed Lasso bears a close resemblance to the clipped (or capped) Lasso penalty \cite{cl}, namely,
$$\sum_{i=1}^p \mu\min\{\gamma|\beta_i|,1\},$$
where $\mu,\gamma>0$ are parameters (when $\gamma$ is large, the clipped Lasso approximates $\mu\|\bb\|_0$).

\subsection*{Robustness: robust statistics and robust optimization}

A significant thread woven throughout the consideration of penalty methods for sparse modeling is the notion of robustness---in short, the ability of a method to perform in the face of noise. Not surprisingly, the notion of robustness has myriad distinct meanings depending on the context. Indeed, as Huber, a pioneer in the area of robust statistics, aptly noted:
\begin{quote}
``The word `robust' is loaded with many---sometimes inconsistent---connotations.'' \cite[p. 2]{huber}
\end{quote}
For this reason, we consider robustness from several perspectives---both the robust statistics~\cite{huber} and robust optimization~\cite{RObook} viewpoints.

A common premise of the various approaches is as follows: that a robust model should perform well even under small deviations from its underlying assumptions; and that to achieve such behavior, some efficiency under the assumed model should be sacrificed. Not surprisingly in light of Huber's prescient observation, the exact manifestation of this idea can take many different forms, even if the initial premise is ostensibly the same.

\subsubsection*{Robust statistics and the ``min-min'' approach}

One such approach is in the field of robust statistics \cite{huber,robRegBook,robStatsSurvey}. In this context, the primary assumptions are often probabilistic, i.e. distributional, in nature, and the deviations to be ``protected against'' include possibly gross, or arbitrarily bad, errors. Put simply, robust statistics is primary focused on analyzing and mitigating the influence of outliers on estimation methods.

There have been a variety of proposals of different estimators to achieve this. One that is particularly relevant for our purposes is that of \emph{least trimmed squares} (``LTS'') \cite{robRegBook}. For fixed $j\in\{1,\ldots,n\}$, the LTS problem is defined as
\begin{equation}\label{eqn:introlts}
\min_\bb \sum_{i=j+1}^n |r_{(i)}(\bb)|^2,
\end{equation}
where $r_i(\bb) = y_i-\x_i'\bb$ are the residuals and $r_{(i)}(\bb)$ are the sorted residuals given $\bb$ with $|r_{(1)}(\bb)|\geq |r_{(2)}(\bb)|\geq\cdots\geq |r_{(n)}(\bb)|$. In words, the LTS estimator performs ordinary least squares on the $n-j$ smallest residuals (discarding the $j$ largest or worst residuals).

Furthermore, it is particularly instructive to express \eqref{eqn:introlts} in the equivalent form (\emph{c.f.} \cite{bmlqs})
\begin{equation}\label{eqn:introltsalt}
\min_\bb\min_{\substack{I\sub \{1,\ldots,n\}:\\|I|=n-j }} \sum_{i\in I} |r_i (\bb)|^2.
\end{equation}
In light of this representation, we refer to LTS as a form of ``min-min'' robustness. One could also interpret this min-min robustness as \emph{optimistic} in the sense the estimation problems \eqref{eqn:introltsalt} and, \emph{a fortiori}, \eqref{eqn:introlts} allow the modeler to also choose observations to discard.

\subsubsection*{Other min-min models of robustness}

Another approach to robustness which also takes a min-min form like LTS is the classical technique known as \emph{total least squares} \cite{tls,tlsoverview}. For our purposes, we consider total least squares in the form
\begin{equation}\label{eqn:introeiv}
\min_{\bb}\min_{\D} \frac{1}{2}\|\y-(\X+\D)\bb\|_2^2 + \eta\|\D\|_{2}^2,
\end{equation}
where $\|\D\|_2$ is the usual Frobenius norm of the matrix $\D$ and $\eta>0$ is a scalar parameter. In this framework, one again has an optimistic view on error: find the best possible ``correction'' of the data matrix $\X$ as $\X+\D^*$ and perform least squares using this corrected data (with $\eta$ controlling the flexibility in choice of $\D$).

In contrast with the penalized form of \eqref{eqn:introeiv}, one could also consider the problem in a constrained form such as
\begin{equation}\label{eqn:introeivcon}
\min_\bb \min_{\D\in\mathcal{V}} \frac{1}{2}\|\y-(\X+\D)\bb\|_2^2,
\end{equation}
where $\mathcal{V}\sub\R^{n\times p}$ is defined as $\mathcal{V}= \{\D: \|\D\|_2\leq \eta'\}$ for some $\eta'>0$. 
This problem again has the min-min form, although now with perturbations $\D$ as restricted to the set $\mathcal{V}$.

\subsubsection*{Robust optimization and the ``min-max'' approach}

We now turn our attention to a different approach to the notion of robustness known as robust optimization \cite{RObook,ROsurvey}. In contrast with robust statistics, robust optimization typically replaces  distributional assumptions with a new primitive, namely, the deterministic notion of an \emph{uncertainty set}. Further, in robust optimization one considers a worst-case or pessimistic perspective and the focus is on perturbations from the nominal model (as opposed to possible gross corruptions as in robust statistics).

To be precise, one possible robust optimization model for linear regression takes form \cite{xu,RObook,bcrobreg}
\begin{equation}\label{eqn:roprimitive}
\min_\bb\max_{\D\in\U} \frac{1}{2}\|\y-(\X+\D)\bb\|_2^2,
\end{equation}
where $\U\sub\mathbb{R}^{n\times p}$ is a (deterministic) uncertainty set that captures the possible deviations of the model (from the nominal data $\X$). 
Note the immediate contrast with the robust models considered earlier (LTS and total least squares in \eqref{eqn:introlts} and \eqref{eqn:introeiv}, respectively) that take the min-min form; instead, robust optimization focuses on ``min-max'' robustness. For a related discussion contrasting the min-min approach with min-max, see \cite{worstbest,optimisticrobust} and references therein.

One of the attractive features of the min-max formulation is that it gives a re-interpretation of  several statistical regularization methods. For example, the usual Lasso (problem \eqref{eqn:a1} with $R=\mu\ell_1$) can be expressed in the form \eqref{eqn:roprimitive} for a specific choice of uncertainty set:
\begin{proposition}[e.g. \cite{xu,RObook}]\label{prop:lasso}
Problem \eqref{eqn:roprimitive} with uncertainty set $\U = \{\D: \|\D_i\|_2\leq \mu\;\forall i\}$
is equivalent to the Lasso, i.e., problem \eqref{eqn:a1} with $R(\bb)=\mu\|\bb\|_1$, where $\D_i$ denotes the $i$th column of $\D$.
\end{proposition}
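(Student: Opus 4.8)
The plan is to reduce the proposition to a closed-form evaluation of the inner maximization in \eqref{eqn:roprimitive} and then observe that the resulting outer problem is the Lasso. First I would exploit the linearity of the perturbation: writing $\D\bb = \sum_{i=1}^p \beta_i\D_i$ and setting $\mb r := \y-\X\bb$, the inner objective becomes $\tfrac12\|\mb r - \sum_i \beta_i\D_i\|_2^2$. Since $t\mapsto\tfrac12 t^2$ is nondecreasing on $[0,\infty)$, maximizing this over $\D\in\U$ is equivalent to maximizing the norm $\|\mb r - \sum_i\beta_i\D_i\|_2$, so it suffices to compute $\max_{\D\in\U}\|\mb r - \sum_i\beta_i\D_i\|_2$ and then square.

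For the upper bound I would apply the triangle inequality together with the column constraints $\|\D_i\|_2\le\mu$ defining $\U$, giving $\|\mb r - \sum_i\beta_i\D_i\|_2 \le \|\mb r\|_2 + \sum_i|\beta_i|\,\|\D_i\|_2 \le \|\mb r\|_2 + \mu\|\bb\|_1$. The crux is showing this bound is attained by an admissible $\D$, which is the step I expect to require the only real care. Assuming $\mb r\neq\mb 0$, I would take the adversarial perturbation that aligns every column with the residual direction, namely $\D_i = -\mu\,\sgn(\beta_i)\,\mb r/\|\mb r\|_2$; then $\beta_i\D_i = -\mu|\beta_i|\,\mb r/\|\mb r\|_2$, so $\mb r - \sum_i\beta_i\D_i = (\|\mb r\|_2 + \mu\|\bb\|_1)\,\mb r/\|\mb r\|_2$, whose norm equals the upper bound, while clearly $\|\D_i\|_2\le\mu$. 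Hence $\max_{\D\in\U}\|\y-(\X+\D)\bb\|_2 = \|\y-\X\bb\|_2 + \mu\|\bb\|_1$, and squaring yields the inner value $\tfrac12\bigl(\|\y-\X\bb\|_2+\mu\|\bb\|_1\bigr)^2$.

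Two points remain. First, the degenerate case $\mb r=\mb 0$: here the bound reads $\mu\|\bb\|_1$ and is still attained by replacing $\mb r/\|\mb r\|_2$ with any fixed unit vector $\mb u$ and setting $\D_i = -\mu\,\sgn(\beta_i)\,\mb u$, so the formula persists. Second, passing back to the outer minimization, $\min_\bb\tfrac12\bigl(\|\y-\X\bb\|_2+\mu\|\bb\|_1\bigr)^2$ has the same set of minimizers as $\min_\bb\|\y-\X\bb\|_2+\mu\|\bb\|_1$ (again by monotonicity of $t\mapsto\tfrac12 t^2$), which is exactly the Lasso objective of \eqref{eqn:a1} with $R(\bb)=\mu\|\bb\|_1$ up to the standard squaring of the loss; I would state the equivalence in this sense. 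I would close by noting that the argument uses only the triangle inequality and the separability of $\U$ across columns, so nothing beyond these structural features of the uncertainty set is needed.
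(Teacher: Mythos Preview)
The paper does not supply its own proof of this proposition; it is stated as a known result with citations to \cite{xu,RObook}. Your argument is correct and follows the standard route taken in those references: evaluate the inner maximum in closed form via the triangle inequality together with an explicit saturating perturbation aligned with the residual direction, then pass to the outer minimization by monotonicity of $t\mapsto\tfrac12 t^2$.

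You also correctly flag the one genuine subtlety here, namely that the inner maximum evaluates to $\tfrac12\bigl(\|\y-\X\bb\|_2+\mu\|\bb\|_1\bigr)^2$, so the outer problem is literally the square-root Lasso $\min_\bb \|\y-\X\bb\|_2+\mu\|\bb\|_1$ rather than the squared-loss form of \eqref{eqn:a1}; stating the equivalence ``up to the standard squaring of the loss'' is the honest way to phrase it. The paper itself acknowledges exactly this point in the footnote preceding \eqref{eqn:eivconhom}, remarking that the conversion to linearly homogeneous objectives is ``often hidden in equivalence results like Proposition~\ref{prop:slope}.''
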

\noindent For further discussion of the robust optimization approach as applied to statistical problems, see \cite{bcrobreg} and references therein.

\subsubsection*{Other min-max models of robustness}

We close our discussion of robustness by considering another example of min-max robustness that is of particular relevance to the trimmed Lasso. In particular, we consider problem \eqref{eqn:a1} with the SLOPE (or OWL) penalty \cite{slope,owl}, namely,
$$R_{\textsc{SLOPE}(\mb w)}(\bb) = \sum_{i=1}^p w_i |\beta_{(i)}|, $$
where $\mb w$ is a (fixed) vector of weights with $w_1\geq w_2\geq \cdots\geq w_p\geq 0$ and $w_1>0$. In its simplest form, the SLOPE penalty has weight vector $\tilde{\w}$, where $\tilde{w}_1=\cdots=\tilde{w}_k=1$, $\tilde{w}_{k+1}=\cdots=\tilde{w}_p=0$, in which case we have the identity
$$R_{\textsc{SLOPE}(\tilde{\w})}(\bb)  =\| \bb \|_{1} - T_{k}({\bb}).$$

There are some apparent similarities but also subtle differences between the SLOPE penalty and the trimmed Lasso. From a high level, while the trimmed Lasso focuses on the smallest magnitude entries of $\bb$, the SLOPE penalty in its simplest form focuses on the \emph{largest} magnitude entries of $\bb$. As such, the trimmed Lasso is generally nonconvex, while the SLOPE penalty is always convex; consequently, the techniques for solving the related estimation problems will necessarily be different.

Finally, we note that the SLOPE penalty can be considered as a min-max model of robustness for a particular choice of uncertainty set:

\begin{proposition}\label{prop:slope}
Problem \eqref{eqn:roprimitive} with uncertainty set
$$\U =\left\{\D :
\begin{array}{c}
\D \text{ has at most $k$ nonzero}\\
\text{columns and } \|\D_i\|_2\leq \mu\;\forall i
\end{array}\right\}
$$
is equivalent to problem \eqref{eqn:a1} with $R(\bb)=\mu R_{\textsc{SLOPE}(\tilde{\w})}(\bb) $, where $\tilde{w}_1=\cdots=\tilde{w}_k=1$ and $\tilde{w}_{k+1}=\cdots=\tilde{w}_p=0$.
\end{proposition}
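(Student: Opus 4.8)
The plan is to compute the inner maximization in \eqref{eqn:roprimitive} for a fixed $\bb$ and thereby reduce the min-max problem to a penalized least squares problem, exactly paralleling the proof of Proposition \ref{prop:lasso}. Writing $\mb r := \y - \X\bb$ and using $(\X+\D)\bb = \X\bb + \D\bb$, the inner problem becomes $\max_{\D\in\U}\tfrac12\|\mb r - \D\bb\|_2^2$. Since $t\mapsto \tfrac12 t^2$ is increasing for $t\ge 0$, it suffices to evaluate $\max_{\D\in\U}\|\mb r - \D\bb\|_2$, and the bulk of the work is to show this equals $\|\mb r\|_2 + \mu\sum_{i=1}^k|\beta_{(i)}| = \|\y-\X\bb\|_2 + \mu R_{\textsc{SLOPE}(\tilde{\w})}(\bb)$.

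First I would establish the upper bound. Any $\D\in\U$ has its nonzero columns supported on some index set $S$ with $|S|\le k$, so that $\D\bb = \sum_{i\in S}\beta_i\D_i$. By the triangle inequality together with the column constraint $\|\D_i\|_2\le\mu$,
\[
\|\D\bb\|_2 \;\le\; \sum_{i\in S}|\beta_i|\,\|\D_i\|_2 \;\le\; \mu\sum_{i\in S}|\beta_i| \;\le\; \mu\sum_{i=1}^k|\beta_{(i)}|,
\]
where the final inequality uses $|S|\le k$ together with the fact that the largest achievable value of $\sum_{i\in S}|\beta_i|$ over all $|S|\le k$ is the sum of the $k$ largest magnitude entries of $\bb$. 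A further application of the triangle inequality gives $\|\mb r - \D\bb\|_2\le\|\mb r\|_2 + \mu\sum_{i=1}^k|\beta_{(i)}|$.

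Next I would exhibit a $\D^\star\in\U$ attaining this bound. Take $S^\star$ to be (a choice of) the indices of the $k$ largest magnitude entries of $\bb$ and, when $\mb r\neq\mb0$, set $\D^\star_i = -\mu\,\sgn(\beta_i)\,\mb r/\|\mb r\|_2$ for $i\in S^\star$ and $\D^\star_i=\mb0$ otherwise; each column has norm at most $\mu$ and there are at most $k$ nonzero columns, so $\D^\star\in\U$. A direct computation gives $\D^\star\bb = -\mu\big(\sum_{i=1}^k|\beta_{(i)}|\big)\mb r/\|\mb r\|_2$, whence $\mb r - \D^\star\bb = \big(1 + \mu\sum_{i=1}^k|\beta_{(i)}|/\|\mb r\|_2\big)\mb r$ and $\|\mb r - \D^\star\bb\|_2 = \|\mb r\|_2 + \mu\sum_{i=1}^k|\beta_{(i)}|$, matching the upper bound. (When $\mb r=\mb0$ one instead aligns $\D^\star\bb$ with an arbitrary unit vector, giving $\|\mb r-\D^\star\bb\|_2 = \mu\sum_{i=1}^k|\beta_{(i)}|$, which again equals the bound.) Recognizing $\sum_{i=1}^k|\beta_{(i)}| = R_{\textsc{SLOPE}(\tilde{\w})}(\bb)$ then completes the evaluation of the inner maximum.

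The main obstacle is the combinatorial nature of the uncertainty set: unlike the Lasso case of Proposition \ref{prop:lasso}, one must here additionally optimize over the support of the nonzero columns of $\D$, but this reduces transparently to a top-$k$ selection, which is precisely what produces the sorted SLOPE weights $\tilde{\w}$. The one remaining point requiring care is the passage from the norm to the squared norm: the inner maximization yields $\tfrac12\big(\|\y-\X\bb\|_2 + \mu R_{\textsc{SLOPE}(\tilde{\w})}(\bb)\big)^2$ rather than the additive objective $\tfrac12\|\y-\X\bb\|_2^2 + \mu R_{\textsc{SLOPE}(\tilde{\w})}(\bb)$ of \eqref{eqn:a1} directly. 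The claimed equivalence is then in the sense that minimizing over $\bb$ the monotone transform $\tfrac12(\cdot)^2$ of the nonnegative objective $\|\y-\X\bb\|_2 + \mu R_{\textsc{SLOPE}(\tilde{\w})}(\bb)$ yields the same minimizers and the same regularization path as the penalized SLOPE problem, exactly as in Proposition \ref{prop:lasso}.
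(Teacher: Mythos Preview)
Your proof is correct and follows exactly the standard robust-optimization argument the paper has in mind; indeed, the paper does not give its own proof, stating only that it ``follows basic techniques described in \cite{RObook}'' (see the remark after Proposition~\ref{prop:slopefullgenerality} in Appendix~\ref{app:slope}). Your triangle-inequality upper bound combined with the explicit worst-case $\D^\star$ is precisely that technique, and your handling of the degenerate case $\mb r=\mb0$ is clean. Your final paragraph on the norm-versus-squared-norm discrepancy is also on point and matches what the paper itself acknowledges in the footnote to equation~\eqref{eqn:eivconhom}: the exact identity holds for the linearly homogeneous objective $\|\y-(\X+\D)\bb\|_2$, and the ``equivalence'' with the squared-loss problem~\eqref{eqn:a1} is to be understood at the level of minimizers/regularization paths rather than as an equality of inner objective functions.
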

\noindent We return to this particular choice of uncertainty set later. (For completeness, we include a more general min-max representation of SLOPE in Appendix \ref{app:slope}.)

\subsection*{Computation and Algorithms}

Broadly speaking, there are numerous distinct approaches to algorithms for solving problems of the form \eqref{eqn:a10}--\eqref{eqn:a1} for various choices of $R$. We do not attempt to provide a comprehensive list of such approaches for general $R$, but we will discuss existing approaches for the trimmed Lasso and closely related problems. Approaches typically take one of two forms: heuristic or exact.

\subsubsection*{Heuristic techniques}

Heuristic approaches to solving problems \eqref{eqn:a10}--\eqref{eqn:a1} often use techniques from convex optimization \cite{BV2004}, such as proximal gradient descent or coordinate descent (see \cite{scad,sparsenet}). Typically these techniques are coupled with an analysis of  local or global behavior of the algorithm. For example, global behavior is often considered under additional restrictive assumptions on the underlying data; unfortunately, verifying such assumptions can be as difficult as solving the original nonconvex problem. (For example, consider the analogy with compressed sensing \cite{crt,donoho1,gitta} and the hardness of verifying whether underlying assumptions hold \cite{tillman,bandeira}).

There is also extensive work studying the local behavior (e.g. stationarity) of heuristic approaches to these problems. For the specific problems \eqref{eqn:a10} and \eqref{eqn:a10p}, the behavior of augmented Lagrangian methods \cite{admmsilva,admmteng} and complementarity constraint techniques \cite{mpccportfolio,burdakov,compconl0,asc} have been considered. For other local approaches, see \cite{folded}.

\subsubsection*{Exact techniques}

One of the primary drawbacks of heuristic techniques is that it can often be difficult to verify the degree of suboptimality of the estimators obtained. For this reason, there has been an increasing interest in studying the behavior of exact algorithms for providing certifiably optimal solutions to problems of the form \eqref{eqn:a10}--\eqref{eqn:a1} \cite{bkm,bmlqs,mipgo,discretedantzig}. Often these approaches make use of techniques from \emph{mixed integer optimization} (``MIO'')\cite{bonami} which are implemented in a variety of software, e.g. Gurobi \cite{gurobi}. The tradeoff with such approaches is that they typically carry a heavier computational burden than convex approaches. For a discussion of the application of MIO  in statistics, see \cite{bkm,bmlqs,mipgo,discretedantzig}.

\subsection*{What this paper is about}

In this paper, we focus on a detailed analysis of the trimmed Lasso, especially with regard to its properties and its relation to existing methods. In particular, we explore the trimmed Lasso from two perspectives: that of sparsity as well as that of robustness. We summarize our contributions as follows:

\begin{enumerate}

\item We study the robustness of the trimmed Lasso penalty. In particular, we provide several min-min robustness representations of it. We first show that the same choice of uncertainty set that leads to the SLOPE penalty in the min-max robust model \eqref{eqn:roprimitive} gives rise to the trimmed Lasso in the corresponding min-min robust problem \eqref{eqn:introeivcon} (with an additional regularization term). This gives an interpretation of the SLOPE and trimmed Lasso as a complementary pair of penalties, one under a pessimistic (min-max) model and the other under an optimistic (min-min) model.

Moreover, we show another min-min robustness interpretation of the trimmed Lasso by comparison with the ordinary Lasso. In doing so, we further highlight the nature of the trimmed Lasso and its relation to the LTS problem \eqref{eqn:introlts}.

\item We provide a detailed analysis on the connection between estimation approaches using the trimmed Lasso and separable penalty functions. In doing so, we show directly how penalties such as the trimmed Lasso can be viewed as a generalization of such existing approaches in certain cases. In particular, a trimmed-Lasso-like approach always subsumes its separable analogue, and the containment is strict in general. We also focus on the specific case of the clipped (or capped) Lasso \cite{cl}; for this we precisely characterize the relationship and provide a necessary and sufficient condition for the two approaches to be equivalent. In doing so, we highlight some of the limitations of an approach using a separable penalty function.

\item Finally, we describe a variety of algorithms, both existing and new, for trimmed Lasso estimation problems. We contrast two heuristic approaches for finding locally optimal solutions with exact techniques from mixed integer optimization that can be used to produce certificates of optimality for solutions found via the convex approaches. We also show that the convex envelope \cite{rockafeller} of the trimmed Lasso takes the form
$$\left(\|\bb\|_1 - k\right)_+,$$
where $(a)_+:=\max\{0,a\}$, a ``soft-thresholded'' variant of the ordinary Lasso. Throughout this section, we emphasize how techniques from convex optimization can be used to find high-quality solutions to the trimmed Lasso estimation problem. An implementation of the various algorithms presented herein can be found at
\begin{center}
\url{https://github.com/copenhaver/trimmedlasso}.
\end{center}

\end{enumerate}

\subsubsection*{Paper structure}

The structure of the paper is as follows. In Section \ref{sec:basic}, we study several properties of the trimmed Lasso, provide a few distinct interpretations, and highlight possible generalizations. In Section \ref{sec:rob}, we explore the trimmed Lasso in the context of robustness. Then, in Section \ref{sec:ncpm}, we study the relationship between the trimmed Lasso and other nonconvex penalties. In Section \ref{sec:algs}, we study the algorithmic implications of the trimmed Lasso. Finally, in Section \ref{sec:conc} we share our concluding thoughts and highlight future directions.


\section{Structural properties and interpretations}\label{sec:basic}

In this section, we provide further background on the trimmed Lasso: its motivations, interpretations, and generalizations. Our remarks in this section are broadly grouped as follows: in Section \ref{ssec:defn} we summarize the trimmed Lasso's basic properties as detailed in \cite{thiao,hempel,gotoh1,gotoh2}; we then turn our attention to an interpretation of the trimmed Lasso as a relaxation of complementarity constraints problems from optimization (Section \ref{ssec:compcon}) and as a variable decomposition method (Section \ref{ssec:vardecomp}); finally, in Sections \ref{ssec:gens} and \ref{ssec:dantzig} we highlight the key structural features of the trimmed Lasso by identifying possible generalizations of its definition and its application. These results augment the existing literature by giving a deeper understanding of the trimmed Lasso and provide a basis for further results in Sections \ref{sec:rob} and \ref{sec:ncpm}.

\subsection{Basic observations}\label{ssec:defn}

We begin with a summary of some of the basic properties of the trimmed Lasso as studied in \cite{thiao,hempel,gotoh1}.
First of all, let us also include another representation of $\tke$:
\begin{lemma}\label{lemma:miprep}
For any $\bb$,
$$\begin{array}{lll}
\tk{\bb} =  \smash{\ds\min_{\substack{I\sub\{1,\ldots,p\}:\\|I| = p-k}} \sum_{i\in I} |\beta_i| } = &\ds\min_{\zz} & \langle\zz,|\bb|\rangle\\
 & \st& \ds\sum_{i} z_i =p- k\\ & &\ds \zz\in\{0,1\}^p,
\end{array}$$
where $|\bb|$ denotes the vector whose entries are the absolute values of the entries of $\bb$.
\end{lemma}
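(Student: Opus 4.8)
The plan is to prove the displayed chain of equalities by first linking the defining expression $\tk{\bb}=\min_{\|\ph\|_0\le k}\|\ph-\bb\|_1$ to the combinatorial minimum over index sets of size $p-k$, and then observing that the stated mixed-integer program is merely a rewriting of that combinatorial minimum. I would prove the first equality by a two-sided inequality and dispatch the MIO representation by a bijection argument.

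For the upper bound $\tk{\bb}\le\min_{|I|=p-k}\sum_{i\in I}|\beta_i|$, I would exhibit a feasible $\ph$. Let $I^\ast$ be a minimizing set of size $p-k$, and define $\ph$ to agree with $\bb$ on $\{1,\dots,p\}\setminus I^\ast$ and to vanish on $I^\ast$. Then $\supp(\ph)\subseteq\{1,\dots,p\}\setminus I^\ast$, which has cardinality $p-(p-k)=k$, so $\|\ph\|_0\le k$ and $\ph$ is feasible for the defining problem; moreover $\|\ph-\bb\|_1=\sum_{i\in I^\ast}|\beta_i|$, which gives the bound. For the reverse inequality, take any feasible $\ph$ and set $S:=\supp(\ph)$ with $|S|\le k$. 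Splitting the norm as $\|\ph-\bb\|_1=\sum_{i\in S}|\phi_i-\beta_i|+\sum_{i\notin S}|\beta_i|\ge\sum_{i\notin S}|\beta_i|$, and noting that $|\{1,\dots,p\}\setminus S|\ge p-k$ together with nonnegativity of the $|\beta_i|$, the sum over the complement of $S$ dominates the sum over any size-$(p-k)$ subset $I\subseteq\{1,\dots,p\}\setminus S$; hence $\|\ph-\bb\|_1\ge\min_{|I|=p-k}\sum_{i\in I}|\beta_i|$. Minimizing the left side over feasible $\ph$ yields $\tk{\bb}\ge\min_{|I|=p-k}\sum_{i\in I}|\beta_i|$, and combining the two bounds gives the first equality.

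The final equality is then essentially a change of variables: a binary vector $\zz\in\{0,1\}^p$ with $\sum_i z_i=p-k$ is exactly the indicator of a set $I=\supp(\zz)$ of size $p-k$, and under this correspondence $\langle\zz,|\bb|\rangle=\sum_{i\in I}|\beta_i|$. Since the map $\zz\mapsto\supp(\zz)$ is a bijection between the feasible binary vectors and the index sets of cardinality $p-k$, the two minima coincide term by term, establishing $\min_{|I|=p-k}\sum_{i\in I}|\beta_i|=\min_{\zz}\{\langle\zz,|\bb|\rangle:\sum_i z_i=p-k,\ \zz\in\{0,1\}^p\}$.

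I do not anticipate a genuine obstacle here, as all the ingredients are elementary. The only point that needs care is the cardinality bookkeeping: verifying that zeroing out a set of size $p-k$ leaves a support of size at most $k$, and, in the lower bound, that the complement of a feasible support always contains a size-$(p-k)$ subset over which the nonnegative sum can only decrease. Getting these counting steps exactly right is what makes the two-sided inequality airtight.
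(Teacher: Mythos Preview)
Your proof is correct and complete. The paper states this lemma without proof, treating it as a basic observation (it appears in Section~\ref{ssec:defn} under ``Basic observations'' with no accompanying argument), so there is nothing to compare against; your two-sided inequality for the first equality and the bijection between binary indicator vectors and index sets for the second are exactly the elementary steps one would expect.
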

\noindent In other words, the trimmed Lasso can be represented using auxiliary binary variables.

Now let us consider the problem
\begin{equation*}
\min_{\bb} \frac{1}{2} \|\y-\X\bb\|_2^2+\lambda \tk{\bb},\tag{$\textsc{TL}_{\lambda,k}$}
\end{equation*}
where $\lambda>0$ and $k\in\{0,1,\ldots,p\}$ are parameters. Based on the definition of $\tke$, we have the following:
\begin{lemma}\label{lemma:vdrep}
The problem $\tla{\lambda,k}$ can be rewritten exactly in several equivalent forms:
\begin{align*}
\tla{\lambda,k} &=\min_{\substack{\bb,\ph:\\\|\ph\|_0\leq k}}\frac{1}{2}\|\y-\X\bb\|^2 + \lambda \|\bb-\ph\|_1\nonumber\\
&= \min_{\substack{\bb,\ph,\be:\\\bb=\ph+\be\\\|\ph\|_0\leq k}}\frac{1}{2}\|\y-\X\bb\|^2 + \lambda \|\be\|_1\nonumber\\
& =\min_{\substack{\ph,\be:\\\|\ph\|_0\leq k}}\frac{1}{2}\|\y-\X(\ph+\be)\|^2 + \lambda \|\be\|_1
\end{align*}
\end{lemma}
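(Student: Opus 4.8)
The plan is to unfold the definition of $\tke$ and then perform two elementary changes of variables, so that the three forms follow as a single chain of equalities. First I would recall that by definition $\tk{\bb} = \min_{\|\ph\|_0\leq k}\|\ph-\bb\|_1$. Substituting this into $\tla{\lambda,k}$ produces the nested minimization
$$\min_{\bb}\left[\frac{1}{2}\|\y-\X\bb\|^2 + \lambda\min_{\|\ph\|_0\leq k}\|\ph-\bb\|_1\right].$$
Since the least-squares term $\frac{1}{2}\|\y-\X\bb\|^2$ does not depend on $\ph$, the inner minimum over $\ph$ commutes with the addition of that term, and the two minimizations merge into a single joint minimization over the pair $(\bb,\ph)$ subject to $\|\ph\|_0\leq k$. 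This yields the first claimed form.

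For the second form I would introduce the auxiliary variable $\be := \bb-\ph$. This is an invertible linear change of variables: each feasible $(\bb,\ph)$ determines $\be$, and conversely each $(\ph,\be)$ determines $\bb=\ph+\be$, so the feasible set is in bijection and the optimal value is unchanged. Under this substitution $\|\bb-\ph\|_1$ becomes $\|\be\|_1$, and writing the defining relation $\bb=\ph+\be$ as an explicit constraint gives exactly the second form. For the third form I would simply eliminate $\bb$ via that constraint, substituting $\bb=\ph+\be$ directly into the loss to obtain $\frac{1}{2}\|\y-\X(\ph+\be)\|^2$; because $\bb$ enters the objective only through $\X\bb=\X(\ph+\be)$ and is pinned down uniquely by the constraint, this elimination preserves both the objective value and the feasible set.

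I do not expect any genuine obstacle here: every step is either the definition of $\tke$ or an invertible substitution, and the optimal value is manifestly preserved at each stage. The only point warranting a sentence of care is the interchange of the two minimizations in passing to the first form, which is justified precisely because the quadratic loss is constant in $\ph$; beyond that, the argument is purely mechanical. I would present the three equalities as a compact display chain so that each form is seen to follow from the previous one by the indicated change of variables.
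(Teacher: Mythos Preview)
Your proposal is correct and matches the paper's intent: the paper does not write out a proof at all, introducing the lemma with ``Based on the definition of $\tke$, we have the following,'' which is precisely the definitional unfolding plus change of variables you describe. Your exposition is more detailed than what the paper provides, but the underlying argument is the same.
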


\subsubsection*{Exact penalization}

Based on the definition of $T_k$, it follows that $T_k(\bb)=0$ if and only if $\|\bb\|_0\leq k$. Therefore, one can rewrite problem \eqref{eqn:a10} as
\begin{equation*}
\min_{T_k(\bb)=0} \frac{1}{2}\|\y-\X\bb\|_2^2.
\end{equation*}
In Lagrangian form, this would suggest an approximation for \eqref{eqn:a10} of the form
\begin{equation*}
\min_{\bb} \frac{1}{2} \|\y-\X\bb\|_2^2 + \lambda T_k(\bb),
\end{equation*}
where $\lambda>0$. As noted in the introduction, this approximation is in fact exact (in the sense of \cite{bert76,bertexact}), summarized in the following theorem; for completeness, we include in Appendix \ref{app:proof} a full proof that is distinct from that in \cite{gotoh1}.\footnote{The presence of the additional regularizer $\eta\|\bb\|_1$ can be interpreted in many ways. For our purposes, it serves to make the problems well-posed.}

\begin{theorem}[\emph{c.f.} \cite{gotoh1}]\label{thm:exactEquiv}
For any fixed $k\in\{0,1,2,\ldots,p\}$, $\eta>0$, and problem data $\y$ and $\X$, there exists some $\barl=\barl(\y,\X)>0$ so that for all $\lambda>\barl$, the problems 
$$\ds\min_{\bb} \frac{1}{2}\|\y-\X\bb\|_2^2 + \lambda \tk{\bb} + \eta \|\bb\|_1 $$
and
$$\begin{array}{ll}
\ds\min_{\bb}& \frac{1}{2}\|\y-\X\bb\|_2^2 + \eta\|\bb\|_1\\
\st& \|\bb\|_0\leq k
\end{array}$$
have the same optimal objective value and the same set of optimal solutions.
\end{theorem}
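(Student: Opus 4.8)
The plan is to run a standard exact-penalization argument, exploiting the fact that the penalty $\tke$ is itself the $\ell_1$-distance to the feasible set $\{\bb:\|\bb\|_0\le k\}$; this is exactly what makes the penalization exact. Write $g(\bb):=\frac12\|\y-\X\bb\|_2^2+\eta\|\bb\|_1$ for the objective shared by the two problems, and recall that $\tk{\bb}\ge 0$ with equality iff $\|\bb\|_0\le k$, and that $\tk{\bb}=\min_{\|\ph\|_0\le k}\|\ph-\bb\|_1$ with the minimizer obtained by retaining the $k$ largest-magnitude entries of $\bb$ and zeroing the rest.

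First I would establish a bound on penalized minimizers that is uniform in $\lambda$. Since both $\frac12\|\y-\X\bb\|_2^2$ and $\lambda\tk{\bb}$ are nonnegative, the penalized objective is at least $\eta\|\bb\|_1$; comparing against the value $\frac12\|\y\|_2^2$ attained at $\bb=0$ shows any minimizer $\bb^\lambda$ obeys $\|\bb^\lambda\|_1\le\|\y\|_2^2/(2\eta)=:B$, independent of $\lambda$. This is precisely where the regularizer $\eta\|\bb\|_1$ earns its keep. On the $\ell_1$-ball of radius $B$ the convex function $g$ is Lipschitz with respect to $\|\cdot\|_1$ (finite because the gradient of the quadratic term is bounded on the ball and the $\ell_1$ term is $\eta$-Lipschitz); let $L$ be such a constant and set $\barl:=L$.

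The heart of the argument is to compare a penalized minimizer $\bb^\lambda$ with its hard-thresholding $\ph^\lambda$, the $k$-sparse vector achieving $\tk{\bb^\lambda}=\|\ph^\lambda-\bb^\lambda\|_1$. Since $\ph^\lambda$ is $k$-sparse we have $\tk{\ph^\lambda}=0$ and $\|\ph^\lambda\|_1\le\|\bb^\lambda\|_1\le B$, so both points lie in the ball. Optimality of $\bb^\lambda$ against the candidate $\ph^\lambda$ gives $g(\bb^\lambda)+\lambda\tk{\bb^\lambda}\le g(\ph^\lambda)$, whence $\lambda\,\tk{\bb^\lambda}\le g(\ph^\lambda)-g(\bb^\lambda)\le L\|\ph^\lambda-\bb^\lambda\|_1=L\,\tk{\bb^\lambda}$, i.e. $(\lambda-L)\tk{\bb^\lambda}\le 0$. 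For $\lambda>\barl=L$ this forces $\tk{\bb^\lambda}=0$, so every penalized minimizer is $k$-sparse.

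Finally I would convert this into the claimed equivalence. For $\lambda>\barl$, any $k$-sparse feasible point of the constrained problem is a candidate for the penalized one with identical objective (as $\tke$ vanishes there), so the penalized optimal value is at most the constrained one; conversely a penalized minimizer, now known to be $k$-sparse, is feasible for the constrained problem with matching objective, giving the reverse inequality and hence equality of optimal values. This upgrades to equality of optimal solution sets, because (for $\lambda>\barl$) a point is optimal for either problem exactly when it is $k$-sparse and attains this common value. I expect the main obstacle to be the bookkeeping in the first two steps—verifying the uniform bound $B$ and extracting an explicit $\ell_1$-Lipschitz constant $L$ on the ball—rather than the comparison itself, which is short; the one genuinely delicate point is taking the Lipschitz estimate in the $\ell_1$ norm so that it matches the $\ell_1$ geometry of $\tke$.
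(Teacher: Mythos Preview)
Your argument is correct and is the classical exact-penalty argument: bound minimizers uniformly in $\lambda$, use that $\tke(\bb)=\operatorname{dist}_{\ell_1}(\bb,\{\|\cdot\|_0\le k\})$, and compare the minimizer with its hard-thresholded projection to force $(\lambda-L)\tke(\bb^\lambda)\le 0$. The paper proceeds quite differently: it freezes the auxiliary binary variable $\zz^*$ in the representation of $\tke$, observes that $\bb^*$ then solves a weighted Lasso problem, and invokes the proximal fixed-point identity $\bb^*=\operatorname{prox}_{\gamma R}(\bb^*-\gamma\nabla\tfrac12\|\y-\X\bb^*\|_2^2)$; the soft-thresholding characterization then says any surviving coordinate $j$ with $z_j^*=1$ must satisfy $|\langle\x_j,\X\bb^*-\y\rangle|\ge\eta+\lambda$, which contradicts the Cauchy--Schwarz bound $\|\x_j\|_2\|\y-\X\bb^*\|_2\le\|\x_j\|_2\|\y\|_2$. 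The practical difference is in the threshold: the paper obtains $\barl=\|\y\|_2\max_j\|\x_j\|_2$, depending only on $(\y,\X)$ as the theorem notation suggests, whereas your Lipschitz constant $L$ depends on the radius $B=\|\y\|_2^2/(2\eta)$ and hence on $\eta$. Your route is more elementary and generalizes immediately to any locally Lipschitz loss; the paper's route is sharper in $\eta$ and makes transparent that it is specifically the \emph{smallest} $p-k$ coordinates that are zeroed.
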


The direct implication is that trimmed Lasso leads to a continuum (over $\lambda$) of relaxations to the best subset selection problem starting from ordinary least squares estimation; further, best subset selection lies on this continuum for $\lambda$ sufficiently large.

\subsection{A complementary constraints viewpoint}\label{ssec:compcon}

We now turn our attention to a new perspective on the trimmed Lasso as considered via mathematical programming with complementarity constraints (``MPCCs'') \cite{scholtes,mpcclin,kanzow0,kanzow1,kanzow2,burdakov}, sometimes also referred to as mathematical programs with equilibrium constraints \cite{bilevel}. By studying this connection, we will show that a penalized form of a common relaxation scheme for MPCCs leads directly to the trimmed Lasso penalty. This gives a distinctly different optimization perspective on the trimmed Lasso penalty.

As detailed in \cite{mpccportfolio,burdakov,compconl0}, the problem \eqref{eqn:a10} can be exactly rewritten as
\begin{equation}\label{eqn:BSSr}
\begin{array}{ll}
\ds\min_{\bb,\zz}&\ds \frac{1}{2}\|\y-\X\bb\|_2^2 \\
\st& \sum_iz_i=p-k\\
& \zz\in[0,1]^p\\
& z_i\beta_i = 0.
\end{array}
\end{equation}
by the inclusion of auxiliary variables $\zz\in[0,1]^p$. In essence, the auxiliary variables replace the combinatorial constraint $\|\bb\|_0\leq k$ with \emph{complementarity} constraints of the form $z_i\beta_i=0$. Of course, the problem as represented in \eqref{eqn:BSSr} is still not directly amenable to convex optimization techniques.

As such, relaxation schemes can be applied to \eqref{eqn:BSSr}. One popular  method from the MPCC literature is the Scholtes-type relaxation \cite{kanzow1}; applied to \eqref{eqn:BSSr} as in \cite{burdakov,compconl0}, this takes the form
\begin{equation}\label{eqn:BSSr2}
\begin{array}{ll}
\ds\min_{\bb,\zz}& \ds\frac{1}{2}\|\y-\X\bb\|_2^2\\
\st& \sum_iz_i=p-k\\
& \zz\in[0,1]^p\\
& |z_i\beta_i|\leq t,
\end{array}
\end{equation}
where $t>0$ is some fixed numerical parameter which controls the strength of the relaxation, with $t=0$ exactly recovering \eqref{eqn:BSSr}. In the traditional MPCC context, it is standard to study local optimality and stationarity behavior of solutions to \eqref{eqn:BSSr2} as they relate to the original problem \eqref{eqn:a10}, \emph{c.f.} \cite{compconl0}.

Instead, let us consider a different approach. In particular, consider a penalized, or Lagrangian, form of the Scholtes relaxation \eqref{eqn:BSSr2}, namely,
\begin{equation}\label{eqn:BSSr3}
\begin{array}{ll}
\ds\min_{\bb,\zz}& \ds\frac{1}{2}\|\y-\X\bb\|_2^2   + \lambda\sum_i (|z_i\beta_i|-t)\\
\st& \sum_iz_i=p-k\\
& \zz\in[0,1]^p
\end{array}
\end{equation}
for some fixed $\lambda\geq0$.\footnote{To be precise, this is a \emph{weaker} relaxation than if we had separate dual variables $\lambda_i$ for each constraint $|z_i\beta_i|\leq t$, at least in theory.} Observe that we can minimize \eqref{eqn:BSSr3} with respect to $\zz$ to obtain the equivalent problem
$$\min_{\bb} \ds \frac{1}{2}\|\y-\X\bb\|_2^2  + \lambda T_k(\bb)  - p\lambda t,$$
which is precisely problem $\tl$ (up to the fixed additive constant). In other words, the trimmed Lasso can also be viewed as arising directly from a  penalized form of the MPCC relaxation, with auxiliary variables eliminated. This gives another view on Lemma \ref{lemma:miprep} which gave a representation of $\tke$ using auxiliary binary variables.

\subsection{Variable decomposition}\label{ssec:vardecomp}

To better understand the relation of the trimmed Lasso to existing methods, it is also useful to consider alternative representations. Here we focus on representations which connect it to variable decomposition methods. Our discussion here is an extended form of related discussions in \cite{hempel,gotoh1,gotoh2}.

To begin, we return to the final representation of the trimmed Lasso problem as shown in Lemma \ref{lemma:vdrep}, viz.,
\begin{equation}\label{eqn:dtl}
\tla{\lambda,k}=\min_{\substack{\ph,\be:\\\|\ph\|_0\leq k}}\frac{1}{2}\|\y-\X(\ph+\be)\|^2 + \lambda \|\be\|_1.
\end{equation}
We will refer to $\tla{\lambda,k}$ in the form \eqref{eqn:dtl}
as the \emph{split} or \emph{decomposed} representation of the problem. This is because in this form it is clear that we can think about estimators $\bb$ found via $\tla{\lambda,k}$ as being decomposed into two different estimators: a sparse component $\ph$ and another component $\be$ with small $\ell_1$ norm (as controlled via $\lambda$).

Several remarks are in order. First, the decomposition of $\bb$ into $\bb=\ph+\be$ is truly a decomposition in that if $\bb^*$ is an optimal solution to $\tla{\lambda,k}$ with $(\ph^*,\be^*)$ a corresponding optimal solution to the split representation of the problem \eqref{eqn:dtl}, then one must have that $\phi_i^*\epsilon_i^*=0$ for all $i\in\{1,\ldots,p\}$. In other words, the supports of $\ph$ and $\be$ do not overlap; therefore, $\bb^*=\ph^*+\be^*$ is a genuine decomposition.

Secondly, the variable decomposition \eqref{eqn:dtl} suggests that the problem of finding the $k$ largest entries of $\bb$ (i.e., finding $\ph$) can be solved as a best subset selection problem with a (possibly different) convex loss function (without $\be$). To see this, observe that the problem of finding $\ph$ in \eqref{eqn:dtl} can be written as the problem
$$\min_{\|\ph\|_0\leq k} \widetilde{L}(\ph),$$
where
$$\widetilde{L}(\ph) = \min_{\be} \frac{1}{2}\|\y-\X(\ph+\be)\|_2^2 + \lambda \|\be\|_1.$$
Using theory on duality for the Lasso problem \cite{lassodual}, one can argue that $\widetilde{L}$ is itself a convex loss function. Hence, the variable decomposition gives some insight into how the largest $k$ loadings for the trimmed Lasso relates to solving a related sparse estimation problem.

\subsubsection*{A view towards matrix estimation}

Finally, we contend that the variable decomposition of $\bb$ as a sparse component $\ph$ plus a ``noise'' component $\be$ with small norm is a natural and useful analogue of corresponding decompositions in the matrix estimation literature, such as in factor analysis \cite{mardia,anderson2003,barth} and robust Principal Component Analysis \cite{candesrpca}. For the purposes of this paper, we will focus on the analogy with factor analysis.

Factor analysis is a classical multivariate statistical method for decomposing the covariance structure of random variables; see \cite{fabcm} for an overview of modern approaches to factor analysis. Given a covariance matrix $\s\in\R^{p\times p}$, one is interested in describing it as the sum of two distinct components: a low-rank component $\T$ (corresponding to a low-dimensional covariance structure common across the variables) and a diagonal component $\Ph$ (corresponding to individual variances unique to each variable)---in symbols, $\s=\T+\Ph$.

In reality, this \emph{noiseless} decomposition is often too restrictive (see e.g.\cite{guttman1958extent,shapirorankred,ten1998some}), and therefore it is often better to focus on finding a decomposition $\s=\T+\Ph+\N$, where $\N$ is a noise component with small norm. As in \cite{fabcm}, a corresponding estimation procedure can take the form
\begin{equation}\label{eqn:fa}
\begin{array}{ll}
\ds\min_{\T,\Ph}& \|\s-(\T+\Ph)\|\\
\st& \rank(\T)\leq k\\
& \Ph = \diag(\Phi_{11},\ldots,\Phi_{pp}) \psd\\
& \T\psd,
\end{array}
\end{equation}
where the constraint $\mb A\psd$ denotes that $\mb A$ is symmetric, positive semidefinite, and $\|\cdot\|$ is some norm. One of the attractive features of the estimation procedure \eqref{eqn:fa} is that for common choices of $\|\cdot\|$, it is possible to completely eliminate the combinatorial rank constraint and the variable $\T$ to yield a smooth (nonconvex) optimization problem with compact, convex constraints (see \cite{fabcm} for details).

This exact same argument can be used to motivate the appearance of the trimmed Lasso penalty. Indeed, instead of considering estimators $\bb$ which are exactly $k$-sparse (i.e., $\|\bb\|_0\leq k$), we instead consider estimators which are approximately $k$-sparse, i.e., $\bb=\ph+\be$, where $\|\ph\|_0\leq k$ and $\be$ has small norm. Given fixed $\bb$, such a procedure is precisely
$$\min_{\|\ph\|_0\leq k} \|\bb-\ph\|.$$
Just as the rank constraint is eliminated from \eqref{eqn:fa}, the sparsity constraint can be eliminated from this to yield a continuous penalty which precisely captures the quality of the approximation $\bb\approx\ph$. The trimmed Lasso uses the choice $\|\cdot\|=\ell_1$, although other choices are possible; see Section \ref{ssec:gens}.

This analogy with factor analysis is also useful in highlighting additional benefits of the trimmed Lasso. One of particular note is that it enables the direct application of existing convex optimization techniques to find high-quality solutions to $\tla{\lambda,k}$.

\subsection{Generalizations}\label{ssec:gens}

We close this section by considering some generalizations of the trimmed Lasso. These are particularly useful for connecting the trimmed Lasso to other penalties, as we will see later in Section \ref{sec:ncpm}.

As noted earlier, the trimmed Lasso measures the distance (in $\ell_1$ norm) from the set of $k$-sparse vectors; therefore, it is natural to inquire what properties other measures of distance might carry. In light of this, we begin with a definition:
\begin{definition}
Let $k\in\{0,1,\ldots,p\}$ and $g:\R_+\to\R_+$ be any unbounded, continuous, and strictly increasing function with $g(0)=0$. Define the corresponding $k$th projected penalty function, denoted $\pi_k^g$, as
$$\pi_k^g(\bb) = \min_{\|\ph\|_0\leq k}  \sum_i g(|\phi_i-\beta_i|).$$
\end{definition}

\noindent It is not difficult to argue that $\pi_k^g$ has as an equivalent definition
$$\pi_k^g(\bb) = \sum_{i>k} g(|\beta_{(i)}|).$$
As an example, $\pi_k^g$ is the trimmed Lasso penalty when $g$ is the absolute value, viz. $g(x)=|x|$, and so it is a special case of the projected penalties. Alternatively, suppose $g(x) = x^2/2$. In this case, we get a trimmed version of the ridge regression penalty: $\sum_{i>k} |\beta_{(i)}|^2/2$.

This class of penalty functions has one notable feature, summarized in the following result:\footnote{An extended statement of the convergence claim is included in Appendix \ref{app:proof}.}

\begin{proposition}\label{prop:asymp}
If $g:\R_+\to\R_+$ is an unbounded, continuous, and strictly increasing function with $g(0)=0$, then for any $\bb$, 
$\pi_k^g(\bb) = 0$ if and only if $\|\bb\|_0\leq k$. 
Hence, the problem $\ds\min_{\bb} \frac{1}{2}\|\y-\X\bb\|_2^2 + \lambda\pi_k^g(\bb)$ converges in objective value to $\ds\min_{\|\bb\|_0\leq k} \frac{1}{2}\|\y-\X\bb\|_2^2$ as $\lambda\to\infty$.
\end{proposition}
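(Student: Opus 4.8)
\emph{Proof plan.} The plan is to handle the two assertions separately, using throughout the equivalent expression $\pi_k^g(\bb)=\sum_{i>k} g(|\beta_{(i)}|)$ recorded just above the statement. For the equivalence, I would first note that since $g$ is strictly increasing with $g(0)=0$, every $x>0$ satisfies $g(x)>g(0)=0$, so $g(x)=0$ exactly when $x=0$. As each summand $g(|\beta_{(i)}|)$ is nonnegative, the equivalent form shows $\pi_k^g(\bb)=0$ if and only if $|\beta_{(i)}|=0$ for all $i>k$; and the vanishing of the $p-k$ smallest-magnitude entries is precisely the statement $\|\bb\|_0\le k$.

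For the convergence, write $F_\lambda(\bb)=\tfrac12\|\y-\X\bb\|_2^2+\lambda\pi_k^g(\bb)$, $f_\lambda=\min_\bb F_\lambda(\bb)$, and $f^*=\min_{\|\bb\|_0\le k}\tfrac12\|\y-\X\bb\|_2^2$. Two elementary bounds come first. By the equivalence just proved, any $\bb$ with $\|\bb\|_0\le k$ has $\pi_k^g(\bb)=0$, so evaluating $F_\lambda$ at an optimal such $\bb$ gives $f_\lambda\le f^*$ for every $\lambda$; and since $\pi_k^g\ge 0$, the map $\lambda\mapsto f_\lambda$ is nondecreasing. Hence $\lim_{\lambda\to\infty}f_\lambda$ exists and is at most $f^*$, and it remains only to establish the reverse bound $f^*\le\liminf_{\lambda\to\infty}f_\lambda$.

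To get the reverse bound, for each $\lambda$ I would pick a near-minimizer $\bb_\lambda$ with $F_\lambda(\bb_\lambda)\le f_\lambda+1/\lambda$ (working with approximate minimizers avoids any question of whether the infimum is attained). Then $\lambda\,\pi_k^g(\bb_\lambda)\le f^*+1/\lambda$, so $\pi_k^g(\bb_\lambda)\to 0$. Because $g$ is continuous, strictly increasing, and $g(0)=0$, its inverse is continuous at $0$ with $g^{-1}(0)=0$; since $g\bigl(|(\beta_\lambda)_{(i)}|\bigr)\le\pi_k^g(\bb_\lambda)$ for each $i>k$, applying $g^{-1}$ forces $|(\beta_\lambda)_{(i)}|\to 0$ for every $i>k$. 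Letting $\widetilde{\bb}_\lambda$ be $\bb_\lambda$ with all but its $k$ largest-magnitude entries zeroed, we get $\|\widetilde{\bb}_\lambda\|_0\le k$ and $\|\bb_\lambda-\widetilde{\bb}_\lambda\|_2^2=\sum_{i>k}|(\beta_\lambda)_{(i)}|^2\to 0$, hence $\|\X(\bb_\lambda-\widetilde{\bb}_\lambda)\|_2\to 0$ as $\X$ is fixed. Since $\tfrac12\|\y-\X\bb_\lambda\|_2^2\le f_\lambda+1/\lambda$ also keeps $\|\y-\X\bb_\lambda\|_2$ bounded, the expansion
\begin{align*}
f^*&\le \tfrac12\|\y-\X\widetilde{\bb}_\lambda\|_2^2\\
&=\tfrac12\|\y-\X\bb_\lambda\|_2^2+\langle \y-\X\bb_\lambda,\ \X(\bb_\lambda-\widetilde{\bb}_\lambda)\rangle\\
&\quad+\tfrac12\|\X(\bb_\lambda-\widetilde{\bb}_\lambda)\|_2^2
\end{align*}
(the first inequality because $\widetilde{\bb}_\lambda$ is feasible for $f^*$) yields, on sending $\lambda\to\infty$ so that the last two terms vanish and the first is $\le f_\lambda+1/\lambda$, the desired $f^*\le\liminf_{\lambda\to\infty}f_\lambda$.

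The routine parts are the equivalence and the monotone upper bound $f_\lambda\le f^*$. I expect the crux to be the reverse inequality: one must convert the purely quantitative fact $\pi_k^g(\bb_\lambda)\to 0$ into the geometric statement that $\bb_\lambda$ is approaching the $k$-sparse set, and this is exactly where continuity of $g^{-1}$ at $0$ (a consequence of strict monotonicity together with $g(0)=0$) does the work. A secondary technical point is that $F_\lambda$ need not attain its minimum---for instance when $\ker\X$ contains a $k$-sparse direction---so carrying out the argument with $1/\lambda$-approximate minimizers rather than exact ones is the cleanest way to keep it rigorous.
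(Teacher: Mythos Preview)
Your proof is correct and follows essentially the same route as the paper: both arguments show $\pi_k^g(\bb_\lambda)\to 0$, invoke continuity of $g^{-1}$ at $0$ to force the $p-k$ smallest entries toward zero, truncate to a $k$-sparse vector $\widetilde{\bb}_\lambda$, and compare objective values. The only noteworthy difference is in handling attainment: the paper's appendix proves an extended version with an added $\eta\|\bb\|_1$ regularizer (ensuring coercivity so exact minimizers exist) and obtains a quantitative $\epsilon$--$\lambda$ bound, whereas you treat the statement as written by working with $1/\lambda$-approximate minimizers and a $\liminf$ argument---a clean way to sidestep the non-coercivity issue without changing the problem.
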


Therefore, any projected penalty $\pi_k^g$ results in the best subset selection problem \eqref{eqn:a10} asymptotically. While the choice of $g$ as the absolute value gives the trimmed Lasso penalty and leads to exact sparsity in the non-asymptotic regime (\emph{c.f.} Theorem \ref{thm:exactEquiv}) , Proposition \ref{prop:asymp} suggests that the projected penalty functions have potential utility in attaining approximately sparse estimators. We will return to the penalties $\pi_k^g$ again in Section \ref{sec:ncpm} to connect the trimmed Lasso to nonconvex penalty methods.

Before concluding this section, we briefly consider a projected penalty function that is different than the trimmed Lasso. As noted above, if $g(x) = x^2/2$, then the corresponding penalty function is the trimmed ridge penalty $\sum_{i>k} |\beta_{(i)}|^2/2$.
The estimation procedure is then
$$\min_{\bb} \frac{1}{2} \|\y-\X\bb\|_2^2 +\frac{\lambda}{2} \sum_{i>k}|\beta_{(i)}|^2,$$
or equivalently in decomposed form (\emph{c.f.} Section \ref{ssec:vardecomp}),\footnote{Interestingly, if one considers this trimmed ridge regression problem and uses convex envelope techniques \cite{rockafeller,BV2004} to relax the constraint $\|\ph\|_0\leq k$, the resulting problem takes the form $\min_{{ \ph ,\be }} \|\y-\X(\ph+\be)\|_2^2/2 + \lambda \|\be\|_2^2 + \tau\|\ph\|_1$, a sort of ``split'' variant of the usual elastic net \cite{zhel}, another popular convex method for sparse modeling.}
$$\min_{\substack{\ph,\be:\\\|\ph\|_0\leq k}}\frac{1}{2}\|\y-\X(\ph+\be)\|_2^2 +\frac{ \lambda}{2} \|\be\|_2^2.$$
It is not difficult to see that the variable $\be$ can be eliminated to yield
\begin{equation}\label{eqn:ridge}
\min_{\|\ph\|_0\leq k} \frac{1}{2}\left\|\A(\y-\X\ph)\right\|_2^2,
\end{equation}
where $\A = (\I-\X(\X'\X+\lambda\I)^{-1}\X')^{1/2}$. It follows that the largest $k$ loadings are found via a modified best subset selection problem under a different loss function---precisely a variant of the $\ell_2$ norm. This is in the same spirit of observations made in Section \ref{ssec:vardecomp}.

\begin{obs}\label{obs:ridge}
An obvious question is whether the norm in \eqref{eqn:ridge} is genuinely different. Observe that this loss function is the same as the usual $\ell_2^2$ loss if and only if $\A'\A$ is a non-negative multiple of the identity matrix. It is not difficult to see that this is true iff $\X'\X$ is a non-negative multiple of the identity. In other words, the loss function in \eqref{eqn:ridge} is the same as the usual ridge regression loss if and only if $\X$ is (a scalar multiple of) an orthogonal design matrix.
\end{obs}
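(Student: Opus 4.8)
The plan is to split the statement into its two ``if and only if'' assertions. First I would make precise what ``the same as the usual $\ell_2^2$ loss'' means: writing $\uu=\y-\X\ph$, the loss in \eqref{eqn:ridge} is the quadratic form $\uu\mapsto\tfrac12\uu'\A'\A\uu$, whereas the usual squared-error loss is $\uu\mapsto\tfrac12\uu'\uu$. These coincide up to a positive scalar exactly when $\uu'\A'\A\uu=c\,\uu'\uu$ for all $\uu$ and some $c$. Since $\A'\A$ is symmetric, polarization shows that a symmetric matrix inducing the quadratic form $c\|\cdot\|_2^2$ must itself equal $c\I$; and because $\A'\A\psd$ by construction, necessarily $c\ge0$. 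This disposes of the first equivalence and is routine.

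The substantive step is to compute $\A'\A$ in closed form and read off when it is a multiple of the identity. Because $\A=(\I-\X(\X'\X+\lambda\I)^{-1}\X')^{1/2}$ is a symmetric positive semidefinite square root, $\A'\A=\A^2=\I-\X(\X'\X+\lambda\I)^{-1}\X'$. Here I would invoke the push-through identity $(\X\X'+\lambda\I)^{-1}\X=\X(\X'\X+\lambda\I)^{-1}$, which follows at once from $(\X\X'+\lambda\I)\X=\X(\X'\X+\lambda\I)$. Multiplying on the right by $\X'$ gives $\X(\X'\X+\lambda\I)^{-1}\X'=(\X\X'+\lambda\I)^{-1}\X\X'$, whence
\[
\A'\A=\I-(\X\X'+\lambda\I)^{-1}\X\X'=\lambda(\X\X'+\lambda\I)^{-1}.
\]
This closed form is the crux and renders the remaining claim transparent.

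From $\A'\A=\lambda(\X\X'+\lambda\I)^{-1}$ it follows that $\A'\A=c\I$ for some $c$ if and only if $(\X\X'+\lambda\I)^{-1}$, and hence $\X\X'+\lambda\I$, is a multiple of the identity, i.e. if and only if the Gram matrix $\X\X'$ is a non-negative multiple of the identity (equivalently, all singular values of $\X$ coincide). This is precisely the statement that $\X$ is, up to scaling, an orthogonal design. The one point requiring care---the main obstacle, such as it is---is the bookkeeping in translating ``$\A'\A$ is a scalar multiple of $\I$'' into a condition on the design matrix: the computation naturally produces the row Gram matrix $\X\X'$, and one should note that $\X'\X$ and $\X\X'$ share the same nonzero eigenvalues, so both record the same ``equal singular values'' condition (and agree outright when $\X$ is square, as for a genuine orthogonal design). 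A fully self-contained alternative, should one prefer to avoid the push-through identity, is to substitute the singular value decomposition $\X=\UU\s\VV'$ directly, which diagonalizes $\A'\A$ as $\UU\diag(\lambda/(\sigma_i^2+\lambda))\UU'$ and yields the same conclusion.
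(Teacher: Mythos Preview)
Your proof is correct. The paper itself does not supply a proof of this observation---it is simply asserted as ``not difficult to see''---so there is no argument to compare against, and your derivation via the push-through identity (yielding $\A'\A=\lambda(\X\X'+\lambda\I)^{-1}$) is both clean and complete.

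One point where your write-up is in fact more careful than the paper: the computation naturally lands on the \emph{row} Gram matrix $\X\X'$, whereas the observation is phrased in terms of $\X'\X$. You correctly flag this and note that the two conditions coincide for square $\X$ (and more generally encode the ``all nonzero singular values equal'' condition). Strictly speaking, when $n\neq p$ the two statements ``$\X\X'=c\I_n$'' and ``$\X'\X=c\I_p$'' are not equivalent for $c>0$---one forces $n\le p$, the other $p\le n$---so the paper's phrasing is slightly loose, and your caveat is warranted rather than merely pedantic. The SVD alternative you sketch makes this asymmetry completely transparent and would be a good way to present the argument if you wanted to be fully explicit about the $n$ versus $p$ bookkeeping.
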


\subsection{Other applications of the trimmed Lasso: the (Discrete) Dantzig Selector}\label{ssec:dantzig}

The above discussion which pertains to the least squares loss data-fidelity term can be generalized to other loss functions as well.
For example, let us consider a data-fidelity term given by the 
maximal absolute inner product between the features and residuals, given by $\|\X'(\y-\X\bb)\|_\infty$. An $\ell_{1}$-penalized version of this data-fidelity term, 
popularly known as the Dantzig Selector~\cite{dantzig2,dasso}, is given by the following linear optimization problem:
\begin{equation}\label{eqn-DS}
\min_\bb  \|\X'(\y-\X\bb)\|_\infty  + \mu\|\bb\|_1.
\end{equation}
Estimators found via \eqref{eqn-DS} have statistical properties similar to the Lasso.
Further, problem \eqref{eqn-DS} may be interpreted as an $\ell_{1}$-approximation to the cardinality constrained version: 
\begin{equation}\label{eqn-DS-L0}
\min_{\|\bb\|_0\leq k}  \|\X'(\y-\X\bb)\|_\infty,
\end{equation}
that is, the Discrete Dantzig Selector, recently proposed and studied in~\cite{discretedantzig}. The statistical properties of~\eqref{eqn-DS-L0} are similar to the best-subset selection problem \eqref{eqn:a10}, but  may be more attractive from a computational viewpoint as it relies on mixed integer \emph{linear} optimization as opposed to mixed integer \emph{conic} optimization (see \cite{discretedantzig}).

The trimmed Lasso penalty can also be applied to the data-fidelity term $\|\X'(\y-\X\bb)\|_\infty$, leading to the following estimator:
$$\min_\bb  \|\X'(\y-\X\bb)\|_\infty+\lambda\tk{\bb}+\mu\|\bb\|_1.$$
Similar to the case of the least squares loss function, the above estimator yields $k$-sparse solutions for any $\mu>0$ and for $\lambda>0$ sufficiently large.\footnote{For the same reason, but instead with the usual Lasso objective, the proof of Theorem \ref{thm:exactEquiv} (see Appendix \ref{app:proof}) could be entirely omitted; yet, it is instructive to see in the proof there that the trimmed Lasso truly does set the \emph{smallest} entries to zero, and not simply all entries (when $\lambda$ is large) like the Lasso.} While this claim follows \emph{a fortiori} by appealing to properties of the Dantzig selector, it nevertheless highlights how any exact penalty method with a separable penalty function can be turned into a trimmed-style problem which offers direct control over the sparsity level.


\section{A perspective on robustness}\label{sec:rob}

We now turn our attention to a deeper exploration of the robustness properties of the trimmed Lasso. We begin by studying the min-min robust analogue of the min-max robust SLOPE penalty; in doing so, we show under which circumstances this analogue is the trimmed Lasso problem. Indeed, in such a regime, the trimmed Lasso can be viewed as an optimistic counterpart to the robust optimization view of the SLOPE penalty. Finally, we turn our attention to an additional min-min robust interpretation of the trimmed Lasso in direct correspondence with the least trimmed squares estimator shown in \eqref{eqn:introlts}, using the ordinary Lasso as our starting point.

\subsection{The trimmed Lasso as a min-min robust analogue of SLOPE}

We begin by reconsidering the uncertainty set that gave rise to the SLOPE penalty via the min-max view of robustness as considered in robust optimization:
$$\U_k^\lambda:=\left\{\D :
\begin{array}{c}
\D \text{ has at most $k$ nonzero}\\
\text{columns and } \|\D_i\|_2\leq \lambda\;\forall i
\end{array}\right\}.
$$
As per Proposition \ref{prop:slope}, the min-max problem \eqref{eqn:roprimitive}, viz., 
\begin{equation*}
\min_\bb\max_{\D\in\U_k^\lambda} \frac{1}{2}\|\y-(\X+\D)\bb\|_2^2
\end{equation*}
is equivalent to the SLOPE-penalized problem
\begin{equation}\label{eqn:slopepen}
\min_\bb \frac{1}{2}\|\y-\X\bb\|_2^2 + \lambda R_{\textsc{SLOPE}(\tilde{\w})} (\bb).
\end{equation}
for the specific choice of $\tilde{\w}$ with $\tilde w_1=\cdots=\tilde w_k=1$ and $\tilde w_{k+1}=\cdots=\tilde w_{p}=0$.

Let us now consider the form of the min-min robust analogue of the the problem \eqref{eqn:roprimitive} for this specific choice of uncertainty set. As per the discussion in Section \ref{sec:intro}, the min-min analogue takes the form of problem \eqref{eqn:introeivcon}, i.e., a variant of total least squares:
\begin{equation*}
\min_\bb \min_{\D\in\U_k^\lambda} \frac{1}{2}\|\y-(\X+\D)\bb\|_2^2,
\end{equation*}
or equivalently as the linearly homogenous problem\footnote{In what follows, the linear homogeneity is useful primarily for simplicity of analysis, \emph{c.f.} \cite[ch. 12]{RObook}. Indeed, the conversion to linear homogeneous functions is often hidden in equivalence results like Proposition \ref{prop:slope}.}
\begin{equation}\label{eqn:eivconhom}
\min_\bb \min_{\D\in\U_k^\lambda} \|\y-(\X+\D)\bb\|_2.
\end{equation}

\noindent It is useful to consider problem \eqref{eqn:eivconhom} with an explicit penalization (or regularization) on $\bb$:
\begin{equation}\label{eqn:eivconhompen}
\min_\bb \min_{\D\in\U_k^\lambda} \|\y-(\X+\D)\bb\|_2 + r(\bb),
\end{equation}
where $r(\cdot)$ is, say, a norm (the use of lowercase is to distinguish from the function $R$ in Section \ref{sec:intro}).

As described in the following theorem, this min-min robustness problem \eqref{eqn:eivconhompen} is equivalent to the trimmed Lasso problem for specific choices of $r$. The proof is contained in Appendix \ref{app:proof}.

\begin{theorem}\label{thm:robeivInterp}

For any $k$, $\lambda>0$, and norm $r$, the problem \eqref{eqn:eivconhompen} can be rewritten exactly as
\begin{equation*}
\begin{array}{ll}
\ds\min_{\bb } &\ds\|\y-\X\bb\|_2 + r(\bb) - \lambda \sum_{i=1}^k|\beta_{(i)}| \\
\st & \ds\lambda\sum_{i=1}^k|\beta_{(i)}| \leq \|\y-\X\bb\|_2.
\end{array}
\end{equation*}
\end{theorem}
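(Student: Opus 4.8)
The plan is to first dispatch the inner minimization over $\D\in\U_k^\lambda$ for a fixed $\bb$, and then peel off the resulting positive part to expose the constrained formulation. For the inner problem I would write the perturbed residual as $\y-(\X+\D)\bb = (\y-\X\bb) - \D\bb$ and observe that $\D\bb = \sum_{i=1}^p \beta_i \D_i$, where $\D_i$ denotes the $i$th column of $\D$. Since $\D$ has at most $k$ nonzero columns each obeying $\|\D_i\|_2\le\lambda$, each term $\beta_i\D_i$ (for $i$ in the chosen support $S$ with $|S|\le k$) ranges over the Euclidean ball of radius $\lambda|\beta_i|$ centered at the origin, and these columns may be chosen independently. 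The key geometric observation is that the Minkowski sum of origin-centered balls is again an origin-centered ball, so for a fixed support $S$ the attainable perturbations $\D\bb$ fill exactly the ball of radius $\lambda\sum_{i\in S}|\beta_i|$.

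Minimizing the Euclidean distance from the fixed vector $\y-\X\bb$ to this ball gives $\bigl(\|\y-\X\bb\|_2 - \lambda\sum_{i\in S}|\beta_i|\bigr)_+$, which is decreasing in the ball's radius; hence the optimal support $S$ is the one maximizing $\sum_{i\in S}|\beta_i|$ over $|S|\le k$, namely the indices of the $k$ largest-magnitude entries of $\bb$. This yields, for every $\bb$,
\[
\min_{\D\in\U_k^\lambda}\|\y-(\X+\D)\bb\|_2 = \Bigl(\|\y-\X\bb\|_2 - \lambda\textstyle\sum_{i=1}^k|\beta_{(i)}|\Bigr)_+,
\]
so that problem \eqref{eqn:eivconhompen} becomes $\min_\bb \bigl(\|\y-\X\bb\|_2 - \lambda\sum_{i=1}^k|\beta_{(i)}|\bigr)_+ + r(\bb)$.

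It remains to show that this unconstrained problem has the same optimal value as the claimed constrained one. I would split $\R^p$ into the feasible region $A=\{\bb: \lambda\sum_{i=1}^k|\beta_{(i)}|\le\|\y-\X\bb\|_2\}$ and its complement $B$. On $A$ the clipped objective equals $\|\y-\X\bb\|_2 - \lambda\sum_{i=1}^k|\beta_{(i)}| + r(\bb)$, which is exactly the objective of the constrained problem, whereas on $B$ it equals $r(\bb)$. The remaining---and main---step is to argue that $B$ can be ignored. I would do this by a homogeneity/continuity argument: given any $\bb_0\in B$, consider the scaled path $t\mapsto t\bb_0$ for $t\in[0,1]$ and the function $h(t)=\|\y-t\X\bb_0\|_2 - t\lambda\sum_{i=1}^k|\beta_{0,(i)}|$; since $h$ is continuous with $h(0)=\|\y\|_2\ge 0$ and $h(1)<0$, the intermediate value theorem produces $t^*\in[0,1]$ with $h(t^*)=0$, i.e.\ a boundary point $\bb^*=t^*\bb_0\in A$. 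Using $r(\bb^*)=t^* r(\bb_0)\le r(\bb_0)$ (positive homogeneity and nonnegativity of the norm $r$), the constrained objective at $\bb^*$ is at most $r(\bb_0)$, the value of the clipped objective at $\bb_0$; hence no infeasible point beats the constrained optimum and the two optimal values coincide. The one place to be careful is precisely this last reduction---verifying that scaling toward the origin always reaches the feasibility boundary without increasing the objective---since the inner minimum is otherwise a routine distance-to-a-ball computation.
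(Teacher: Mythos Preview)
Your proposal is correct and follows essentially the same route as the paper: first reduce the inner minimization to $\bigl(\|\y-\X\bb\|_2-\lambda\sum_{i=1}^k|\beta_{(i)}|\bigr)_+$, then use a scaling-toward-the-origin argument exploiting the positive homogeneity of $r$ to show the positive part is never active at an optimum. The only cosmetic differences are that the paper obtains the inner minimum via the reverse triangle inequality plus an explicit construction of $\D$ (rather than your Minkowski-sum-of-balls description) and dispatches the feasibility step by a direct $(1-\epsilon)$-contradiction rather than the intermediate value theorem; note that your inequality $r(\bb^*)=t^*r(\bb_0)\le r(\bb_0)$ is in fact strict (since $\bb_0\in B$ forces $\bb_0\neq\mb0$ and $t^*<1$), which matches the paper's conclusion that every minimizer is feasible.
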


We have the following as an immediate corollary:

\begin{corollary}\label{cor:slope}
For the choice of $r(\bb) = \tau \|\bb\|_1$, where $\tau > \lambda$, the problem \eqref{eqn:eivconhompen} is precisely
\begin{equation}\label{eqn:cor1}
\begin{array}{ll}
\ds\min_{\bb } &\ds\|\y-\X\bb\|_2 + (\tau-\lambda)\|\bb\|_1+ \lambda \tk{\bb} \\
\st & \ds\lambda\sum_{i=1}^k|\beta_{(i)}| \leq \|\y-\X\bb\|_2.
\end{array}
\end{equation}
In particular, when $\lambda>0$ is small, it is approximately equal (in a precise sense)\footnote{For a precise characterization and extended discussion, see Appendix \ref{app:proof} and Theorem \ref{thm:corprecise}. The informal statement here is sufficient for the purposes of our present discussion.} to the trimmed Lasso problem
$$\min_{\bb } \ds\|\y-\X\bb\|_2 + (\tau-\lambda)\|\bb\|_1+ \lambda \tk{\bb}.$$
\end{corollary}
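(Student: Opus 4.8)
The plan is to obtain the corollary as an essentially immediate specialization of Theorem~\ref{thm:robeivInterp}, supplemented by one elementary identity, and then to treat the ``in particular'' assertion as a separate limiting statement. First I would apply Theorem~\ref{thm:robeivInterp} with the norm $r(\bb)=\tau\|\bb\|_1$. Since that theorem holds for any norm $r$, substituting directly shows that \eqref{eqn:eivconhompen} is exactly
\[
\min_{\bb}\ \|\y-\X\bb\|_2 + \tau\|\bb\|_1 - \lambda\sum_{i=1}^k|\beta_{(i)}|
\]
subject to $\lambda\sum_{i=1}^k|\beta_{(i)}|\leq\|\y-\X\bb\|_2$. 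No further robustness analysis is needed at this stage, as all of the min-min work has already been discharged in Theorem~\ref{thm:robeivInterp}.

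The only remaining ingredient for the exact reformulation is the partition of the $\ell_1$ norm across the sorted magnitudes. Writing $\|\bb\|_1=\sum_{i=1}^p|\beta_{(i)}|=\sum_{i=1}^k|\beta_{(i)}|+\sum_{i>k}|\beta_{(i)}|=\sum_{i=1}^k|\beta_{(i)}|+\tk{\bb}$, I would substitute $\sum_{i=1}^k|\beta_{(i)}|=\|\bb\|_1-\tk{\bb}$ into the objective above. This collapses $\tau\|\bb\|_1-\lambda\sum_{i=1}^k|\beta_{(i)}|$ into $(\tau-\lambda)\|\bb\|_1+\lambda\tk{\bb}$, which is precisely the objective of \eqref{eqn:cor1}, while the constraint is carried over verbatim. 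I would also flag the role of the hypothesis $\tau>\lambda$: it keeps the coefficient $(\tau-\lambda)$ strictly positive, so that the reformulated objective remains a genuine coercive penalized-regression objective whose minimum is attained.

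For the informal approximate-equivalence claim, the plan is to compare \eqref{eqn:cor1} against the unconstrained trimmed Lasso problem obtained by deleting the constraint. Dropping the constraint only enlarges the feasible region, so the unconstrained problem is a relaxation whose optimal value lower-bounds that of \eqref{eqn:cor1}; the two coincide exactly whenever some unconstrained minimizer is feasible, i.e.\ satisfies $\lambda\sum_{i=1}^k|\beta_{(i)}|\leq\|\y-\X\bb\|_2$. The key observation I would use is that the left-hand side of this constraint carries the factor $\lambda$, so as $\lambda\downarrow 0$ it vanishes uniformly on bounded sets; hence any unconstrained minimizer with strictly positive residual $\|\y-\X\bb\|_2>0$ becomes feasible once $\lambda$ is small enough, making the constraint slack.

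The main obstacle is the degenerate regime in which an unconstrained minimizer achieves a perfect or nearly perfect fit, so that $\|\y-\X\bb\|_2$ is at or near $0$; there the constraint instead forces $\sum_{i=1}^k|\beta_{(i)}|$ toward $0$ and need not be slack, and the naive ``small-$\lambda$'' argument breaks down. Making the equivalence precise in that case requires a quantitative limiting statement---either bounding the gap between the two optimal values or establishing convergence of minimizers as $\lambda\downarrow 0$---which is exactly the content deferred to Theorem~\ref{thm:corprecise} in Appendix~\ref{app:proof}. I would therefore present the exact reformulation \eqref{eqn:cor1} in full here and state the approximate trimmed-Lasso equivalence only informally, citing the sharper appendix analysis rather than carrying out the delicate limit inline.
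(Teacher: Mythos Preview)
Your proposal is correct and mirrors the paper's approach: the corollary is obtained as an immediate specialization of Theorem~\ref{thm:robeivInterp} together with the identity $\|\bb\|_1=\sum_{i=1}^k|\beta_{(i)}|+\tk{\bb}$, and the approximate-equivalence claim is deferred to the precise appendix result (Theorem~\ref{thm:corprecise}). Your identification of the degenerate perfect-fit regime as the obstacle is also on point---that is exactly the case distinction driving the appendix proof.
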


In words, the min-min problem \eqref{eqn:eivconhompen} (with an $\ell_1$ regularization on $\bb$) can be written as a variant of a trimmed Lasso problem, subject to an additional constraint. It is instructive to consider both the objective and the constraint of problem \eqref{eqn:cor1}. To begin, the objective has a combined penalty on $\bb$ of $(\tau-\lambda)\|\bb\|_1 + \lambda \tk{\bb}$. This can be thought of as the more general form of the penalty $\tke$. Namely, one can consider the penalty $T_{\x}$ (with $0\leq x_1\leq x_2\leq\cdots\leq x_p$ fixed) defined as
$$T_{\x} (\bb) := \sum_{i=1}^p x_i |\beta_{(i)}|.$$
In this notation, the objective of \eqref{eqn:cor1} can be rewritten as $\|\y-\X\bb\|_2 + T_{\x}(\bb)$, with
$$\x=(\underbrace{\tau-\lambda,\ldots,\tau-\lambda}_{k \text{ times}},\underbrace{\tau,\ldots,\tau}_{p-k \text{ times}}).$$
In terms of the constraint of problem \eqref{eqn:cor1}, note that it takes the form of a model-fitting constraint: namely, $\lambda$ controls a trade-off between model fit $\|\y-\X\bb\|_2$ and model complexity measured via the SLOPE norm $\sum_{i=1}^k |\beta_{(i)}|$.

Having described the structure of problem \eqref{eqn:cor1}, a few remarks are in order. First of all, the trimmed Lasso problem (with an additional $\ell_1$ penalty on $\bb$) can be interpreted as (a close approximation to) a min-min robust problem, at least in the regime when $\lambda$ is small; this provides an interesting contrast to the sparse-modeling regime when $\lambda$ is large (\emph{c.f.} Theorem \ref{thm:exactEquiv}). Moreover, the trimmed Lasso is a min-min robust problem in a way that is the \emph{optimistic} analogue of its min-max counterpart, namely, the SLOPE-penalized problem \eqref{eqn:slopepen}. Finally, Theorem \ref{thm:robeivInterp} gives a natural representation of the trimmed Lasso problem in a way that directly suggests why methods from difference-of-convex optimization \cite{dcSummary} are relevant (see Section \ref{sec:algs}).

\subsubsection*{The general SLOPE penalty}

Let us briefly remark upon SLOPE in its most general form (with general $\mb w$); again we will see that this leads to a more general trimmed Lasso as its (approximate) min-min counterpart. In its most general form, the SLOPE-penalized problem \eqref{eqn:slopepen} can be written as the min-max robust problem \eqref{eqn:roprimitive} with choice of uncertainty set
$$\U_\w^\lambda =\left\{\D : \|\D\ph\|_2 \leq  \lambda \sum_i w_i|\phi_{(i)}|\;\forall\ph\right\}
$$
(see Appendix \ref{app:slope}). In this case, the penalized, homogenized min-min robust counterpart, analogous to problem \eqref{eqn:eivconhompen}, can be written as follows:

\begin{proposition}\label{prop:robeivslope}
For any $k$, $\lambda>0$, and norm $r$, the problem
\begin{equation}\label{eqn:auxslopepf1}
\min_\bb \min_{\D\in\U_\w^\lambda} \|\y-(\X+\D)\bb\|_2 + r(\bb)
\end{equation}
can be rewritten exactly as
\begin{equation*}
\begin{array}{ll}
\ds\min_{\bb } &\ds\|\y-\X\bb\|_2 + r(\bb) - \lambda R_{\textsc{SLOPE}(\w)}(\bb) \\
\st & \ds\lambda R_{\textsc{SLOPE}(\w)}(\bb) \leq \|\y-\X\bb\|_2.
\end{array}
\end{equation*}
For the choice of $r(\bb) = \tau \|\bb\|_1$, where $\tau > \lambda w_1$, the problem \eqref{eqn:auxslopepf1} is
\begin{equation*}
\begin{array}{ll}
\ds\min_{\bb } &\ds\|\y-\X\bb\|_2 + T_{\tau\mb 1 - \lambda \w}(\bb) \\
\st & \ds\lambda R_{\textsc{SLOPE}(\w)}(\bb) \leq \|\y-\X\bb\|_2.
\end{array}
\end{equation*}
In particular, when $\lambda>0$ is sufficiently small, problem \eqref{eqn:auxslopepf1} is approximately equal to the generalized trimmed Lasso problem
$$\min_{\bb } \ds\|\y-\X\bb\|_2 + T_{\tau\mb 1 - \lambda \w}(\bb).$$
\end{proposition}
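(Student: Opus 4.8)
The plan is to mirror the proof of Theorem~\ref{thm:robeivInterp}, since Proposition~\ref{prop:robeivslope} differs from it only in the choice of uncertainty set ($\U_\w^\lambda$ in place of $\U_k^\lambda$) and in the associated penalty ($R_{\textsc{SLOPE}(\w)}$ in place of $\sum_{i=1}^k|\beta_{(i)}|$). The crux in both cases is that, for fixed $\bb$, the inner minimization over $\D$ depends on $\D$ only through the single vector $\vv := \D\bb \in \R^n$, and that the set of attainable such vectors is exactly a Euclidean ball. Accordingly, I would first establish the set identity
$$\{\D\bb : \D \in \U_\w^\lambda\} = \{\vv \in \R^n : \|\vv\|_2 \leq \lambda R_{\textsc{SLOPE}(\w)}(\bb)\},$$
after which the remainder of the argument is an essentially verbatim reprise of the proof of Theorem~\ref{thm:robeivInterp} and may largely cite it.

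For the set identity, the inclusion ``$\subseteq$'' is immediate: taking $\ph = \bb$ in the defining constraint of $\U_\w^\lambda$ gives $\|\D\bb\|_2 \leq \lambda R_{\textsc{SLOPE}(\w)}(\bb)$ for every $\D \in \U_\w^\lambda$. The reverse inclusion is the genuinely new ingredient relative to the simple-SLOPE case, and here I would invoke the dual norm of the (sorted-$\ell_1$) SLOPE norm. Writing $S := R_{\textsc{SLOPE}(\w)}$ and $S^*$ for its dual norm, and given a target $\vv$ with $\|\vv\|_2 \leq \lambda S(\bb)$, I would exhibit a \emph{rank-one} matrix $\D = \vv\uu'$ lying in $\U_\w^\lambda$ with $\D\bb = \vv$: pick $\uu_0$ attaining $S(\bb) = \uu_0'\bb$ with $S^*(\uu_0) = 1$ and set $\uu = \uu_0/S(\bb)$ (the case $\bb = 0$ being trivial), so that $\uu'\bb = 1$ and hence $\D\bb = \vv(\uu'\bb) = \vv$; membership in $\U_\w^\lambda$ then follows from $\|\D\ph\|_2 = |\uu'\ph|\,\|\vv\|_2 \leq S^*(\uu)\,S(\ph)\,\|\vv\|_2 \leq \lambda S(\ph)$ for all $\ph$.

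Given the ball identity, the inner minimization becomes the projection $\min_{\|\vv\|_2 \leq \rho}\|(\y - \X\bb) - \vv\|_2 = (\|\y-\X\bb\|_2 - \rho)_+$ with $\rho = \lambda S(\bb)$, so that problem \eqref{eqn:auxslopepf1} equals $\min_\bb (\|\y-\X\bb\|_2 - \lambda S(\bb))_+ + r(\bb)$. To pass to the constrained formulation I would show equality of optimal values in two directions: the two objectives agree on the feasible region $\{\lambda S(\bb) \leq \|\y - \X\bb\|_2\}$, giving one inequality; for the reverse, given any infeasible $\bb$ (where the $(\cdot)_+$ term vanishes, leaving objective $r(\bb)$), I would scale along the ray $s\bb$, $s \in [0,1]$, using positive homogeneity of $S$ and $r$ together with continuity of $s \mapsto \|\y - s\X\bb\|_2$ and the intermediate value theorem to produce a boundary point $s^\star\bb$ satisfying $\lambda S(s^\star\bb) = \|\y - \X(s^\star\bb)\|_2$, whose constrained objective is $r(s^\star\bb) = s^\star r(\bb) \leq r(\bb)$. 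Specializing to $r(\bb) = \tau\|\bb\|_1$ and using $\tau\|\bb\|_1 - \lambda S(\bb) = \sum_i (\tau - \lambda w_i)|\beta_{(i)}|$ — whose coefficients are nonnegative and nondecreasing in $i$ precisely because $w_1 \geq \cdots \geq w_p \geq 0$ and $\tau > \lambda w_1$ — identifies the objective as $\|\y - \X\bb\|_2 + T_{\tau\mb 1 - \lambda\w}(\bb)$, and the small-$\lambda$ claim then follows exactly as in Corollary~\ref{cor:slope} (with the precise version deferred to Theorem~\ref{thm:corprecise}).

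I expect the rank-one construction in the reverse inclusion to be the main obstacle, as it is the one step that genuinely uses structure special to the general SLOPE norm; everything downstream is a direct transcription of the proof of Theorem~\ref{thm:robeivInterp}. A secondary point requiring care is the scaling argument for the constrained reformulation, where one must handle the degenerate cases $\y = 0$ and $\bb = 0$ (in which the boundary point is reached at $s^\star = 0$) and verify that attainment of the minimum transfers between the two formulations.
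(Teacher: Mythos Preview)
Your proposal is correct and follows the same overall strategy that the paper intends: the paper's proof is in fact simply ``entirely analogous to that of Theorems~\ref{thm:robeivInterp} and~\ref{thm:corprecise} and is therefore omitted,'' so your plan of mirroring those arguments is precisely what is expected.

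The one place where you do something genuinely different is the computation of the inner minimum. In the proof of Theorem~\ref{thm:robeivInterp}, the paper constructs an optimal $\D\in\U_k^\lambda$ column by column (setting the $i$th column to a scalar multiple of $\mb r/\|\mb r\|_2$ on a set of $k$ indices), exploiting the specific column-wise structure of $\U_k^\lambda$. Your rank-one construction $\D=\vv\uu'$ via the dual of the SLOPE norm is more general and arguably cleaner: it works uniformly for any norm-defined uncertainty set of the form $\{\D:\|\D\ph\|_2\le S(\ph)\;\forall\ph\}$, and it sidesteps the need to guess an explicit column pattern adapted to general weights $\w$. The paper's column construction does not adapt in an obvious way to arbitrary $\w$, so your approach is the more natural route here.

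Your passage to the constrained formulation via scaling to a boundary point is a minor variant of the paper's argument in Theorem~\ref{thm:robeivInterp}, which instead shows by contradiction that any \emph{optimizer} of the $(\cdot)_+$ problem must already be feasible (using $\bb_\epsilon=(1-\epsilon)\bb^*$). The paper's version gives the slightly stronger conclusion that the two problems share the same set of minimizers, not merely the same optimal value; you may wish to phrase your scaling argument as a contradiction at an optimizer to recover this.
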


Put plainly, the general form of the SLOPE penalty leads to a generalized form of the trimmed Lasso, precisely as was true for the simplified version considered in Theorem \ref{thm:robeivInterp}.

\subsection{Another min-min interpretation}

We close our discussion of robustness by considering another min-min representation of the trimmed Lasso. We use the ordinary Lasso problem as our starting point and show how a modification in the same spirit as the min-min robust least trimmed squares estimator in \eqref{eqn:introlts} leads directly to the trimmed Lasso.

To proceed, we begin with the usual Lasso problem
\begin{equation}\label{eqn:lasso}
\min_\bb \frac{1}{2}\|\y-\X\bb\|_2^2 + \lambda\|\bb\|_1.
\end{equation}
As per Proposition \ref{prop:lasso}, this problem is equivalent to the min-max robust problem \eqref{eqn:roprimitive} with uncertainty set
$\U = \cL^\lambda = \{\D:\|\D_i\|_2 \leq \lambda\;\forall i\}$:
\begin{equation}\label{eqn:lassorob}
\min_\bb\max_{\D\in\cL^\lambda} \frac{1}{2}\|\y-(\X+\D)\bb\|_2^2.
\end{equation}
In this view, the usual Lasso \eqref{eqn:lasso} can be thought of as a least squares method which takes into account certain feature-wise adversarial perturbations of the matrix $\X$. The net result is that the adversarial approach penalizes all loadings equally (with coefficient $\lambda$).

Using this setup and Theorem \ref{thm:exactEquiv},
we can re-express the trimmed Lasso problem $\tla{\lambda,k}$ in the equivalent min-min form
\begin{equation}\label{eqn:a2}
\min_\bb\min_{\substack{\\I\sub\{1,\ldots,p\}:\\|I|=p-k}} \max_{\D\in\cL^\lambda_I} \frac{1}{2}\|\y-(\X+\D)\bb\|_2^2,
\end{equation}
where $\cL^\lambda_I\sub \cL^\lambda$ requires that the columns of $\D\in\cL^\lambda_I$ are supported on $I$:
$$\cL^\lambda_I = \{\D: \|\D_i\|_2\leq \lambda \;\forall i, \; \D_i = \mb 0\;\forall i\notin I\}.$$
While the adversarial min-max approach in problem \eqref{eqn:lassorob} would attempt to ``corrupt'' all $p$ columns of $\X$, in estimating $\bb$ we have the power to optimally discard $k$ out of the $p$ corruptions to the columns (corresponding to $I^c$). In this sense, the trimmed Lasso in the min-min robust form \eqref{eqn:a2} acts in a similar spirit to the min-min, robust-statistical least trimmed squares estimator shown in problem \eqref{eqn:introltsalt}.


\section{Connection to nonconvex penalty methods}\label{sec:ncpm}

In this section, we explore the connection between the trimmed Lasso and existing, popular nonconvex (component-wise separable) penalty functions used for sparse modeling.  We begin in Section \ref{ssec:ncpoverview} with a brief overview of existing approaches. In Section \ref{ssec:ncpreform} we then highlight how these relate to the trimmed Lasso, making the connection more concrete with examples in Section \ref{ssec:ncpeg}. Then in Section \ref{ssec:ncpgenerality} we exactly characterize the connection between the trimmed Lasso and the clipped Lasso \cite{cl}. In doing so, we show that the trimmed Lasso subsumes the clipped Lasso; further, we provide a necessary and sufficient condition for when the containment is strict. Finally, in Section \ref{ssec:ncpunbounded} we comment on the special case of unbounded penalty functions.

\subsection{Setup and Overview}\label{ssec:ncpoverview}

Our focus throughout will be the penalized $M$-estimation problem of the form
\begin{equation}\label{eqn:ncpm}
\min_{\bb} L(\bb) + \sum_{i=1}^p \rho(|\beta_i|;\mu,\gamma),
\end{equation}
where $\mu$ represents a (continuous) parameter controlling the desired level of sparsity of $\bb$ and $\gamma$ is a parameter controlling the quality of the approximation of the indicator function $I\{|\beta|>0\}$. A variety of nonconvex penalty functions and their description in this format is shown in Table \ref{tab:ncp} (for a general discussion, see \cite{zhangzhang}). In particular, for each of these functions we observe that 
$$\lim_{\gamma\to\infty} \rho(|\beta|;\mu,\gamma) = \mu \cdot I\{|\beta|>0\}.$$
It is particularly important to note the \emph{separable} nature of the penalty functions appearing in \eqref{eqn:ncpm}---namely, each coordinate $\beta_i$ is penalized (via $\rho$) independently of the other coordinates.

Our primary focus will be on the bounded penalty functions (clipped Lasso, MCP, and SCAD), all of which take the form
\begin{equation}\label{eqn:pff}
\rho(|\beta|;\mu,\gamma) = \mu \min\{g(|\beta|;\mu,\gamma),1\}
\end{equation}
where $g$ is an increasing function of $|\beta|$. We will show that in this case, the problem \eqref{eqn:ncpm} can be rewritten exactly as an estimation problem with a (non-separable) trimmed penalty function:
\begin{equation}\label{eqn:ncpt}
\min_{\bb} L(\bb) + \mu\sum_{i=\ell+1}^p g(|\beta_{(i)} |)
\end{equation}
for some $\ell\in\{0,1,\ldots,p\}$ (note the appearance of the projected penalties $\pi_k^g$ as considered in Section \ref{ssec:gens}). In the process of doing so, we will also show that, in general, \eqref{eqn:ncpt} cannot be solved via the separable-penalty estimation approach of \eqref{eqn:ncpm}, and so the trimmed estimation problem leads to a richer class of models. Throughout we will often refer to \eqref{eqn:ncpt} (taken generically over all choices of $\ell$) as the \emph{trimmed counterpart} of the separable estimation problem \eqref{eqn:ncpm}.

\begin{table*}
\centering
  \begin{tabular}{| c | c | c | c|}
\hline 
Name  & Definition & Auxiliary Functions\\\hline\hline
Clipped Lasso &\multirow{2}{*}{$ \mu\min\{\gamma|\beta|,1\} $ } &\multirow{4}{*}{ \small $g_1(|\beta|) =  \left\{\begin{array}{rc} 2\gamma|\beta|-\gamma^2\beta^2, & |\beta|\leq 1/\gamma,\\1 ,& |\beta|>1/\gamma.\end{array}\right.$  }\\
\cite{cl} &  & \\\cline{1-2}
MCP & \multirow{2}{*}{$\mu\min\{g_1(|\beta|),1\}$ } &   \\
\cite{mcp} & & \\\cline{1-2}
SCAD & \multirow{2}{*}{$\mu\min\{g_2(|\beta|),1\}$} &  \multirow{6}{*}{ \small $ g_2(|\beta|) = \left\{
\begin{array}{rc}
|\beta|/(\gamma\mu),&  |\beta| \leq 1/\gamma,\\
 \frac{\beta^2 +(2/\gamma-4\mu\gamma)|\beta| +1/\gamma^2}{ 4\mu - 4\mu^2\gamma^2} , &  1/\gamma < |\beta| \leq 2\mu\gamma-1/\gamma,\\1, &  |\beta| > 2\mu\gamma-1/\gamma.
\end{array}\right.   $  }   \\
\cite{scad} & & \\\cline{1-2}
$\ell_q$ ($0<q<1$) & \multirow{2}{*}{$\mu|\beta|^{1/\gamma}$ }& \\
\cite{lq,log} & & \\\cline{1-2}
Log & \multirow{2}{*}{$\ds\mu {\log(\gamma|\beta| + 1)}/{\log(\gamma+1)}$ } & \\
\cite{log} & & \\\hline
  \end{tabular}
  \caption{Nonconvex penalty functions $\rho(|\beta|;\mu,\gamma)$ represented as in \eqref{eqn:ncpm}. The precise parametric representation is different than their original presentation but they are equivalent. We have taken care to normalize the different penalty functions so that $\mu$ is the sparsity parameter and $\gamma$ corresponds to the approximation of the indicator $I\{|\beta|>0\}$. For  SCAD, it is usually recommended to set $2\mu>3/\gamma^2$.
}
  \label{tab:ncp}
\end{table*}

\subsection{Reformulating the problem \eqref{eqn:ncpm}}\label{ssec:ncpreform}

Let us begin by considering penalty functions $\rho$ of the form \eqref{eqn:pff} with $g$ a non-negative, increasing function of $|\beta|$. Observe that for any $\bb$ we can rewrite $\sum_{i=1}^p \min\{g(|\beta_i|),1\}$ as
\begin{align*}
&\min\left\{\sum_{i=1}^p g(|\beta_{(i)}|),1 + \sum_{i=2}^p g(|\beta_{(i)}|),\ldots, p-1 + g(|\beta_{(p)}|),p  \right\}\\
&=\min_{\ell\in\{0,\ldots,p\}} \left\{ \ell + \sum_{i>\ell}  g(|\beta_{(i)}|)  \right\}.
\end{align*}
It follows that \eqref{eqn:ncpm}
 can be rewritten \emph{exactly} as 
\begin{equation}\label{eqn:ncptna} 
\min_{\substack{\bb,\\\ell\in\{0,\ldots,p\}}} \left(L(\bb) + \mu\sum_{i>\ell} g(|\beta_{(i)}|) + \mu\ell\right)
\end{equation}
An immediate consequence is the following theorem:

\begin{theorem}\label{thm:MasT}
If $\bb^*$ is an optimal solution to \eqref{eqn:ncpm}, where $\rho(|\beta|;\mu,\gamma) = \mu\min\{g(|\beta|;\mu,\gamma),1\}$, then there exists some $\ell^*\in\{0,\ldots,p\}$ so that $\bb^*$ is optimal to its trimmed counterpart
\begin{equation*}
\min_\bb L(\bb) + \mu\sum_{i>\ell^*} g(|\beta_{(i)}|).
\end{equation*}
In particular, the choice of $\ell^* = |\{i: g(|\beta_i^*|) \geq1 \}|$ suffices. 
 Conversely, if $\bb^*$ is an optimal solution to \eqref{eqn:ncptna}, then $\bb^*$ in an optimal solution to \eqref{eqn:ncpm}.
\end{theorem}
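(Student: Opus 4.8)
The plan is to leverage the exact rewriting already established in \eqref{eqn:ncptna}, which expresses the separable problem \eqref{eqn:ncpm} as a \emph{joint} minimization over $\bb$ and the discrete index $\ell\in\{0,\ldots,p\}$. The entire theorem reduces to carefully reading off what this joint minimization says, so no substantive new computation is needed beyond the identity preceding \eqref{eqn:ncptna}. First I would observe that since \eqref{eqn:ncpm} and \eqref{eqn:ncptna} have literally the same optimal value (the inner minimization over $\ell$ reproduces the $\sum_i\min\{g(|\beta_i|),1\}$ term exactly for every fixed $\bb$), a point $\bb^*$ solves \eqref{eqn:ncpm} if and only if there is some $\ell^*$ making $(\bb^*,\ell^*)$ solve the joint problem \eqref{eqn:ncptna}.

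For the forward direction, I would take $\bb^*$ optimal for \eqref{eqn:ncpm} and identify the $\ell^*$ that attains the inner minimum $\min_{\ell}\{\ell+\sum_{i>\ell}g(|\beta_{(i)}^*|)\}$ at $\bb^*$. The key step is to verify the explicit formula $\ell^* = |\{i: g(|\beta_i^*|)\geq 1\}|$. This follows from examining the finite sequence of candidate values indexed by $\ell$: incrementing $\ell$ by one trades a summand $g(|\beta_{(\ell+1)}|)$ for the constant $1$, so the piecewise objective decreases exactly while $g(|\beta_{(\ell+1)}|)\geq 1$ and increases once $g(|\beta_{(\ell+1)}|)<1$; because the $g(|\beta_{(i)}|)$ are sorted in decreasing order (as the $|\beta_{(i)}|$ are), the optimal cutoff is precisely the count of coordinates with $g$-value at least one. (A tie when some $g(|\beta_{(i)}|)=1$ leaves the objective unchanged, so the stated $\ell^*$ is still a valid—indeed the largest—minimizer.) Once $\ell^*$ is fixed, the additive constant $\mu\ell^*$ is irrelevant to the minimization over $\bb$, so $\bb^*$ minimizing the full joint objective with this $\ell^*$ frozen is exactly optimal for the trimmed counterpart $\min_\bb L(\bb)+\mu\sum_{i>\ell^*}g(|\beta_{(i)}|)$.

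For the converse, I would argue that if $\bb^*$ solves \eqref{eqn:ncptna} (now viewed as the joint problem with $\ell$ as a genuine decision variable), then since \eqref{eqn:ncptna} and \eqref{eqn:ncpm} share the same objective as functions of $\bb$ after the inner minimization over $\ell$ is carried out, $\bb^*$ must also be optimal for \eqref{eqn:ncpm}. The only subtlety is keeping straight that the converse statement refers to optimality in the \emph{joint} sense, matching the formulation in \eqref{eqn:ncptna}. I expect the main obstacle to be purely expository rather than mathematical: one must handle the boundary cases cleanly (namely $\ell^*=0$, where no coordinates satisfy $g\geq 1$, and $\ell^*=p$, where all do) and be precise about ties in the definition of $\ell^*$ so that the claimed closed form is genuinely a minimizer. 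Since the heavy lifting is already done by the telescoping identity that produces \eqref{eqn:ncptna}, the proof is essentially a verification that the explicit threshold $\ell^*=|\{i:g(|\beta_i^*|)\geq 1\}|$ realizes the inner $\min_\ell$.
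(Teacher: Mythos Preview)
Your proposal is correct and takes essentially the same approach as the paper: the paper simply states that Theorem~\ref{thm:MasT} is ``an immediate consequence'' of the rewriting \eqref{eqn:ncptna} and gives no further argument, so your proof fills in exactly the details the paper omits. Your verification of the explicit formula $\ell^* = |\{i: g(|\beta_i^*|)\geq 1\}|$ via the increment $h(\ell+1)-h(\ell)=1-g(|\beta_{(\ell+1)}|)$ and the monotonicity of $g$ is the natural way to make the ``immediate'' claim precise.
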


It follows that the estimation problem \eqref{eqn:ncpm}, which decouples each loading $\beta_i$ in the penalty function, can be solved using ``trimmed'' estimation problems of the form \eqref{eqn:ncpt} with a trimmed penalty function that couples the loadings and only penalizes the $p-\ell^*$ smallest. Because the trimmed penalty function is generally nonconvex by nature, we will focus on comparing it with other nonconvex penalties for the remainder of the section.

\subsection{Trimmed reformulation examples}\label{ssec:ncpeg}

We now consider the structure of the estimation problem \eqref{eqn:ncpm} and the corresponding trimmed estimation problem for the clipped Lasso and MCP penalties. We use the $\ell_2^2$ loss throughout.

\subsubsection*{Clipped Lasso}

The clipped (or capped, or truncated) Lasso penalty \cite{cl,shen12} takes the component-wise form
$$\rho(|\beta|;\mu,\gamma) = \mu \min\{\gamma|\beta|,1\}.$$
Therefore, in our notation, $g$ is a multiple of the absolute value function. A plot of $\rho$ is shown in Figure \ref{fig:ncpa}. In this case, the estimation problem with $\ell_2^2$ loss is
\begin{equation}\label{eqn:clm}
\min_{\bb} \frac{1}{2}\|\y-\X\bb\|_2^2 + \mu\sum_i \min\{\gamma|\beta_i|,1\}.
\end{equation}
It follows that the corresponding trimmed estimation problem (\emph{c.f.} Theorem \ref{thm:MasT}) is exactly the trimmed Lasso problem studied earlier, namely,
\begin{equation}\label{eqn:clt}
\min_{\bb} \frac{1}{2}\|\y-\X\bb\|_2^2 + \mu\gamma \tk{\bb}.
\end{equation}
A distinct advantage of the trimmed Lasso formulation \eqref{eqn:clt} over the traditional clipped Lasso formulation \eqref{eqn:clm} is that it offers direct control over the desired level of sparsity vis-\`a-vis the discrete parameter $k$. We perform a deeper analysis of the two problems in Section \ref{ssec:ncpgenerality}.

\subsubsection*{MCP}

The MCP penalty takes the component-wise form
$$\rho(|\beta|;\mu,\gamma) = \mu \min\{g(|\beta|),1\}$$
where $g$ is any function with $g(|\beta|) = 2\gamma|\beta|-\gamma^2\beta^2$ whenever $|\beta| \leq1/\gamma$ and $g(|\beta|)\geq1$ whenever $|\beta| > 1/\gamma$. An example of one such $g$ is shown in Table \ref{tab:ncp}.
A plot of $\rho$ is shown in Figure \ref{fig:ncpa}. Another valid choice of $g$ is $g(|\beta|) = \max\{2\gamma|\beta|-\gamma^2\beta^2,\gamma|\beta|\}$. In this case, the trimmed counterpart is
\begin{equation*}
\min_{\bb} \frac{1}{2}\|\y-\X\bb\|^2 + \mu\gamma \sum_{i>\ell}\max\left\{ 2|\beta_{(i)}| -\gamma\beta_{(i)}^2,|\beta_{(i)}|\right\}.
\end{equation*}

Note that this problem is amenable to the same class of techniques as applied to the trimmed Lasso problem in the form \eqref{eqn:clt} because of the increasing nature of $g$, although the subproblems with respect to $\bb$ are no longer convex (although it is a usual MCP estimation problem which is well-suited to convex optimization approaches; see \cite{sparsenet}).
Also observe that we can separate the penalty function into a trimmed Lasso component and another component:
$$\sum_{i>\ell} |\beta_{(i)}|\text{\quad and \quad} \sum_{i>\ell} \left(|\beta_{(i)}|-\gamma\beta_{(i)}^2\right)_+.$$
Observe that the second component is uniformly bounded above by $(p-\ell)/(4\gamma)$, and so as $\gamma\to\infty$, 
the trimmed Lasso penalty dominates.

\begin{figure*}
\centering
\begin{subfigure}{0.49\linewidth}
        \centering
\begin{tikzpicture}
  \begin{axis}[ 
    xlabel=$|\beta|$,
    xmin = 0,
    xmax = 2,
    ymin = 0,
    ymax = 2,
    scale = .8,
    minor y tick num=0,
          yticklabels={,,,,$\mu$},
    minor x tick num=0,
          xticklabels={,0,,,$1/\gamma$},
          legend entries = {$\rho_\text{CL}$,$\rho_\text{MCP}$},
          legend style = {at={(.93,.07)}, anchor = south east},
  ] 
    \addplot[samples=\ns, color=blue, thick, dashed] { (x<1.5)*(x) + (x>1.5)*1.5};
    \addplot[samples=\ns, color=red, thick] { (x<1.5) * (2*x-x^2/1.5)  + (x>1.5)*(1.5) };
  \end{axis}
\end{tikzpicture}
        \caption{Clipped Lasso and MCP}
        \label{fig:ncpa}
    \end{subfigure}
\begin{subfigure}{0.49\linewidth}
        \centering
 \begin{tikzpicture}
  \begin{axis}[ 
    xlabel=$|\beta|$,
    xmin = 0,
    xmax = 4,
    ymin = 0,
    ymax = 4,
    scale = .8,
    minor y tick num=0,
          yticklabels={,,,,$\mu$},
    minor x tick num=0,
          xticklabels={,0,,,$1$},
    legend entries = {$\rho_\text{log}$,$\rho_\text{$\ell_q$}$},
              legend style = {at={(.93,.07)}, anchor = south east},
  ] 
    \addplot[samples=\ns, color=blue, thick, dashed] { 3*ln(2*x/3+1)/ln(3) };
    \addplot[samples=\ns, color=red, thick] { 3*(x/3)^(1/2) };
\end{axis}
\end{tikzpicture}
\caption{Log and $\ell_q$}
\label{fig:ncpb}
\end{subfigure}\\[1ex]
\caption{Plots of $\rho(|\beta|;\mu,\gamma)$ for some of the penalty functions in Table \ref{tab:ncp}.}
\label{fig:ncp}
\end{figure*}
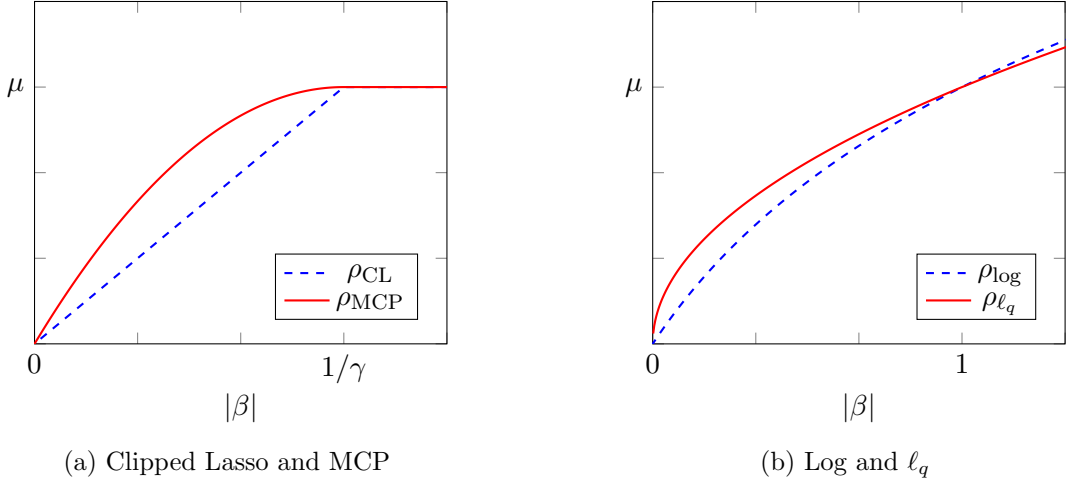

\subsection{The generality of trimmed estimation}\label{ssec:ncpgenerality}

We now turn our focus to more closely studying the relationship between the separable-penalty estimation problem \eqref{eqn:ncpm} and its trimmed estimation counterpart. The central problems of interest are the clipped Lasso and its trimmed counterpart, viz., the trimmed Lasso:\footnote{One may be concerned about the well-definedness of such problems (e.g. as guaranteed vis-\`a-vis coercivity of the objective, \emph{c.f.}\cite{rockafeller}). In all the results of Section \ref{ssec:ncpgenerality}, it is possible to add a regularizer $\eta\|\bb\|_1$ for some fixed $\eta>0$ to both $\cl$ and $\tl$ and the results remain valid, \emph{mutatis mutandis}. The addition of this regularizer implies coercivity of the objective functions and, consequently, that the minimum is indeed well-defined. For completeness, we note a technical reason for a choice of $\eta\|\bb\|_1$ is its positive homogeneity; thus, the proof technique of Lemma \ref{lem:key} easily adapts to this modification.}
\begin{align*}
&\ds \min_\bb \frac{1}{2}\|\y-\X\bb\|^2_2 + \mu \sum_i \min\{\gamma|\beta_i|,1\}\tag{$\textsc{CL}_{\mu,\gamma}$}\\
&\ds\min_\bb \frac{1}{2}\|\y-\X\bb\|^2_2 + \lambda \tka{\ell}{\bb}.\tag{$\textsc{TL}_{\lambda,\ell}$}
\end{align*}
As per Theorem \ref{thm:MasT}, if $\bb^*$ is an optimal solution to $\cl$, then $\bb^*$ is an optimal solution to $\tl$, where $\lambda=\mu\gamma$ and $\ell=|\{i:|\beta_i^*|\geq1/\gamma\}|$. We now consider the converse: given some $\lambda>0$ and $\ell\in\{0,1,\ldots,p\}$ and a solution $\bb^*$ to $\tl$, when does there exist some $\mu,\gamma>0$ so that $\bb^*$ is an optimal solution to $\cl$? As the following theorem suggests, the existence of such a $\gamma$ is closely connected to an underlying discrete form of ``convexity'' of the sequence of problems $\tla{\lambda,k}$ for $k\in\{0,1,\ldots,p\}$. We will focus on the case when $\lambda=\mu\gamma$, as this is the natural correspondence of parameters in light of Theorem \ref{thm:MasT}.

\begin{theorem}\label{thm:clconv}
If $\lambda>0$, $\ell\in\{0,\ldots,p\}$, and $\bb^*$ is an optimal solution to $\tl$, then there exist $\mu,\gamma>0$ with $\mu\gamma=\lambda$ and so that $\bb^*$ is an optimal solution to $\cl$ if and only if
\begin{equation}\label{eqn:clconv}
Z\tla{\lambda,\ell_e} < \frac{j-\ell_e}{j-i} Z\tla{\lambda,i} + \frac{\ell_e-i}{j-i} Z\tla{\lambda,j}
\end{equation}
for all $0\leq i< \ell_e < j \leq p$, where $Z(\textsc{P})$ denotes the optimal objective value to optimization problem $\textsc{(P)}$ and $\ell_e = \min\{\ell,\|\bb^*\|_0\}$.
\end{theorem}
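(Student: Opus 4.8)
The plan is to reduce the whole question to the shape of the optimal-value function $k\mapsto Z_k := Z\tla{\lambda,k}$ and to recognize \eqref{eqn:clconv} as a statement of strict discrete convexity of this function at $\ell_e$. The one identity that drives everything is obtained from the reformulation underlying Theorem \ref{thm:MasT}: writing $g(x)=\gamma|x|$ we have $\sum_i\min\{\gamma|\beta_i|,1\}=\min_k\{k+\gamma\tk{\bb}\}$, so when $\mu\gamma=\lambda$ one may interchange the minimization over $\bb$ with the one over the integer $k$ to get
\begin{equation*}
Z\cla{\mu,\gamma}=\min_{k\in\{0,\ldots,p\}}\left(Z_k+\mu k\right).
\end{equation*}
Thus $\bbs$ is optimal for $\cl$ precisely when the clipped objective evaluated at $\bbs$ equals $\min_k(Z_k+\mu k)$. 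I would then reduce the given index $\ell$ to $\ell_e$: since $\tk{\bb}$ is nonincreasing in $k$, so is $Z_k$, and if $\|\bbs\|_0<\ell$ then $\tka{\ell}{\bbs}=\tka{\ell_e}{\bbs}=0$ forces $Z_\ell=Z_{\ell_e}$ with $\bbs$ optimal for $\tla{\lambda,\ell_e}$ as well, so we may assume throughout that $\bbs$ attains $Z_{\ell_e}$.

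Next I would rewrite the criterion. Clearing the positive denominators $j-\ell_e$ and $\ell_e-i$ in \eqref{eqn:clconv} shows, after a short computation, that \eqref{eqn:clconv} holds for all $i<\ell_e<j$ if and only if there is a slope $\mu>0$ with $Z_{\ell_e}+\mu\ell_e<Z_k+\mu k$ for every $k\neq\ell_e$; that is, \eqref{eqn:clconv} says exactly that $(\ell_e,Z_{\ell_e})$ is a strictly exposed vertex of the lower convex hull of the points $\{(k,Z_k)\}_{k=0}^p$, equivalently that $\ell_e$ is the \emph{unique} minimizer of $k\mapsto Z_k+\mu k$ for some $\mu>0$. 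The second quantitative ingredient I would need is a pair of magnitude estimates for $\bbs$, obtained by comparing it against the optima at the neighbouring indices: $|\beta_{(\ell_e)}|\ge (Z_{\ell_e-1}-Z_{\ell_e})/\lambda$ and $|\beta_{(\ell_e+1)}|\le (Z_{\ell_e}-Z_{\ell_e+1})/\lambda$ (with the obvious one-sided modifications at the boundary).

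For the implication from \eqref{eqn:clconv} I would argue constructively. Take the slope $\mu>0$ furnished by strict exposure and set $\gamma=\lambda/\mu$. Because $\mu$ lies strictly inside the exposure interval, the threshold $1/\gamma=\mu/\lambda$ falls strictly between $|\beta_{(\ell_e+1)}|$ and $|\beta_{(\ell_e)}|$ by the magnitude estimates, so the clip count of $\bbs$ is exactly $\ell_e$. Consequently the clipped objective at $\bbs$ equals $\tfrac12\|\y-\X\bbs\|_2^2+\mu\ell_e+\lambda\tka{\ell_e}{\bbs}=Z_{\ell_e}+\mu\ell_e=\min_k(Z_k+\mu k)=Z\cla{\mu,\gamma}$, so $\bbs$ is optimal for $\cl$.

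The reverse implication is where I expect the real difficulty. Suppose $\bbs$ is optimal for $\cl$ with $\mu\gamma=\lambda$ and let $\ell^*$ be its clip count; optimality collapses the chain $Z\cla{\mu,\gamma}=\tfrac12\|\y-\X\bbs\|_2^2+\mu\ell^*+\lambda\tka{\ell^*}{\bbs}\ge Z_{\ell^*}+\mu\ell^*\ge Z\cla{\mu,\gamma}$ into equalities, showing at once that $\bbs$ attains $Z_{\ell^*}$ and that $\ell^*$ minimizes $k\mapsto Z_k+\mu k$. Two delicate steps remain: identifying $\ell^*$ with $\ell_e$ (again locating the clip threshold relative to the support of $\bbs$ via the magnitude estimates), and upgrading ``$\ell_e$ is a minimizer'' to the \emph{strict} separation in \eqref{eqn:clconv}. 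The strictness is the crux: a bare minimizer corresponds only to the nonstrict version of \eqref{eqn:clconv}, and the gap between the two is precisely the degenerate, knife-edge configuration in which $\bbs$ sits at the kink $|\beta_i|=1/\gamma$ of the clipped penalty and is a non-isolated $\cl$-optimum. Ruling this out---so that recoverability of $\bbs$ as a genuine clipped-Lasso optimum forces strict exposure of $(\ell_e,Z_{\ell_e})$---is exactly the content that the positively homogeneous argument of Lemma \ref{lem:key} is designed to supply, and it is the step on which I would spend the most care.
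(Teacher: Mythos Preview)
Your proposal follows essentially the same strategy as the paper: both hinge on the identity $Z\cl=\min_{k}(Z_k+\mu k)$ (equation \eqref{eqn:ncptna}) and on reading \eqref{eqn:clconv} as strict exposure of $(\ell_e,Z_{\ell_e})$ on the lower hull of $\{(k,Z_k)\}$. The main tactical difference lies in the necessity direction. You route through the clip count $\ell^*$ of $\bb^*$ and the sequence $k\mapsto Z_k+\mu k$, leaving the identification $\ell^*=\ell_e$ and the upgrade to strictness as the two delicate steps. The paper instead analyzes the \emph{pointwise} function $k\mapsto \mu k+\lambda T_k(\bb^*)$ at the fixed $\bb^*$: since $\bb^*$ is optimal for the joint problem \eqref{eqn:pf1}, any minimizer $k$ of this function forces $\bb^*$ to be optimal for $\tla{\lambda,k}$, and Lemma \ref{lem:key} (parts (b) and (c)) then pins down $\ell_e$ as the \emph{unique} such minimizer directly. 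This collapses your two delicate steps into one case analysis and yields the strict inequalities without having to separately locate the clip threshold. Your magnitude estimates $|\beta^*_{(\ell_e)}|\ge (Z_{\ell_e-1}-Z_{\ell_e})/\lambda$ and $|\beta^*_{(\ell_e+1)}|\le (Z_{\ell_e}-Z_{\ell_e+1})/\lambda$ are correct and give a concrete proof of the sufficiency direction, but the paper handles that direction by a bare sandwich (the clipped objective at $\bb^*$ is bounded above by $Z_{\ell_e}+\mu\ell_e=\min_k(Z_k+\mu k)$) without needing the clip count to equal $\ell_e$ exactly.
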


Let us note why we refer to the condition in \eqref{eqn:clconv} as a discrete analogue of convexity of the sequence $\{z_k :=Z\tla{\lambda,k},\; k=0,\ldots,p\}$. In particular, observe that this sequence satisfies the condition of Theorem \ref{thm:clconv} if and only if the function defined as the linear interpolation between the points $(0,z_0)$, $(1,z_1)$, \ldots, and $(p,z_p)$ is strictly convex about the point $(\ell,z_\ell)$.\footnote{To be precise, we mean that the real-valued function that is a linear interpolation of the points has a subdifferential at the point $(\ell,z_\ell)$ which is an interval of strictly positive width.}

Before proceeding with the proof of the theorem, we state and prove a technical lemma about the structure of $\tl$.

\begin{lemma}\label{lem:key}
Fix $\lambda>0$ and suppose that $\bb^*$ is optimal to $\tl$.
\begin{enumerate}[(a)]
\item The optimal objective value of $\tl$ is $Z\tl = (\|\y\|_2^2-\|\X\bb^*\|_2^2)/2$.
\item If $\bb^*$ is also optimal to $\tla{\lambda,\ell'}$, where $\ell<\ell'$, then $\|\bb^*\|_0\leq\ell$ and $\bb^*$ is optimal to $\tla{\lambda,j}$ for all integral $j$ with $\ell<j<\ell'$.
\item If $\kappa:=\|\bb^*\|_0 <\ell$, then  $\bb^*$ is also optimal to $\tla{\lambda,\kappa}$, $\tla{\lambda,\kappa+1}$, \ldots, and $\tla{\lambda,\ell-1}$. Further, $\bb^*$ is \emph{not} optimal to $\tla{\lambda,0}$, $\tla{\lambda,1}$, \ldots, nor $\tla{\lambda,\kappa-1}$.
\end{enumerate}
\end{lemma}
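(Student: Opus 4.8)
The plan is to prove part (a) first, since it is the engine that drives (b) and (c). Write $f(\bb,k):=\tfrac12\|\y-\X\bb\|_2^2 + \lambda\,\tk{\bb}$, so that $\tla{\lambda,k}$ minimizes $f(\cdot,k)$, and set $z_k:=Z\tla{\lambda,k}$. The key structural fact is that $\tka{\ell}{\cdot}$ is positively homogeneous of degree one: $\tka{\ell}{t\bb}=|t|\,\tka{\ell}{\bb}$ for every scalar $t$. I would therefore fix the optimal $\bb^*$ and restrict the objective to the ray $\{t\bb^*:t\in\R\}$, defining
$$\psi(t):=\tfrac12\|\y-t\X\bb^*\|_2^2 + \lambda|t|\,\tka{\ell}{\bb^*}.$$
This $\psi$ is convex in $t$ (a convex quadratic plus a nonnegative multiple of $|t|$), and since $t\bb^*$ is feasible for every $t$ while $\bb^*$ is optimal, $t=1$ is a global minimizer. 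As $1>0$, the only kink (at $t=0$) is avoided and $\psi$ is differentiable at $t=1$, so $\psi'(1)=0$ gives the first-order identity
$$-\langle\y,\X\bb^*\rangle + \|\X\bb^*\|_2^2 + \lambda\,\tka{\ell}{\bb^*}=0.$$
Substituting this into $\psi(1)=Z\tl$ and expanding $\tfrac12\|\y-\X\bb^*\|_2^2$ collapses the expression to $(\|\y\|_2^2-\|\X\bb^*\|_2^2)/2$, which is (a); the degenerate case $\bb^*=\mathbf 0$ is handled separately, where $\psi$ is constant and the formula holds trivially.

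Next I would record two elementary facts used repeatedly. First, for fixed $\bb$ the map $k\mapsto\tk{\bb}=\sum_{i>k}|\beta_{(i)}|$ is nonincreasing (incrementing $k$ drops the term $|\beta_{(k+1)}|$), so $f(\bb,k)\geq f(\bb,k+1)$ for all $\bb$ and hence $z_k$ is nonincreasing in $k$. Second, $\tk{\bb}=0$ if and only if $\|\bb\|_0\leq k$. For (b), the crucial leverage is that the value formula from (a) depends on $\bb^*$ only through $\X\bb^*$. If $\bb^*$ is optimal to both $\tla{\lambda,\ell}$ and $\tla{\lambda,\ell'}$, then (a) applied to each gives $z_\ell=(\|\y\|_2^2-\|\X\bb^*\|_2^2)/2=z_{\ell'}$. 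Evaluating these optimal values directly as $\tfrac12\|\y-\X\bb^*\|_2^2+\lambda\,\tka{\ell}{\bb^*}$ and $\tfrac12\|\y-\X\bb^*\|_2^2+\lambda\,\tka{\ell'}{\bb^*}$ and subtracting forces $\tka{\ell}{\bb^*}=\tka{\ell'}{\bb^*}$; since $\ell<\ell'$, the difference $\sum_{\ell<i\leq\ell'}|\beta^*_{(i)}|$ vanishes, so $|\beta^*_{(\ell+1)}|=0$ and thus $\|\bb^*\|_0\leq\ell$. For intermediate $j$, monotonicity gives $z_\ell=z_j=z_{\ell'}$, while $\|\bb^*\|_0\leq\ell<j$ makes $\tka{j}{\bb^*}=0$; hence $f(\bb^*,j)=\tfrac12\|\y-\X\bb^*\|_2^2=z_\ell=z_j$, so $\bb^*$ is optimal to $\tla{\lambda,j}$.

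For (c), with $\kappa:=\|\bb^*\|_0<\ell$: for any $m$ with $\kappa\leq m<\ell$ we have $\tka{m}{\bb^*}=0$, so $f(\bb^*,m)=\tfrac12\|\y-\X\bb^*\|_2^2=z_\ell$; combining $z_m\geq z_\ell$ (monotonicity) with $z_m\leq f(\bb^*,m)$ sandwiches $z_m=f(\bb^*,m)$, giving optimality on the whole range $\tla{\lambda,\kappa},\ldots,\tla{\lambda,\ell-1}$. For nonoptimality at $m<\kappa$ I would argue by contradiction: if $\bb^*$ were optimal to $\tla{\lambda,m}$, then (a) applied to $m$ and to $\ell$ forces $z_m=z_\ell$, and equating with the direct values yields $\tka{m}{\bb^*}=\tka{\ell}{\bb^*}=0$, whence $\|\bb^*\|_0\leq m<\kappa$, contradicting $\|\bb^*\|_0=\kappa$.

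The conceptual content lies almost entirely in part (a): the genuine subtlety is justifying $\psi'(1)=0$ cleanly, i.e.\ confirming that the global minimizer sits at the differentiable point $t=1$ rather than at the kink $t=0$, and dispatching the case $\bb^*=\mathbf 0$. Once the homogeneity-based value formula of (a) is established, parts (b) and (c) reduce to bookkeeping that plays the formula off against the two monotonicity facts, so I expect no further difficulty there.
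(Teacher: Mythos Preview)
Your proof is correct and follows essentially the same approach as the paper. The only cosmetic difference is in part (a): the paper restricts the scaling parameter to $\epsilon\geq 0$, so that $a(\epsilon)=\tfrac12\|\y-\epsilon\X\bb^*\|_2^2+\epsilon\lambda\tka{\ell}{\bb^*}$ is a genuine quadratic (no absolute value) and $a'(1)=0$ follows immediately from $\epsilon=1$ being an interior minimizer on $[0,\infty)$; you instead take $t\in\R$ and handle the kink at $t=0$, which is a harmless variation leading to the same first-order identity.
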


\begin{proof}
Suppose $\bb^*$ is optimal to $\tl$. Define
$$a(\epsilon):= \|\y-\epsilon\X\bb^*\|_2^2/2 + \epsilon\lambda \tka{\ell}{\bb^*}.$$
By the optimality of $\bb^*$, $a(\epsilon)\geq a(1)$ for all $\epsilon\geq0$. As $a$ is a polynomial with degree at most two, one must have that $a'(1) = 0$. This implies that
$$a'(1) = -\langle\y,\X\bb^*\rangle + \|\X\bb^*\|_2^2 + \lambda\tka{\ell}{\bb^*} = 0.$$
Adding $(\|\y\|_2^2-\|\X\bb^*\|_2^2)/2$ to both sides, the desired result of part (a) follows.

Now suppose that $\bb^*$ is also optimal to $\tla{\lambda,\ell'}$, where $\ell'>\ell$. By part (a), one must necessarily have that $Z\tl=Z\tla{\lambda,\ell'} = (\|\y\|_2^2-\|\X\bb^*\|_2^2)/2$. Inspecting $Z\tl-Z\tla{\lambda,\ell'}$, we see that
$$0=Z\tl-Z\tla{\lambda,\ell'} = \lambda \sum_{i=\ell+1}^{\ell'} |\beta_{(i)}^*|.$$
Hence, $|\beta_{(\ell+1)}^*|=0$ and therefore $\|\bb^*\|_0\leq \ell$.

Finally, for any integral $j$ with $\ell\leq j\leq \ell'$, one always has that $Z\tl \geq Z\tla{\lambda,j} \geq Z\tla{\lambda,\ell'}$. As per the preceding argument, $Z\tl = Z\tla{\lambda,\ell}$ and so $Z\tl = Z\tla{\lambda,j}$, and therefore $\bb^*$ must also be optimal to $\tla{\lambda,j}$ by applying part (a). This completes part (b).

Part (c) follows from a straightforward inspection of objective functions and using the fact that $Z\tla{\lambda,j}\geq Z\tla{\lambda,\ell}$ whenever $j\leq\ell$.
\end{proof}

Using this lemma, we can now proceed with the proof of the theorem.

\begin{proof}[Proof of Theorem \ref{thm:clconv}]
Let $z_k = Z\tla{\lambda,k}$ for $k\in\{0,1,\ldots,p\}$. Suppose that $\mu,\gamma>0$ is so that $\lambda = \mu\gamma$ and $\bb^*$ is an optimal solution to $\cl$. Let $\ell_e = \min\{\ell,\|\bb^*\|_0\}$. Per equation \eqref{eqn:ncptna}, $\bb^*$ must be optimal to 
\begin{equation}\label{eqn:pf1}
\min_\bb \min_{k\in\{0,\ldots,p\}} \frac{1}{2}\|\y-\X\bb\|_2^2 + \mu k + \mu\gamma \tk{\bb}.
\end{equation}
Observe that this implies that if $k$ is such that $k$ is a minimizer of ${\min}_k \mu k + \mu\gamma \tk{\bb^*}$, then $\bb^*$ must be optimal to $\tla{\lambda,k}$.

We claim that this observation, combined with Lemma \ref{lem:key}, implies that
$$\ell_e=\underset{ {k\in\{0,\ldots,p\}} }{\argmin} \mu k + \mu\gamma \tk{\bb^*}.$$
This can be shown as follows:
\begin{enumerate}[(a)]
\item Suppose   $\ell\leq \|\bb^*\|_0$ and so $\ell_e = \min\{\ell,\|\bb^*\|_0\} = \ell$. Therefore, by Lemma \ref{lem:key}(b), $\bb^*$ is not optimal to $\tla{\lambda,j}$ for any $j< \ell$, and thus
$$\underset{ {k\in\{0,\ldots,p\}} }{\min} \mu k + \mu\gamma \tk{\bb^*} = \underset{ {k\in\{\ell,\ldots,p\}} }{\min} \mu k + \mu\gamma \tk{\bb^*}.$$

If $k>\ell$ is such that $k$ is a minimizer of ${\min}_k \mu k + \mu\gamma \tk{\bb^*}$, then $\bb^*$ must be optimal to $\tla{\lambda,k}$ (using the observation), and hence by Lemma \ref{lem:key}(b), $\|\bb^*\|_0\leq \ell$. Combined with $\ell\leq \|\bb^*\|_0$, this implies that $\|\bb^*\|_0=\ell$. Yet then,
$\mu\ell = \mu\ell + \mu\gamma \tka{\ell}{\bb^*} <  \mu k + \mu\gamma\tk{\bb^*}$, contradicting the optimality of $k$. Therefore, we conclude that $\ell_e=\ell$ is the \emph{only} minimizer of $\min_k \mu k + \mu\gamma\tk{\bb^*}$. 

\item Now instead suppose that $\ell_e = \|\bb^*\|_0 < \ell$. Lemma \ref{lem:key}(c) implies that any optimal solution $k$ to $\min_k \mu k + \mu\gamma \tk{\bb^*}$ must satisfy $k\geq\|\bb^*\|_0$ (by the second part combined with the observation). As before, if $k>\|\bb^*\|_0=\ell_e$, then $\mu k > \mu \ell_e$, and so $k$ cannot be optimal. As a result, $k=\ell_e=\|\bb^*\|_0$ is the unique minimum.

\end{enumerate}
In either case, we have that $\ell_e$ is the unique minimizer to $\min_k \mu k + \mu\gamma \tk{\bb^*}$.

It then follows that $Z(\text{problem }\eqref{eqn:pf1}) = z_{\ell_e} + \mu \ell_e$. Further, by optimality of $\bb^*$, $z_{\ell_e} + \mu  \ell_e < z_i + \mu  i$ for all $0\leq i\leq p$ with $i\neq\ell_e$. For $0\leq i < \ell_e$, this implies $\mu < (z_i-z_{\ell_e})/(\ell_e-i) $ and for $j>\ell_e$, $\mu > (z_{\ell_e}-z_j)/(j-\ell_e)$. In other words, for $0\leq i < \ell_e < j\leq p$,
$$\frac{z_{\ell_e}-z_j}{j-\ell_e} < \frac{z_i - z_{\ell_e}}{\ell_e-i}, \quad \text{i.e., }\;   z_{\ell_e} <\frac{j-\ell_e}{j-i}z_i + \frac{\ell_e-i}{j-i} z_j.$$
This completes the forward direction. The reverse follows in the same way by taking any $\mu$ with
\begin{equation*}
\mu\in \left( \max_{j> \ell_e} \frac{z_{\ell_e} -z_j}{j-\ell_e},  \min_{i<\ell_e} \frac{z_i-z_{\ell_e}}{\ell_e-i} \right).
\end{equation*}
\end{proof}

We briefly remark upon one implication of the proof of Theorem \ref{thm:clconv}. In particular, if $\bb^*$ is a solution to $\tl$ and $\ell <\|\bb^*\|_0$, then $\bb^*$ is not the solution to $\tla{\lambda,k}$ for any $k\neq \ell$.

An immediate question is whether the convexity condition \eqref{eqn:clconv} of Theorem \ref{thm:clconv} always holds. While the sequence $\{Z\tla{\lambda,k} : k=0,1,\ldots,p\}$ is always non-increasing, the following example shows that the convexity condition need not hold in general; as a result, there exist instances of the trimmed Lasso problem whose solutions \emph{cannot} be found by solving a clipped Lasso problem.

\begin{example}\label{eg:cl}
Consider the case when $p=n=2$ with
$$\y = \begin{pmatrix}1\\1\end{pmatrix} \text{ \quad and \quad } \mb X = \begin{pmatrix} 1 & -1\\-1&2\end{pmatrix}.$$
Let $\lambda =1/2$ and $ \ell = 1$, and consider $\min_{\bb} \|\y-\X\bb\|_2^2/2 + |\beta_{(2)}|/2 = \min_{\beta_1,\beta_2} (1-\beta_1+\beta_2)^2/2 + (1+\beta_1-2\beta_2)^2/2 + |\beta_{(2)}|/2.$
This has unique optimal solution $\bb^* = (3/2,1)$ with corresponding objective value $ z_1 = 3/4$. One can also compute $z_0 = Z\tla{1/2,0} = 39/40$ and $z_2 = Z\tla{1/2,2} = 0$. Note that $z_1 = 3/4 > (39/40)/2 + (0)/2 = z_0/2+z_2/2$, and so there do not exist any $\mu,\gamma>0$ with $\mu\gamma=1/2$ so that $\bb^*$ is an optimal solution to $\cla{\mu,\gamma}$ by Theorem \ref{thm:clconv}. Further, it is possible to show that $\bb^*$ is not an optimal solution to $\cl$ for \emph{any} choice of $\mu,\gamma\geq0$. (See Appendix \ref{app:proof}.)
\end{example}

An immediate corollary of this example, combined with Theorem \ref{thm:MasT}, is that the class of trimmed Lasso models contains the class of clipped Lasso models as a \emph{proper} subset, regardless of whether we restrict our attention to $\lambda=\mu\gamma$. In this sense, the trimmed Lasso models comprise a richer set of models. The relationship is depicted in stylized form in Figure \ref{fig:mc}.

\subsubsection*{Limit analysis}

It is important to contextualize the results of this section as $\lambda\to\infty$. This corresponds to $\gamma\to\infty$ for the clipped Lasso problem, in which case $\cl$ converges to the penalized form of subset selection:
\begin{equation*}
\min_\bb \frac{1}{2}\|\y-\X\bb\|_2^2 + \mu\|\bb\|_0.\tag{$\textsc{CL}_{\mu,\infty}$}
\end{equation*}
Note that penalized problems for all of the penalties listed in Table \ref{tab:ncp} have this as their limit as $\gamma\to\infty$.  On the other hand, $\tl$ converges to constrained best subset selection:
\begin{equation*}
\min_{\|\bb\|_0\leq \ell} \frac{1}{2} \|\y-\X\bb\|_2^2.\tag{$\textsc{TL}_{\infty,k}$}
\end{equation*}
Indeed, from this comparison it now becomes clear why a convexity condition of the form in Theorem \ref{thm:clconv} appears in describing when the clipped Lasso solves the trimmed Lasso problem. In particular, the conditions under $\cla{\mu,\infty}$ solves the constrained best subset selection problem $\tla{\infty,k}$ are precisely those in Theorem \ref{thm:clconv}.

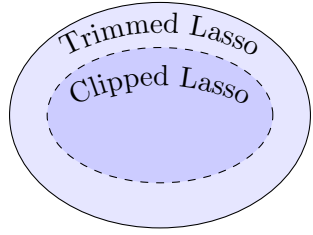
\begin{figure}
\centering
\begin{tikzpicture}
  \coordinate (c1) at (0,0);
      \pgfmathsetmacro{\colmixer}{mod(10*1,100)}%
      \path [draw, fill=blue!\colmixer, postaction=decorate] (c1) ellipse (2 and 1.5);
      \pgfmathsetmacro{\colmixer}{mod(10*2,100)}%
      \path [dashed, draw, fill=blue!\colmixer, postaction=decorate] (c1) ellipse (1.5 and .9) ;
 \path [ decoration={text along path, text={Clipped Lasso}, reverse path, text align={align=center}}, postaction=decorate] (2.5,-2.1) arc (0:180:2.5);
 \path [ decoration={text along path, text={Trimmed Lasso}, reverse path, text align={align=center}}, postaction=decorate] (2.6,-1.5) arc (0:180:2.6);
\end{tikzpicture}
\caption{Stylized relation of clipped Lasso and trimmed Lasso models. Every clipped Lasso model can be written as a trimmed Lasso model, but the reverse does not hold in general.}
\label{fig:mc}
\end{figure}

\subsection{Unbounded penalty functions}\label{ssec:ncpunbounded}

We close this section by now considering nonconvex penalty functions which are unbounded and therefore do not take the form $\mu\min\{g(|\beta|),1\}$. Two such examples are the $\ell_q$ penalty ($0<q<1$) and the log family of penalties as shown in Table \ref{tab:ncp} and depicted in Figure \ref{fig:ncpb}. Estimation problems with these penalties can be cast in the form
\begin{equation}\label{eqn:unbpm}
\min_{\ph} \frac{1}{2} \|\y-\X\ph\|_2^2 + \mu \sum_{i=1}^p g(|\phi_i|;\gamma)
\end{equation}
where $\mu,\gamma>0$ are parameters, $g$ is an unbounded and strictly increasing function, and $ g(|\phi_i|;\gamma) \xrightarrow{\gamma\to\infty} I\{|\phi_i|>0\}$. The change of variables in \eqref{eqn:unbpm} is intentional and its purpose will become clear shortly.

Observe that because $g$ is now unbounded, there exists some $\barl = \barl(\y,\X,\mu,\gamma)>0$ so that for all $\lambda>\barl$ any optimal solution $(\ph^*,\be^*)$ to the problem
\begin{equation}\label{eqn:unbpm-aux}
\min_{\ph,\be} \frac{1}{2} \|\y-\X(\ph+\be)\|_2^2 + \lambda\|\be\|_1+\mu \sum_{i=1}^p g(|\phi_i|;\gamma)
\end{equation}
has $\be^*=\mb0$.\footnote{The proof involves a straightforward modification of an argument along the lines of that given in Theorem \ref{thm:exactEquiv}. Also note that we can choose $\barl$ so that it is decreasing in $\gamma$, \emph{ceteris paribus}.} Therefore, \eqref{eqn:unbpm} is a special case of \eqref{eqn:unbpm-aux}. We claim that in the limit as $\gamma\to\infty$ (all else fixed), that \eqref{eqn:unbpm-aux} can be written exactly as a trimmed Lasso problem $\tla{\lambda,k}$ for some choice of $k$ and with the identification of variables $\bb = \ph+\be$.

We summarize this as follows:

\begin{proposition}
As $\gamma\to\infty$, the penalized estimation problem \eqref{eqn:unbpm} is a special case of the trimmed Lasso problem.
\end{proposition}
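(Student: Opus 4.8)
The plan is to exploit the split (decomposed) representation of the trimmed Lasso from Lemma \ref{lemma:vdrep} together with the two facts already assembled in the text: that \eqref{eqn:unbpm} is a special case of the augmented problem \eqref{eqn:unbpm-aux} (because $\be^*=\mb0$ at the optimum once $\lambda>\barl$), and that $g(|\phi_i|;\gamma)\to I\{|\phi_i|>0\}$ pointwise as $\gamma\to\infty$. The composite strategy is to pass to the limit $\gamma\to\infty$ inside \eqref{eqn:unbpm-aux}, identify the resulting limit problem, and recognize it as a trimmed Lasso via the split representation. Since \eqref{eqn:unbpm} and \eqref{eqn:unbpm-aux} coincide for each finite $\gamma$ (all $\lambda>\barl$, and $\barl$ may be taken decreasing in $\gamma$ so that a single large $\lambda$ works uniformly), their $\gamma\to\infty$ limits coincide as well, and it suffices to analyze \eqref{eqn:unbpm-aux}.

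First I would take the $\gamma\to\infty$ limit of the objective of \eqref{eqn:unbpm-aux}. Using $\mu\sum_i g(|\phi_i|;\gamma)\to\mu\|\ph\|_0$, the limiting problem is
\begin{equation*}
\min_{\ph,\be}\ \frac{1}{2}\|\y-\X(\ph+\be)\|_2^2 + \lambda\|\be\|_1 + \mu\|\ph\|_0. \tag{$\star$}
\end{equation*}
Note the essential role of the auxiliary variable $\be$: without it the limit would be the penalized subset-selection problem \eqref{eqn:a10p}, which is \emph{not} a trimmed Lasso; the $\lambda\|\be\|_1$ term is exactly what produces the trimmed penalty once $\be$ is eliminated.

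Next I would reduce the $\ell_0$-penalized problem $(\star)$ to an $\ell_0$-constrained one. Let $(\ph^*,\be^*)$ be optimal for $(\star)$ and set $k:=\|\ph^*\|_0$. I claim $(\ph^*,\be^*)$ is optimal for the split representation of $\tla{\lambda,k}$ in Lemma \ref{lemma:vdrep}, namely $\min_{\|\ph\|_0\le k,\be}\tfrac12\|\y-\X(\ph+\be)\|_2^2+\lambda\|\be\|_1$. Indeed, $(\ph^*,\be^*)$ is feasible since $\|\ph^*\|_0=k$; and if some feasible $(\ph',\be')$ with $\|\ph'\|_0\le k$ had a strictly smaller split objective, then its $(\star)$-objective would be at most that split value plus $\mu k$, which is strictly less than the $(\star)$-objective of $(\ph^*,\be^*)$ (equal to the split value of $(\ph^*,\be^*)$ plus $\mu k$), contradicting optimality in $(\star)$. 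Hence $(\star)$ and $\tla{\lambda,k}$ share the optimal pair $(\ph^*,\be^*)$ and differ only by the fixed additive constant $\mu k$; under the identification $\bb=\ph+\be$, the limit problem $(\star)$ is precisely the trimmed Lasso $\tla{\lambda,k}$. Combined with the equivalence of \eqref{eqn:unbpm} and \eqref{eqn:unbpm-aux}, this yields the proposition.

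The step I expect to require the most care is the passage to the limit: the pointwise limit of the separable penalty is the discontinuous $\ell_0$ functional, so $\lim_\gamma\min\neq\min\lim$ is a genuine concern, and the informal ``in the limit'' must be made precise. The cleanest resolution is to regard $(\star)$ itself as the object of interest (the $\gamma=\infty$ problem) and prove directly that it is a trimmed Lasso, as above; a fully rigorous convergence-of-minimizers statement would instead invoke epi-convergence of the objectives of \eqref{eqn:unbpm-aux} to that of $(\star)$, for which lower semicontinuity of $\|\cdot\|_0$ and coercivity (ensured, if needed, by an extra $\eta\|\bb\|_1$ regularizer as elsewhere in the paper) suffice. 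The remaining steps are the routine penalized-versus-constrained exchange and a direct appeal to Lemma \ref{lemma:vdrep}.
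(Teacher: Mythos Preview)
Your proposal is correct and follows essentially the same approach as the paper: pass to the $\gamma\to\infty$ limit in the augmented problem \eqref{eqn:unbpm-aux} to obtain $(\star)$, convert the $\ell_0$-penalized form to the $\ell_0$-constrained form for some $k$, and then invoke the split representation (Lemma \ref{lemma:vdrep} / Section \ref{ssec:vardecomp}) to identify it as $\tla{\lambda,k}$. Your write-up is more explicit than the paper's about the penalized-to-constrained exchange and appropriately flags the limit-interchange subtlety, but the argument is the same.
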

\begin{proof}
This can be shown in a straightforward manner: namely, as $\gamma\to\infty$, \eqref{eqn:unbpm-aux} becomes
\begin{equation*}
\min_{\ph,\be} \frac{1}{2} \|\y-\X(\ph+\be)\|_2^2 + \lambda\|\be\|_1+\mu \|\ph\|_0
\end{equation*}
which can be in turn written as 
\begin{equation*}
\min_{\substack{\ph,\be:\\\|\ph\|_0\leq k}} \frac{1}{2} \|\y-\X(\ph+\be)\|_2^2 + \lambda\|\be\|_1
\end{equation*}
for some $k\in\{0,1,\ldots,p\}$. But as per the observations of Section \ref{ssec:vardecomp}, this is exactly $\tla{\lambda,k}$ using a change of variables $\bb=\ph+\be$. In the case when $\lambda$ is sufficiently large, we necessarily have $\bb=\ph$ at optimality.
\end{proof}

While this result is not surprising (given that as $\gamma\to\infty$ the problem is \eqref{eqn:unbpm} is precisely penalized best subset selection), it is useful for illustrating the connection between \eqref{eqn:unbpm} and the trimmed Lasso problem even when the trimmed Lasso parameter $\lambda$ is not necessarily large: in particular, $\tla{\lambda,k}$ can be viewed as estimating $\bb$ as the sum of two components---a sparse component $\ph$ and small-norm (``noise'') component $\be$. Indeed, in this setup, $\lambda$ precisely controls the desirable level of allowed ``noise'' in $\bb$. From this intuitive perspective, it becomes clearer why the trimmed Lasso type approach represents a continuous connection between best subset selection ($\lambda$ large) and ordinary least squares ($\lambda$ small).

We close this section by making the following observation regarding problem \eqref{eqn:unbpm-aux}. In particular, observe that regardless of $\lambda$, we can rewrite this as
\begin{equation*}
\min_{\bb} \frac{1}{2} \|\y-\X\bb\|_2^2 + \sum_{i=1}^p \widetilde{\rho}(|\beta_i|)
\end{equation*}
where $\widetilde{\rho}(|\beta_i|)$ is the new penalty function defined as
$$\widetilde{\rho}(|\beta_i|) = \min_{\phi+\epsilon = \beta_i} \lambda|\epsilon| + \mu g(|\phi|;\gamma).$$
For the unbounded and concave penalty functions shown in Table \ref{tab:ncp}, this new penalty function is quasi-concave and can be rewritten easily in closed form. For example, for the $\ell_q$ penalty $\rho(|\beta_i|) = \mu|\beta_i|^{1/\gamma}$ (where $\gamma>1$), the new penalty function is
$$\widetilde{\rho}(|\beta_i|) = \min\{\mu|\beta_i|^{1/\gamma},\lambda|\beta_i|\}.$$


\section{Algorithmic Approaches}\label{sec:algs}

We now turn our attention to algorithms for estimation with the trimmed Lasso penalty. Our principle focus throughout will be the same problem considered in Theorem \ref{thm:exactEquiv}, namely
\begin{equation}\label{eqn:alg}
\ds\min_{\bb} \frac{1}{2}\|\y-\X\bb\|_2^2 + \lambda \tk{\bb} + \eta \|\bb\|_1
\end{equation}
We present three possible approaches to finding potential solutions to \eqref{eqn:alg}: a first-order-based alternating minimization scheme that has accompanying local optimality guarantees and was first studied in \cite{gotoh1,gotoh2}; an augmented Lagrangian approach that appears to perform noticeably better, despite lacking optimality guarantees; and a convex envelope approach. We contrast these methods with approaches for certifying global optimality of solutions to \eqref{eqn:alg} (described in \cite{thiao}) and include an illustrative computational example.
Implementations of the various algorithms presented can be found at
\begin{center}
\url{https://github.com/copenhaver/trimmedlasso}.
\end{center}

\subsection{Upper bounds via convex methods}\label{ssec:ub}

We start by focusing on the application of convex optimization methods to finding to finding potential solutions to \eqref{eqn:alg}. Technical details are contained in Appendix \ref{app:algsupp}.

\subsubsection*{Alternating minimization scheme}

We begin with a first-order-based approach for obtaining a locally optimal solution of \eqref{eqn:alg} as described in \cite{gotoh1,gotoh2}. The key tool in this approach is the theory of difference of convex optimization (``DCO'') \cite{anThesis,taoan97,dcSummary}. Set the following notation:
$$\begin{array}{lll}
f(\bb) &=& \|\y-\X\bb\|_2^2/2 + \lambda \tk{\bb}+ \eta \|\bb\|_1,\\
f_1(\bb) &=& \|\y-\X\bb\|_2^2/2 + (\eta+\lambda) \|\bb\|_1,\\
f_2(\bb) &=& \lambda \sum_{i=1}^k |\beta_{(i)}|.
\end{array}
$$
Let us make a few simple observations:
\begin{enumerate}[(a)]
\item Problem \eqref{eqn:alg} can be written as $\ds\min_\bb f(\bb)$.

\item For all $\bb$, $f(\bb) = f_1(\bb)-f_2(\bb)$.

\item The functions $f_1$ and $f_2$ are convex.
\end{enumerate}

While simple, these observations enable one to apply the theory of DCO, which focuses precisely on problems of the form
$$\min_{\bb} f_1(\bb)-f_2(\bb),$$
where $f_1$ and $f_2$ are convex. In particular, the optimality conditions for such a problem have been studied extensively \cite{dcSummary}. Let us note that while it may appear that the representation of the objective $f$ as $f_1-f_2$ might otherwise seem like an artificial algebraic manipulation, the min-min representation in Theorem \ref{thm:robeivInterp} shows how such a difference-of-convex representation can arise naturally.

We now discuss an associated alternating minimization scheme (or equivalently, a sequential linearization scheme), shown in Algorithm \ref{alg:1}, for finding local optima of \eqref{eqn:alg}. 
The convergence properties of Algorithm \ref{alg:1} can be summarized as follows:\footnote{To be entirely correct, this result holds for Algorithm \ref{alg:1} with a minor technical modification---see details in Appendix \ref{app:algsupp}.}

\begin{theorem}[\cite{gotoh1}, Convergence of Algorithm \ref{alg:1}]\label{thm:altConvProp}
\begin{enumerate}[(a)]
\item The sequence $\{f(\bb^\ell):\ell=0,1,\ldots\}$, where $\bb^\ell$ are as found in Algorithm \ref{alg:1}, is non-increasing.

\item The set $\{\bg^\ell: \ell=0,1,\ldots\}$ is finite and eventually periodic.

\item Algorithm \ref{alg:1} converges in a finite number of iterations to local minimum of \eqref{eqn:alg}.

\item The rate of convergence of $f(\bb^\ell)$ is linear.

\end{enumerate}
\end{theorem}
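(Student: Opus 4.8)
The plan is to recognize Algorithm~\ref{alg:1} as an instance of the classical difference-of-convex algorithm (DCA) applied to the decomposition $f=f_1-f_2$ recorded in observations (a)--(c), and then to exploit the \emph{polyhedrality} of the concave part $f_2$ to upgrade the generic DCA guarantees to the finite-convergence and linear-rate claims. I would write the iteration in its sequential-linearization form: at $\bb^\ell$ select a subgradient $\bg^\ell\in\sd f_2(\bb^\ell)$, linearize $-f_2$ there, and set $\bb^{\ell+1}\in\argmin_\bb f_1(\bb)-\langle\bg^\ell,\bb\rangle$. Part (a) is then the standard one-line descent estimate: convexity of $f_2$ gives $f_2(\bb^{\ell+1})\ge f_2(\bb^\ell)+\langle\bg^\ell,\bb^{\ell+1}-\bb^\ell\rangle$, so
\begin{equation*}
f(\bb^{\ell+1})=f_1(\bb^{\ell+1})-f_2(\bb^{\ell+1})\le\big(f_1(\bb^{\ell+1})-\langle\bg^\ell,\bb^{\ell+1}\rangle\big)-f_2(\bb^\ell)+\langle\bg^\ell,\bb^\ell\rangle,
\end{equation*}
and since $\bb^{\ell+1}$ minimizes $f_1(\cdot)-\langle\bg^\ell,\cdot\rangle$ the bracketed term is at most $f_1(\bb^\ell)-\langle\bg^\ell,\bb^\ell\rangle$, whence the right-hand side collapses to $f(\bb^\ell)$. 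Thus $\{f(\bb^\ell)\}$ is non-increasing; because $\eta>0$ makes $f$ coercive and bounded below by $0$, the value sequence also converges.

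For part (b), the key structural fact is that $f_2(\bb)=\lambda\sum_{i=1}^k|\beta_{(i)}|$ is polyhedral, so $\sd f_2(\bb)$ is always a face of one fixed polytope whose vertices are exactly the vectors with entries in $\{-\lambda,0,\lambda\}$ encoding a choice of a size-$k$ ``top'' support together with signs. If the scheme selects $\bg^\ell$ as a vertex of $\sd f_2(\bb^\ell)$ (this is the minor normalization flagged in the footnote), then each $\bg^\ell$ lies in this finite set, of cardinality at most $\binom{p}{k}2^k$. With a fixed tie-breaking rule in the $\argmin$ defining $\bb^{\ell+1}$, the composite map $\bg^\ell\mapsto\bb^{\ell+1}\mapsto\bg^{\ell+1}$ is a deterministic self-map of a finite set, so by the pigeonhole principle the orbit $\{\bg^\ell\}$ is eventually periodic.

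For part (c) I would combine (a) and (b). Eventual periodicity of $\{\bg^\ell\}$ forces the objective, which is a deterministic function of $\bg^\ell$ through $\bb^{\ell+1}$, to repeat along the period; monotonicity then forces $f(\bb^\ell)$ to be eventually constant, so the iterates lie in a cycle of equal-objective points, and strict improvement of the surrogate whenever $\bb^{\ell+1}\neq\bb^\ell$ collapses any such cycle to a genuine fixed point $\bb^*$ reached in finitely many steps. At $\bb^*$ one has simultaneously $\bg^*\in\sd f_2(\bb^*)$ and $\bg^*\in\sd f_1(\bb^*)$. \textbf{The main obstacle} is to pass from this DC-criticality to genuine \emph{local} optimality of $f$, since criticality is strictly weaker in general. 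Here polyhedrality is decisive: when the top-$k$ magnitudes are strictly separated, $|\beta^*_{(k)}|>|\beta^*_{(k+1)}|$, the function $f_2$ is affine on a neighborhood of $\bb^*$, so locally $f=f_1-\langle\bg^*,\cdot\rangle+\text{const}$ is convex, and $\bb^*$---being a global minimizer of that surrogate---is a local minimizer of $f$. The degenerate case of ties or zeros in the critical positions is precisely what the footnote's technical modification addresses, by choosing the vertex $\bg^*$ so that it still certifies local optimality; verifying this carefully is the most delicate point of the argument.

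Finally, for part (d), once finite termination in (c) is established the value sequence $\{f(\bb^\ell)\}$ is eventually constant, so a linear rate is immediate. Alternatively, and without invoking finiteness, I would observe that $f$ is piecewise-linear-quadratic and hence satisfies a Kurdyka--{\L}ojasiewicz inequality with exponent $1/2$; the general convergence-rate theory for DCA under such an error bound then yields a linear rate for the objective values directly.
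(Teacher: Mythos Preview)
Your overall strategy---recognize the iteration as DCA applied to $f=f_1-f_2$ and exploit the polyhedrality of $f_2$---is exactly what the paper does; its proof simply invokes \cite[Thms.~3--5]{taoan97} together with the modification spelled out in Appendix~\ref{app:algsupp}. Parts (a), (b), and (d) of your sketch are correct and match the standard argument.

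The gap is in part (c). You correctly flag that DC-criticality $\bg^*\in\sd f_1(\bb^*)\cap\sd f_2(\bb^*)$ is strictly weaker than local optimality, and that the footnote's modification is what closes the gap. But you misdescribe what the modification is. It is \emph{not} a post-hoc choice of the vertex $\bg^*$ at the fixed point ``so that it still certifies local optimality.'' Rather (see Appendix~\ref{app:algsupp}), the modification is applied at \emph{every} iterate: whenever $\sd f_2(\bb^\ell)\not\sub\sd f_1(\bb^\ell)$, one deliberately selects $\bg^\ell\in\sd f_2(\bb^\ell)\setminus\sd f_1(\bb^\ell)$. Since then $\bg^\ell\notin\sd f_1(\bb^\ell)$, the point $\bb^\ell$ is \emph{not} a minimizer of $f_1(\cdot)-\langle\bg^\ell,\cdot\rangle$, so the surrogate strictly decreases and your descent estimate becomes strict: $f(\bb^{\ell+1})<f(\bb^\ell)$. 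Consequently the stopping criterion $f(\bb^{\ell+1})=f(\bb^\ell)$ can be met only when $\sd f_2(\bb^\ell)\sub\sd f_1(\bb^\ell)$, which by Theorem~\ref{thm:optCharac}(a) is precisely local optimality. Combined with the finiteness of the vertex set from (b), strict descent until local optimality yields finite termination at a local minimum. Your ``affine on a neighborhood'' argument covers only the generic case where $\sd f_2(\bb^*)$ is a singleton; in the degenerate case it offers no mechanism, and the correct mechanism is this strict-descent selection rule during the run, not a choice made at $\bb^*$ after the fact.
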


\begin{algorithm}[ht]
\begin{enumerate}
\item Initialize with any $\bb^{0}\in\R^p$ ($\ell = 0$); for $\ell \geq 0$, repeat Steps 2-3 until $f(\bb^\ell) = f(\bb^{\ell+1})$.

\item Compute $\bg^\ell$ as
\begin{equation}\label{eqn:wrtbg}
\bg^\ell \in\begin{array}{ll}
\underset{\bg}{\operatorname{argmax}} & \langle \bg,\bb^\ell\rangle\\
\st& \ds\sum_i |\gamma_i| \leq \lambda k\\
& \ds|\gamma_i|\leq \lambda\;\forall i.
\end{array}
\end{equation}

\item Compute $\bb^{\ell+1}$ as 
\begin{equation}
\bb^{\ell+1} \in\underset{\bb}{\operatorname{argmin }} \;  \frac{1}{2}\|\y-\X\bb\|_2^2 +(\eta+\lambda)\|\bb\|_1 - \langle\bb,\bg^\ell\rangle.\label{eqn:wrtbb}
\end{equation}

%

\end{enumerate}
\caption{An alternating scheme for computing a local optimum to \eqref{eqn:alg}}\label{alg:1}
\end{algorithm}

\begin{obs}
Let us return to a remark that preceded Algorithm \ref{alg:1}. In particular, we noted that Algorithm \ref{alg:1} can also be viewed as a sequential linearization approach to solving \eqref{eqn:alg}. Namely, this corresponds to sequentially performing a linearization of $f_2$ (and leaving $f_1$ as is), and then solving the new convex linearized problem.

Further, let us note why we refer to Algorithm \ref{alg:1} as an alternating minimization scheme. In particular, in light of the reformulation \eqref{eqn:mainReform} of \eqref{eqn:alg}, we can rewrite \eqref{eqn:alg} exactly as 
$$\eqref{eqn:alg} = \begin{array}{ll}
\ds\min_{\bb,\bg} & f_1(\bb) - \langle\bg,\bb\rangle\\
\st& \ds\sum_i |\gamma_i| \leq \lambda k\\
& \ds|\gamma_i|\leq \lambda\;\forall i.
\end{array}$$
In this sense, if one takes care in performing alternating minimization in $\bb$ (with $\bg$ fixed) and in $\bg$ (with $\bb$ fixed) (as in Algorithm \ref{alg:1}), then a locally optimal solution is guaranteed.
\end{obs}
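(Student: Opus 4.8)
The plan is to reduce both descriptions of Algorithm~\ref{alg:1} --- the sequential-linearization view and the alternating-minimization view --- to a single variational identity for the convex function $f_2$. Since $\tk{\bb} = \|\bb\|_1 - \sum_{i=1}^k|\beta_{(i)}|$, we have $f_2(\bb) = \lambda\sum_{i=1}^k|\beta_{(i)}|$, the (scaled) sum of the $k$ largest-magnitude entries of $\bb$. The first and central step is to establish the support-function representation
\[
f_2(\bb) = \max_{\bg\in G}\ \langle\bg,\bb\rangle, \qquad G := \Big\{\bg : \textstyle\sum_i|\gamma_i|\leq\lambda k,\ |\gamma_i|\leq\lambda\ \forall i\Big\},
\]
where $G$ is exactly the feasible set of \eqref{eqn:wrtbg}. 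I would prove this by bounding $\langle\bg,\bb\rangle \leq \sum_i|\gamma_i||\beta_i|$ (with equality attainable by sign-matching $\gamma_i = \sgn(\beta_i)|\gamma_i|$), and then observing that $\max\{\sum_i w_i|\beta_i| : 0\leq w_i\leq\lambda,\ \sum_i w_i\leq\lambda k\}$ is a fractional-knapsack linear program whose optimum places the full weight $\lambda$ on the $k$ largest $|\beta_i|$ and zero on the rest, yielding $\lambda\sum_{i=1}^k|\beta_{(i)}| = f_2(\bb)$.

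With this identity in hand the alternating-minimization claim is immediate. Using the decomposition $f = f_1 - f_2$ noted before Algorithm~\ref{alg:1} and substituting the representation of $f_2$ gives
\[
f(\bb) = f_1(\bb) - \max_{\bg\in G}\langle\bg,\bb\rangle = \min_{\bg\in G}\big(f_1(\bb) - \langle\bg,\bb\rangle\big),
\]
so that $\min_\bb f(\bb) = \min_{\bb,\bg\in G}\big(f_1(\bb) - \langle\bg,\bb\rangle\big)$, which is precisely the joint reformulation \eqref{eqn:mainReform}. I would then match the two block updates: minimizing the joint objective over $\bg$ with $\bb$ fixed returns a maximizer of $\langle\bg,\bb\rangle$ over $G$, which is exactly Step~2 (equation \eqref{eqn:wrtbg}); and minimizing over $\bb$ with $\bg$ fixed, after expanding $f_1(\bb)=\|\y-\X\bb\|_2^2/2 + (\eta+\lambda)\|\bb\|_1$, is exactly Step~3 (equation \eqref{eqn:wrtbb}). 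Hence Algorithm~\ref{alg:1} is block-coordinate descent on \eqref{eqn:mainReform}.

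For the sequential-linearization (difference-of-convex) view, I would invoke the standard convex-analysis fact that the subdifferential of a support function is its exposed face, i.e. $\partial f_2(\bb) = \argmax_{\bg\in G}\langle\bg,\bb\rangle$. Linearizing the convex $f_2$ at the current iterate $\bb^\ell$ means replacing it by the affine minorant $f_2(\bb^\ell) + \langle\bg^\ell,\bb-\bb^\ell\rangle$ with $\bg^\ell\in\partial f_2(\bb^\ell)$; selecting this subgradient is exactly Step~2, and minimizing $f_1(\bb) - \langle\bg^\ell,\bb\rangle$ (after discarding additive constants) is exactly Step~3. Thus the two descriptions of Algorithm~\ref{alg:1} coincide.

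The main obstacle is the first step: pinning down the support-function representation with the precise constraint set of \eqref{eqn:wrtbg}, and, for the linearization view, verifying the subdifferential-equals-argmax identification at every iterate. This requires care at points where the $k$th and $(k+1)$th largest magnitudes coincide, since there $\partial f_2$ is genuinely set-valued --- which is exactly why Algorithm~\ref{alg:1} is stated with ``$\in$'' rather than equality in Step~2. Everything else (the $f = f_1 - f_2$ decomposition, the expansion of $f_1$, and the dropping of constants in the $\bb$-update) is routine bookkeeping.
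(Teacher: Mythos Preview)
Your proposal is correct and takes essentially the same approach as the paper: the observation in the paper is not given a formal proof but is simply asserted via the displayed reformulation, and the key underlying fact---that $f_2(\bb)=\max_{\bg\in G}\langle\bg,\bb\rangle$ with $G$ the feasible set of \eqref{eqn:wrtbg}---is exactly what you establish via the sign-matching and fractional-knapsack argument. One minor point: the label \eqref{eqn:mainReform} in the paper actually refers to the binary-$\zz$ formulation from the proof of Theorem~\ref{thm:exactEquiv}, not directly to the $\bg$-based joint problem you derive, so your phrase ``which is precisely the joint reformulation \eqref{eqn:mainReform}'' slightly overstates the match; the paper's reference to \eqref{eqn:mainReform} is only meant as motivation for introducing auxiliary variables, and your derivation of the $\bg$-formulation is in fact more direct than the paper's exposition.
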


We now turn to how to actually apply Algorithm \ref{alg:1}. Observe that the algorithm is quite simple; in particular, it only requires solving two types of well-structured convex optimization problems. The first such problem, for a fixed $\bb$, is shown in \eqref{eqn:wrtbg}. This can be solved in closed form by simply sorting the entries of $|\bb|$, i.e., by finding $|\beta_{(1)}|,\ldots,|\beta_{(p)}|$. 
The second subproblem, shown in \eqref{eqn:wrtbb} for a fixed $\bg$, is precisely the usual Lasso problem and is amenable to any of the possible algorithms for the Lasso \cite{tibshirani,lars,hastie}.

\subsubsection*{Augmented Lagrangian approach}

We briefly mention another technique for finding potential solutions to \eqref{eqn:alg} using an Alternating Directions Method of Multiplers (ADMM) \cite{admm} approach. To our knowledge, the application of ADMM to the trimmed Lasso problem is novel, although it appears closely related to \cite{admmteng}. 
We begin by observing that \eqref{eqn:alg} can be written exactly as
$$\begin{array}{ll}
\ds\min_{\bb,\bg}& \frac{1}{2}\left\|\y-\X\bb\right\|_2^2 + \eta\left\|\bb\right\|_1 + \lambda \tk{\bg} \\
\st & \bb=\bg,
\end{array}$$
which makes use of the canonical variable splitting. Introducing dual variable $\mb q\in\R^p$ and parameter $\sigma>0$, this becomes in augmented Lagrangian form
\begin{align}
\ds\min_{\bb,\bg} \max_{\mb q}\;&\frac{1}{2}\left\|\y-\X\bb\right\|_2^2 + \eta\left\|\bb\right\|_1 +\lambda \tk{\bg} + \nonumber\\
& \langle \mb q, \bb-\bg\rangle + \frac{\sigma}{2}\left\|\bb-\bg\right\|_2^2.\label{eqn:admm}
\end{align}

The utility of such a reformulation is that it is directly amenable to ADMM, as detailed in Algorithm \ref{alg:admm}. While the problem is nonconvex and therefore the ADMM is not guaranteed to converge, numerical experiments suggest that this approach has superior performance to the DCO-inspired method considered in Algorithm \ref{alg:1}.

We close by commenting on the subproblems that must be solved in Algorithm \ref{alg:admm}. Step 2 can be carried out using ``hot'' starts. Step 3 is the solution of the trimmed Lasso in the orthogonal design case and can be solved by performed by sorting $p$ numbers; see Appendix \ref{app:algsupp}.

\begin{algorithm}[ht]
\begin{enumerate}
\item Initialize with any $\bb^0,\bg^0,\mb q^0 \in\R^p$ and $\sigma>0$. Repeat, for $\ell\geq 0$,  Steps 2, 3, and 4 until a desired numerical
convergence tolerance is satisfied.

\item Set
\begin{align*}
\bb^{\ell+1} \in  \ds\underset{\bb}{\operatorname{argmin}}\; &\frac{1}{2}\|\y-\X\bb\|_2^2+\eta\|\bb\|_1\;+\\
&\langle \mb q^\ell, \bb\rangle + \frac{\sigma}{2}\|\bb-\bg^\ell\|_2^2.
\end{align*}

\item Set
$$\bg^{\ell+1} \in\underset{\bg}{\operatorname{argmin} } \;\lambda \tk{\bg} + \frac{\sigma}{2}\|\bb^{\ell+1}-\bg\|_2^2  - \langle \mb q^\ell,\bg\rangle.$$

\item Set $\mb q^{\ell+1}  = \mb q^{\ell} + \sigma\left(\bb^{\ell+1} - \bg^{\ell+1}\right)$. 

\end{enumerate}
\caption{ADMM algorithm for \eqref{eqn:admm}}\label{alg:admm}
\end{algorithm}

\subsubsection*{Convexification approach}\label{ssec:convenv}

We briefly consider the convex relaxation of the problem \eqref{eqn:alg}. We begin by computing the convex envelope \cite{rockafeller,BV2004} of $\tke$ on $[-1,1]^p$ (here the choice of $[-1,1]^p$ is standard, such as in the convexification of $\ell_0$ over this set which leads to $\ell_1$). The proof follows standard techniques (e.g. computing the biconjugate\cite{rockafeller}) and is omitted.

\begin{lemma}\label{lem:convenv}
The convex envelope of $\tke$ on $[-1,1]^p$ is the function $\overline{\tke}$ defined as
$$\overline{\tke}(\bb) = \left(\|\bb\|_1-k\right)_+.$$
\end{lemma}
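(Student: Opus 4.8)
The goal is to compute the convex envelope (biconjugate) of $\tke$ restricted to the box $[-1,1]^p$ and show it equals $\left(\|\bb\|_1 - k\right)_+$. The plan is to work with the convex biconjugate via the Fenchel conjugate, as the convex envelope of a function $h$ on a convex set $C$ equals $\left(h + \iota_C\right)^{**}$, where $\iota_C$ is the indicator function of $C$. First I would set $h(\bb) = \tke(\bb) + \iota_{[-1,1]^p}(\bb)$ and compute its conjugate $h^*(\vv) = \sup_{\bb \in [-1,1]^p} \left( \langle \vv, \bb \rangle - \tke(\bb) \right)$. Because $\tke(\bb) = \sum_{i>k} |\beta_{(i)}|$ depends only on the sorted absolute values, I expect the supremum to decouple nicely once I exploit sign and permutation symmetry: the optimal $\bb$ will satisfy $\sgn(\beta_i) = \sgn(v_i)$, so I can assume $\vv \geq \mb 0$ and $\bb \geq \mb 0$ and reduce to $\sup_{\bb \in [0,1]^p}\left( \langle \vv, \bb\rangle - \sum_{i>k}\beta_{(i)}\right)$.

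The key computation is evaluating this conjugate. Using the representation from Lemma~\ref{lemma:miprep}, namely $\tke(\bb) = \min_{|I| = p-k} \sum_{i \in I} |\beta_i|$, or equivalently the SLOPE-type identity $\|\bb\|_1 - \tke(\bb) = \sum_{i=1}^k |\beta_{(i)}|$, I would rewrite the objective and carry out the box-constrained maximization. I expect to find that $h^*(\vv) = \left( \sum_{i=1}^p \left(|v_i| - 1\right)_+ \right) + k \cdot (\,\cdot\,)$-type terms, but the cleaner route is to directly recognize the structure: the sorted penalty means each coordinate is either ``free'' (among the top $k$, unpenalized, so pushed to the box boundary whenever $v_i > 0$) or penalized at rate $1$ (contributing $(|v_i| - 1)_+$ after optimizing $\beta_i \in [0,1]$). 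I would then conjecture the closed form of $h^*$ and verify that its conjugate $h^{**}$ recovers $\left(\|\bb\|_1 - k\right)_+$ on $[-1,1]^p$.

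An alternative and perhaps cleaner approach avoids conjugates entirely: I would directly verify that $g(\bb) := \left(\|\bb\|_1 - k\right)_+$ is the convex envelope by checking the two defining properties. First, $g$ is convex (a nondecreasing convex function $(\cdot)_+$ composed with the convex function $\|\bb\|_1 - k$) and satisfies $g \leq \tke$ on $[-1,1]^p$; the inequality follows since on the box $\|\bb\|_1 = \sum_i |\beta_i| = \tke(\bb) + \sum_{i=1}^k |\beta_{(i)}| \leq \tke(\bb) + k$, so $\|\bb\|_1 - k \leq \tke(\bb)$ and hence $g(\bb) = (\|\bb\|_1 - k)_+ \leq \tke(\bb)$ using $\tke \geq 0$. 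Second, I would show $g$ is the \emph{largest} such convex minorant by exhibiting, at each point, a supporting affine function of $\tke$ that agrees with $g$, or equivalently by showing $g^{**} = g$ and that no larger convex function stays below $\tke$ on the box.

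The main obstacle I anticipate is the bookkeeping in the conjugate computation: correctly handling the interaction between the sorting operation in $\tke$ and the box constraint, and ensuring the sign/permutation reductions are rigorous rather than merely plausible. The sorted structure makes the maximization over $\bb$ a combinatorial-looking problem (which $k$ coordinates are ``unpenalized''), and one must argue the greedy choice---assigning the unpenalized slots to the coordinates with largest $|v_i|$---is optimal. This is the step where the tight inequality $\|\bb\|_1 - k \leq \tke(\bb)$ on the box becomes essential, and verifying that equality is attained on a dense enough set of directions (so that no strictly larger convex function can fit underneath) is the crux. Given the paper states the proof ``follows standard techniques (e.g. computing the biconjugate),'' I expect the direct minorant verification above to be the most economical writeup.
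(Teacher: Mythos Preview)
Your proposal aligns with the paper's own approach: the paper omits the proof entirely, stating only that it ``follows standard techniques (e.g.\ computing the biconjugate).'' Your biconjugate route is therefore exactly what the paper has in mind, and your alternative direct-minorant verification is a legitimate (and arguably cleaner) variant.

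One point worth flagging: your lower bound $g(\bb)=(\|\bb\|_1-k)_+\leq\tke(\bb)$ on $[-1,1]^p$ is fully justified, but the tightness direction---showing $g$ is the \emph{largest} convex minorant---is left as a plan rather than an argument. Since the paper gives no details either, here is one clean way to close that gap without computing conjugates. By sign symmetry reduce to $\bb\in[0,1]^p$. If $\|\bb\|_1\leq k$, the classical fact that the polytope $\{\x\in[0,1]^p:\mb 1'\x\leq k\}$ has $0/1$ extreme points with at most $k$ ones lets you write $\bb$ as a convex combination of $k$-sparse vectors, each with $\tke=0=g(\bb)$. If $\|\bb\|_1>k$, set $t=(\|\bb\|_1-k)/(p-k)\in(0,1]$ and write $\bb=t\,\bb^{(1)}+(1-t)\,\bb^{(0)}$ where $\bb^{(0)}=(\bb-t\mb 1)/(1-t)$ when $t<1$. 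One checks $\bb^{(0)}\in\R^p$ with $\mb 1'\bb^{(0)}=k$; it need not lie in $[0,1]^p$, so instead move along the segment from $\bb$ toward a vertex of $[0,1]^p$ chosen coordinatewise and iterate (a Carath\'eodory-type peeling). Either way, the essential observation you already made---that $\tke$ and $g$ agree on points whose top $k$ magnitudes equal $1$ and on $k$-sparse points---is exactly what drives tightness, and your identification of this as ``the crux'' is correct.
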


In words, the convex envelope of $\tke$ is a ``soft thresholded'' version of the Lasso penalty (thresholded at level $k$). This can be thought of as an alternative way of interpreting the name ``trimmed Lasso.''

As a result of Lemma \ref{lem:convenv}, it follows that the convex analogue of \eqref{eqn:alg}, as taken over $[-1,1]^p$, is precisely
\begin{equation}\label{eqn:convenv}
\min_{\bb} \frac{1}{2}\|\y-\X\bb\|_2^2 + \eta\|\bb\|_1 + \lambda\left(\|\bb\|_1-k\right)_+.
\end{equation}
Problem \eqref{eqn:convenv} is amenable to a variety of convex optimization techniques such as subgradient descent \cite{BV2004}.

\subsection{Certificates of optimality for \eqref{eqn:alg}}\label{ssec:mio}

We close our discussion of the algorithmic implications of the trimmed Lasso by discussing techniques for finding certifiably optimal solutions to \eqref{eqn:alg}. All approaches presented in the preceding section find potential candidates for solutions to \eqref{eqn:alg}, but none is necessarily globally optimal. Let us return to a representation of \eqref{eqn:alg} that makes use Lemma \ref{lemma:miprep}:
\begin{equation*}
\begin{array}{ll}
\ds\min_{\bb,\zz} & \frac{1}{2}\|\y-\X\bb\|_2^2 + \eta\|\bb\|_1 + \lambda\langle\zz,|\bb|\rangle\\
\st& \ds\sum_i z_i=p-k\\
&\zz\in\{0,1\}^p.
\end{array}
\end{equation*}
As noted in \cite{gotoh1}, this representation of \eqref{eqn:alg} is amenable to mixed integer optimization (``MIO'') methods \cite{bonami} for finding globally optimal solutions to \eqref{eqn:alg}, in the same spirit as other MIO-based approaches to statistical problems \cite{bmlqs,bkm}.

One approach, as described in \cite{thiao}, uses the notion of ``big $M$.'' In particular, for $M>0$ sufficiently large, problem \eqref{eqn:alg} can be written exactly as the following linear MIO problem:

\begin{equation}\label{eqn:pf7}
\begin{array}{ll}
\ds\min_{\bb,\zz,\mb a} &\ds \frac{1}{2} \|\y-\X\bb\|_2^2 + \eta\|\bb\|_1 + \lambda\sum_i a_i\\
\st& \ds\sum_i z_i=p-k\\
&\zz\in\{0,1\}^p\\
& \mb a \geq \bb + M\zz- M\mb 1\\
& \mb a \geq -\bb + M\zz- M\mb 1\\
& \mb a\geq \mb0.
\end{array}
\end{equation}
This representation as a linear MIO problem enables the direct application of numerous existing MIO algorithms (such as \cite{gurobi}).\footnote{There are certainly other possible representations of \eqref{eqn:mainReform}, such as using special ordered set (SOS) constraints, see e.g. \cite{bkm}. Without more sophisticated tuning of $M$ as in \cite{bkm}, the SOS formulations appear to be vastly superior in terms of time required to prove optimality. The precise formulation essentially takes the form of problem \eqref{eqn:BSSr}. An SOS-based implementation is provided in the supplementary code as the default method of certifying optimality.} Also, let us note that the linear relaxation of \eqref{eqn:pf7}, i.e., problem \eqref{eqn:pf7} with the constraint $\zz\in\{0,1\}^p$ replaced with $\zz\in[0,1]^p$, is the problem
\begin{equation*}
\ds\min_{\bb}  \frac{1}{2} \|\y-\X\bb\|_2^2 + \eta\|\bb\|_1 + \lambda \left(\|\bb\|_1-Mk\right)_+,
\end{equation*}
where we see the convex envelope penalty appear directly. As such, when $M$ is large, the linear relaxation of \eqref{eqn:pf7} is the ordinary Lasso problem $\min_{\bb}  \frac{1}{2} \|\y-\X\bb\|_2^2 + \eta\|\bb\|_1$.

\subsection{Computational example}

Because a rigorous computational comparison is not the primary focus of this paper, we provide a limited demonstration that describes the behavior of solutions to \eqref{eqn:alg} as computed via the different approaches. Precise computational details are contained in Appendix \ref{app:compdetail}. We will focus on two different aspects: sparsity and approximation quality.

\subsubsection*{Sparsity properties}

As the motivation for the trimmed Lasso is ostensibly sparse modeling, its sparsity properties  are  of particular interest. We consider a problem instance with $p=20$, $n=100$, $k=2$, and signal-to-noise ratio 10 (the sparsity of the ground truth model $\bb_\text{true}$ is $10$). The relevant coefficient profiles as a function of $\lambda$ are shown in Figure \ref{fig:coeffpath}. In this example none of the convex approaches finds the optimal two variable solution computed using mixed integer optimization. Further, as one would expect \emph{a priori}, the optimal coefficient profiles (as well as the ADMM profiles) are not continuous in $\lambda$. Finally, note that by design of the algorithms, the alternating minimization and ADMM approaches yield solutions with sparsity at most $k$ for $\lambda$ sufficiently large.

\begin{figure}
\centering
\includegraphics[width=.8\textwidth]{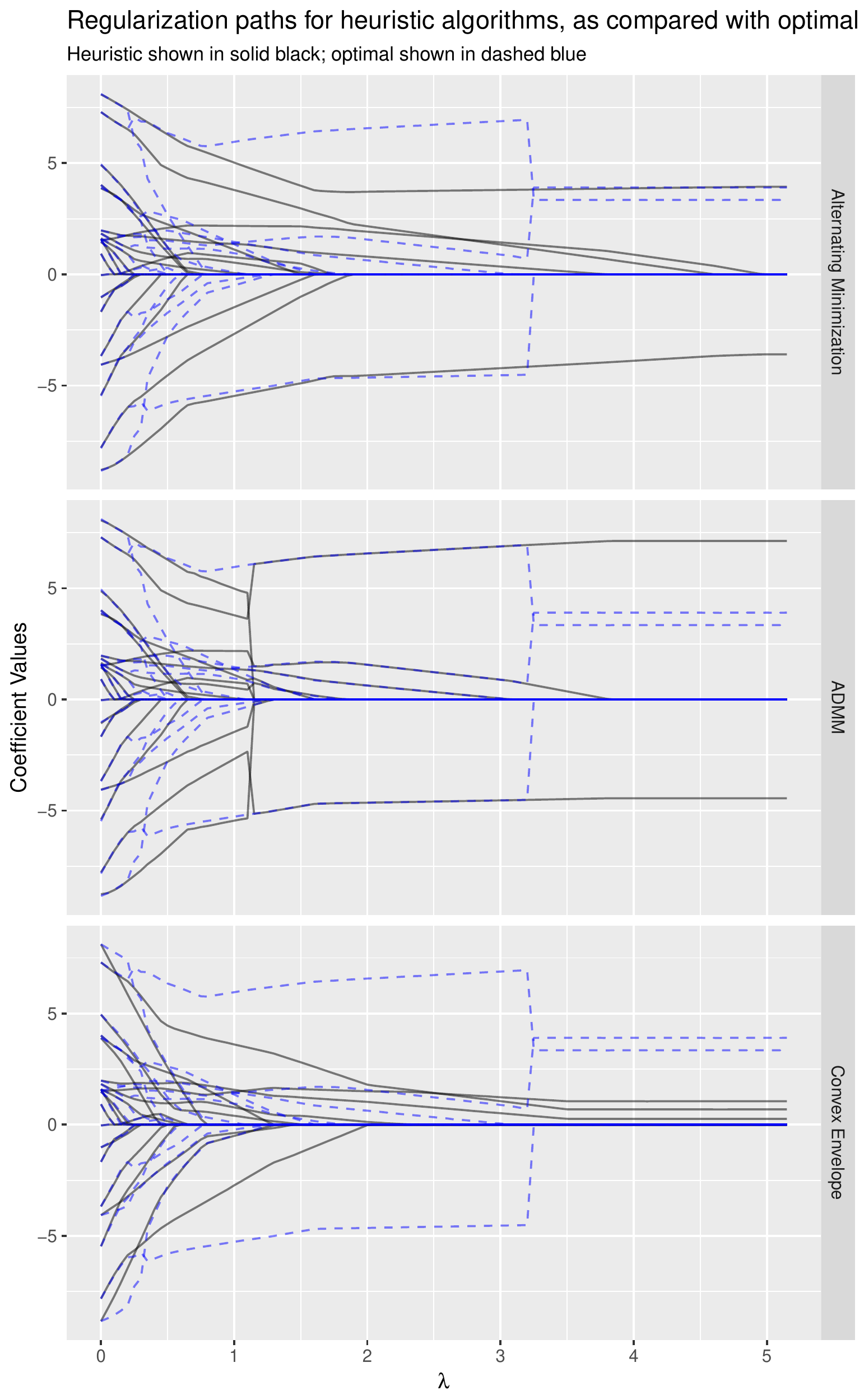}
\caption{}
\label{fig:coeffpath}
\end{figure}

\subsubsection*{Optimality gap}

Another critical question is the degree of suboptimality of solutions found via the convex approaches. We average optimality gaps across 100 problem instances with $p=20$, $n=100$, and $k=2$; the relevant results are shown in Figure \ref{fig:optgap}. The results are entirely as one might expect. When $\lambda$ is small and the problem is convex or nearly convex, the heuristics perform well. However, this breaks down as $\lambda$ increases and the sparsity-inducing nature of the trimmed Lasso penalty comes into play. Further, we see that the convex envelope approach tends to perform the worst, with the ADMM performing the best of the three heuristics. This is perhaps not surprising, as any solution found via the ADMM can be guaranteed to be locally optimal by subsequently applying the alternating minimization scheme of Algorithm \ref{alg:1} to any solution found via Algorithm \ref{alg:admm}.

\begin{figure}
\centering
\includegraphics[width=.8\textwidth]{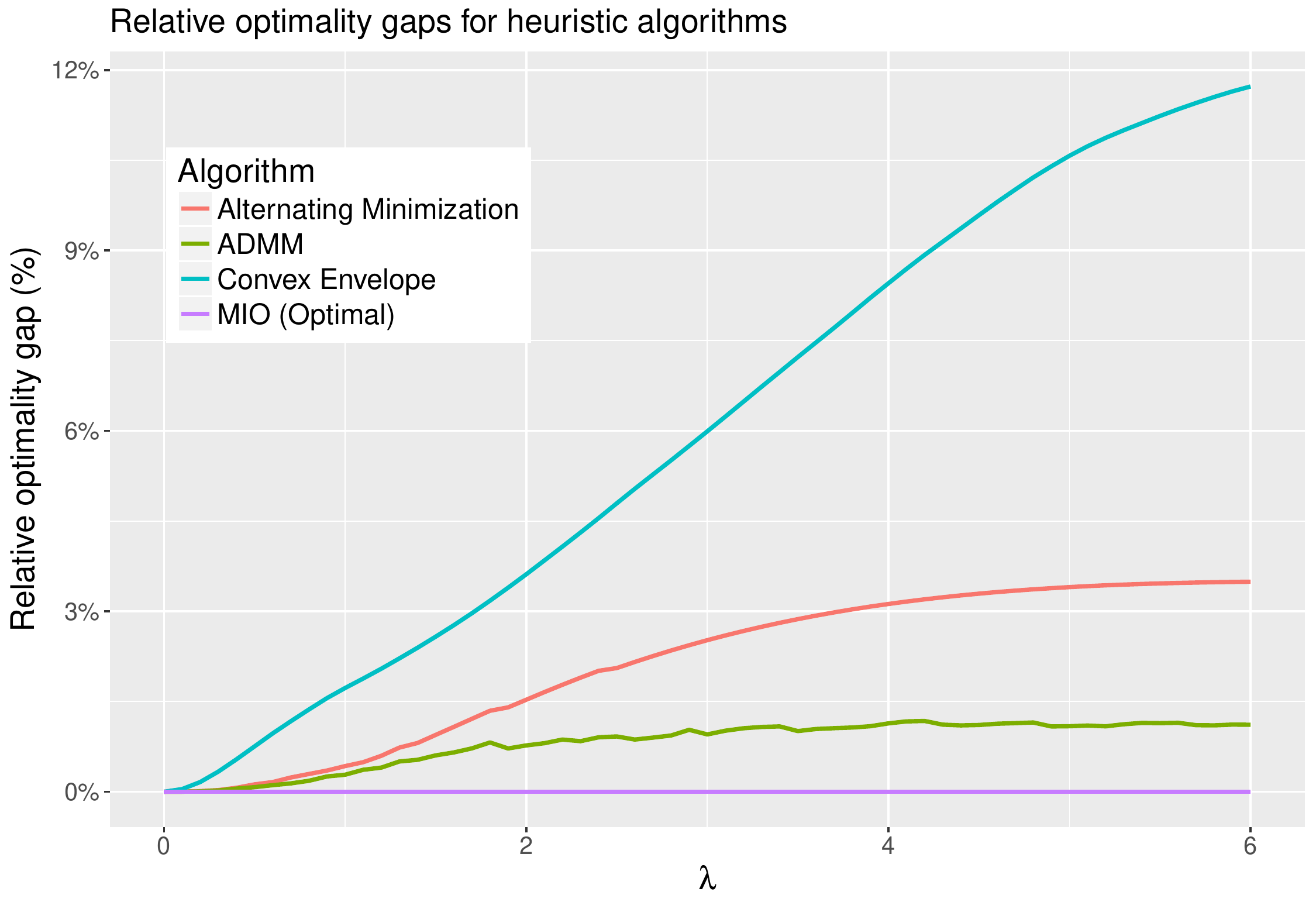}
\caption{}
\label{fig:optgap}
\end{figure}

\subsubsection*{Computational burden}

Loosely speaking, the heuristic approaches all carry a similar computational cost per iteration, namely, solving a Lasso-like problem. In contrast, the MIO approach can take significantly more computational resources. However, by design, the MIO approach maintains a suboptimality gap throughout computation and can therefore be terminated, before optimality is certified, with a certificate of suboptimality. We do not consider any empirical analysis of runtime here.

\subsubsection*{Other considerations}

There are other additional computational considerations that are potentially of interest as well, but they are primarily beyond the scope of the present work. For example, instead of considering optimality purely in terms of objective values in \eqref{eqn:alg}, there are other critical notions from a statistical perspective (e.g. ability to recover true sparse models and performance on out-of-sample data) that would also be necessary to consider across the multiple approaches.


\section{Conclusions}\label{sec:conc}

In this work, we have studied the trimmed Lasso, a nonconvex adaptation of Lasso that acts as an exact penalty method for best subset selection. Unlike some other approaches to exact penalization which use coordinate-wise separable functions, the trimmed Lasso offers direct control of the desired sparsity $k$. Further, we emphasized the interpretation of the trimmed Lasso from the perspective of robustness. In doing so, we provided contrasts with the SLOPE penalty as well as comparisons with estimators from the robust statistics and total least squares literature.

We have also taken care to contextualize the trimmed Lasso within the literature on nonconvex penalized estimation approaches to sparse modeling, showing that penalties like the trimmed Lasso can be viewed as a generalization of such approaches in the case when the penalty function is bounded. In doing so, we also highlighted how precisely the problems were related, with a complete characterization given in the case of the clipped Lasso.

Finally, we have shown how modern developments in optimization can be brought to bear for the trimmed Lasso to create convex optimization optimization algorithms that can take advantage of the significant developments in algorithms for Lasso-like problems in recent years.

Our work here raises many interesting questions about further properties of the trimmed Lasso and the application of similar ideas in other settings. We see two particularly noteworthy directions of focus: algorithms and statistical properties. For the former, we anticipate that an approach like trimmed Lasso, which leads to relatively straightforward algorithms that use close analogues from convex optimization, is  simple to interpret and to implement. At the same time, the heuristic approaches to the trimmed Lasso presented herein carry no more of a computational burden than solving convex, Lasso-like problems. On the latter front, we anticipate that a deeper analysis of the statistical properties of estimators attained using the trimmed Lasso would help to illuminate it in its own right while also further connecting it to existing approaches in the statistical estimation literature.


\begin{appendices}

\section{General min-max representation of SLOPE}\label{app:slope}

For completeness, in this appendix we include the more general representation of the SLOPE penalty $R_{\textsc{SLOPE}(\mb w)}$ in the same spirit of Proposition \ref{prop:slope}. Here we work with SLOPE in its most general form, namely,
$$R_{\textsc{SLOPE}(\mb w)}(\bb) = \sum_{i=1}^p w_i |\beta_{(i)}|, $$
where $\mb w$ is a (fixed) vector of weights with $w_1\geq w_2\geq \cdots\geq w_p\geq 0$ and $w_1>0$.

To describe the general min-max representation, we first set some notation. For a matrix $\D\in\R^{n\times p}$, we let $\bs\nu(\D)\in\R^p$ be the vector $(\|\D_1\|_2,\ldots,\|\D_p\|_2)$ with entries sorted so that $\nu_1\geq \nu_2\geq \cdots \geq \nu_p$. As usual, for two vectors $\x$ and $\y$, we use $\x\leq\y$ to denote that coordinate-wise inequality holds. With this notation, we have the following:
\begin{proposition}\label{prop:slopefullgenerality}
Problem \eqref{eqn:roprimitive} with uncertainty set
$$\U_\w =\left\{\D : \bs\nu(\D) \leq \mb w \right\}
$$
is equivalent to problem \eqref{eqn:a1} with $R(\bb)=R_{\textsc{SLOPE}(\mb w)}(\bb) $. Further,
problem \eqref{eqn:roprimitive} with uncertainty set
$$\U_{\w} =\left\{\D : \|\D\ph\|_2 
\leq R_{\textsc{SLOPE}(\w)}(\ph) \;\forall \ph \right\}
$$
is equivalent to problem \eqref{eqn:a1} with $R(\bb)=R_{\textsc{SLOPE}(\mb w)}(\bb) $.
\end{proposition}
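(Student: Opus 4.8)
The plan is to reduce both equivalences to a single inner-maximization identity and then invoke the reductions already established for Propositions~\ref{prop:lasso} and~\ref{prop:slope}. Concretely, for a fixed $\bb$ I would first show that, for each of the two uncertainty sets,
$$\max_{\D\in\U_\w}\|\y-(\X+\D)\bb\|_2 = \|\y-\X\bb\|_2 + R_{\textsc{SLOPE}(\w)}(\bb).$$
Granting this, the passage from the min-max problem \eqref{eqn:roprimitive} to the $R_{\textsc{SLOPE}(\w)}$-penalized instance of \eqref{eqn:a1} is the same as in Propositions~\ref{prop:lasso} and~\ref{prop:slope}, so I would simply defer to it. To prove the displayed identity, note by the triangle inequality that $\|\y-(\X+\D)\bb\|_2 = \|(\y-\X\bb)-\D\bb\|_2 \le \|\y-\X\bb\|_2 + \|\D\bb\|_2$, whence $\max_{\D\in\U_\w}\|\y-(\X+\D)\bb\|_2 \le \|\y-\X\bb\|_2 + \max_{\D\in\U_\w}\|\D\bb\|_2$; equality is attained as soon as the maximizing $\D$ can be chosen with $\D\bb$ antiparallel to $\y-\X\bb$. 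Thus everything comes down to computing $\max_{\D\in\U_\w}\|\D\bb\|_2 = R_{\textsc{SLOPE}(\w)}(\bb)$ together with this alignment, which I treat separately for the two sets.

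For the first set $\U_\w=\{\D:\bs\nu(\D)\le\w\}$, writing $\D$ in columns gives $\|\D\bb\|_2 = \|\sum_i\beta_i\D_i\|_2 \le \sum_i|\beta_i|\,\|\D_i\|_2$. Setting $c_i=\|\D_i\|_2$, the constraint is exactly that the descending sort of $(c_1,\ldots,c_p)$ is dominated coordinatewise by $\w$, and maximizing $\sum_i|\beta_i|c_i$ over all such $c$ is a rearrangement problem whose optimum pairs the largest $|\beta_i|$ with the largest admissible budget; the maximum is $\sum_i|\beta_{(i)}|w_i = R_{\textsc{SLOPE}(\w)}(\bb)$. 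To see this bound is tight and simultaneously achieve the alignment, I would take all columns parallel to a common unit vector $\mb d = -(\y-\X\bb)/\|\y-\X\bb\|_2$ (any unit $\mb d$ if $\y=\X\bb$), namely $\D_i=c_i\,\sgn(\beta_i)\,\mb d$ with $c$ the optimal rearrangement; then $\|\D_i\|_2=c_i$, the triangle inequality holds with equality, and $\D\bb = R_{\textsc{SLOPE}(\w)}(\bb)\,\mb d$.

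For the second set $\U_\w=\{\D:\|\D\ph\|_2\le R_{\textsc{SLOPE}(\w)}(\ph)\;\forall\ph\}$, the inequality $\max_{\D\in\U_\w}\|\D\bb\|_2\le R_{\textsc{SLOPE}(\w)}(\bb)$ is immediate from the definition. For the reverse I would use that $R_{\textsc{SLOPE}(\w)}$ is a norm (the weights are nonincreasing and nonnegative, and $w_1>0$ forces definiteness since $R_{\textsc{SLOPE}(\w)}(\bb)\ge w_1\|\bb\|_\infty$), so there is a supporting functional $\vv$ in its dual unit ball with $\vv'\bb = R_{\textsc{SLOPE}(\w)}(\bb)$ and $\vv'\ph\le R_{\textsc{SLOPE}(\w)}(\ph)$ for every $\ph$. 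Taking $\D=\mb d\,\vv'$ with the same aligned $\mb d$ as above gives $\|\D\ph\|_2=|\vv'\ph|\le R_{\textsc{SLOPE}(\w)}(\ph)$, so $\D\in\U_\w$, while $\D\bb = R_{\textsc{SLOPE}(\w)}(\bb)\,\mb d$ realizes both the value and the antiparallel alignment.

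The main obstacle I anticipate is structural rather than computational. The squared-versus-homogeneous reduction is exactly the subtlety already absorbed into Propositions~\ref{prop:lasso} and~\ref{prop:slope}, so I would not redo it. The two genuinely delicate points are: (a) in the first set, verifying that the feasible column-norm vectors are precisely those dominated in sorted order, so that the rearrangement optimum is actually attainable; and (b) in the second set, the existence of the supporting functional $\vv$, which hinges on $R_{\textsc{SLOPE}(\w)}$ being a genuine norm. Establishing the triangle inequality for $R_{\textsc{SLOPE}(\w)}$ (symmetry and definiteness being easy from the $|\beta_{(i)}|$ form and $w_1>0$) is the one ingredient I would be most careful to cite from the SLOPE/OWL literature rather than reprove.
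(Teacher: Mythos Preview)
Your proposal is correct and is precisely the standard robust-optimization argument the paper has in mind: the paper omits the proof entirely, noting only that it ``follows basic techniques described in \cite{RObook},'' i.e., the same triangle-inequality-plus-alignment reduction used for Propositions~\ref{prop:lasso} and~\ref{prop:slope}. Your treatment of both uncertainty sets---the rearrangement argument for the column-norm set and the dual-norm/supporting-functional construction for the operator-norm set---is exactly the elaboration one would expect, and the two ``delicate points'' you flag (attainability of the rearranged column norms and the norm property of $R_{\textsc{SLOPE}(\w)}$) are handled correctly.
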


The proof, like the proof of Proposition \ref{prop:slope}, follows basic techniques described in \cite{RObook} and is therefore omitted.

\section{Additional proofs}\label{app:proof}

This appendix section contains supplemental proofs not contained in the main text.

\begin{proof}[Proof of Theorem \ref{thm:exactEquiv}]
Let $\barl = \|\y\|_2\cdot\left(\max_j\|\x_j\|_2\right)$, where $\x_j$ denotes the $j$th row of $\X$. We fix $\lambda>\barl$, $k$, and $\eta>0$ throughout the entire proof. We begin by observing that it suffices to show that any solution $\bb$ to
\begin{equation}\label{eqn:thmmain}
\ds\min_{\bb} \frac{1}{2}\|\y-\X\bb\|_2^2 + \lambda \tk{\bb} + \eta \|\bb\|_1
\end{equation}
satisfies $\tk{\bb} = 0$, or equivalently, $\|\bb\|_0\leq k$. As per Lemma \ref{lemma:miprep}, problem \eqref{eqn:thmmain} can be rewritten exactly as
\begin{equation}\label{eqn:mainReform}
\begin{array}{ll}
\ds\min_{\bb,\zz} & \frac{1}{2}\|\y-\X\bb\|_2^2 + \lambda\langle\zz,|\bb|\rangle+ \eta\|\bb\|_1 \\
\st& \ds\sum_i z_i=p-k\\
&\zz\in\{0,1\}^p.
\end{array}
\end{equation}
Let $(\bb^*,\zz^*)$ be any solution to \eqref{eqn:mainReform}. Observe that necessarily $\bb^*$ is also a solution to the problem
\begin{equation}\label{eqn:pfsupp1}
\min_{\bb}\frac{1}{2}\|\y-\X\bb\|_2^2  + \lambda\langle\zz^*,|\bb|\rangle+ \eta\|\bb\|_1.
\end{equation}
Note that, unlike \eqref{eqn:thmmain}, the problem in \eqref{eqn:pfsupp1} is readily amenable to an analysis using the theory of proximal gradient methods \cite{combetteswasj,bauschke}. In particular, we must have for any $\gamma>0$ that 
\begin{equation}\label{eqn:pfsupp2}
\bb^* = \prox_{\gamma R} \left(\bb^* - \gamma(\X'\X\bb^* - \X'\y)\right),
\end{equation}
where $\ds R(\bb) = \eta\|\bb\|_1 + \lambda \sum_{i\;:\:z_i^*=1} |\beta_i|$. Suppose that $\tk{\bb^*}>0$. In particular, for some $j\in\{1,\ldots,p\}$, we have $\beta_j^* \neq 0$ and $z_j^*=1$. Yet, as per \eqref{eqn:pfsupp2},\footnote{This is valid for the following reason: since $\beta_j^*\neq 0$ and $\beta_j^*$ satisfies \eqref{eqn:pfsupp2}, it must be the case that $\left|\beta_j^* - \gamma \x_{j}'(\X\bb^* - \y)\right| > \gamma(\eta+\lambda)$, for otherwise the soft-thresholding operator at level $\gamma(\eta+\lambda)$ would set this quantity to zero.}
$$\left|\beta_j^* - \gamma \langle\x_{j},\X\bb^* - \y\rangle\right| > \gamma(\eta+\lambda)\;\quad \text{ for all } \gamma>0,$$
where $\x_j$ denotes the $j$th row of $\X$. This implies that
$$\left|\langle\x_j,\X\bb^*-\y\rangle\right| \geq \eta+\lambda.$$
Now, using the definition of $\barl$, observe that
\begin{align*}
\eta+\lambda\leq \left|\langle\x_j,\X\bb^*-\y\rangle\right|& \leq \|\x_j\|_2 \|\X\bb^*-\y\|_2\\
&\leq \|\x_j\|_2 \|\y\| \leq \barl<\lambda,
\end{align*}
which is a contradiction since $\eta>0$. Hence, $\tk{\bb^*}=0$, completing the proof.
\end{proof}

\subsubsection*{Extended statement of Proposition \ref{prop:asymp}}

We now include a precise version of the convergence claim in Proposition \ref{prop:asymp}. Let us set a standard notion: we say that $\bb$ is $\epsilon$-optimal (for $\epsilon>0$)  to an optimization problem $(\textrm{P})$ if the optimal objective value of $(\textrm{P})$ is within $\epsilon$ of the objective value of $\bb$. We add an additional regularizer $\eta\|\bb\|_1$, for $\eta>0$ fixed, to the objective in order to ensure coercivity of the objective functions.

\begin{proposition}[Extended form of Proposition \ref{prop:asymp}]
Let $g:\R_+\to\R_+$ be an unbounded, continuous, and strictly increasing function with $g(0)=0$. 
Consider the problems
\[\label{eqn:cvg1}
\ds\min_{\bb} \frac{1}{2}\|\y-\X\bb\|_2^2 + \lambda\pi_k^g(\bb) + \eta\|\bb\|_1
\]
and
\[\label{eqn:cvg2}
\ds\min_{\|\bb\|_0\leq k} \frac{1}{2}\|\y-\X\bb\|_2^2 + \eta\|\bb\|_1.
\]
For every $\epsilon>0$, there exists some $\ubarl=\ubarl(\epsilon)>0$ so that for all $\lambda>\ubarl$,
\begin{enumerate}
\item For every optimal $\bbs$ to \eqref{eqn:cvg1}, there is some $\whbb$ so that $\|\bbs-\whbb\|_2\leq \epsilon$, $\whbb$ is feasible to \eqref{eqn:cvg2}, and $\whbb$ is $\epsilon$-optimal to \eqref{eqn:cvg2}.
\item Every optimal $\bb^*$ to \eqref{eqn:cvg2} is $\epsilon$-optimal to \eqref{eqn:cvg1}.
\end{enumerate}
\end{proposition}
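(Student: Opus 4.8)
The plan is to route everything through a single scalar: the penalty value $\pi_k^g(\bbs)$ at an optimizer $\bbs$ of \eqref{eqn:cvg1}, which I will show is forced to be small once $\lambda$ is large. Write $F_\lambda(\bb) = \tfrac12\|\y-\X\bb\|_2^2 + \lambda\pi_k^g(\bb) + \eta\|\bb\|_1$ for the objective of \eqref{eqn:cvg1}, $G(\bb) = \tfrac12\|\y-\X\bb\|_2^2 + \eta\|\bb\|_1$ for that of \eqref{eqn:cvg2}, and let $M$ denote the optimal value of \eqref{eqn:cvg2}. The term $\eta\|\bb\|_1$ makes both objectives coercive, so minimizers exist. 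Since the optimal value of \eqref{eqn:cvg2} is attained at some $k$-sparse $\bb^\circ$, which has $\pi_k^g(\bb^\circ)=0$ by Proposition \ref{prop:asymp}, we get $F_\lambda(\bb^\circ) = G(\bb^\circ) = M$, and therefore any minimizer $\bbs$ of \eqref{eqn:cvg1} obeys $F_\lambda(\bbs)\le M$. Reading the three non-negative summands off this inequality gives the uniform estimates $\pi_k^g(\bbs)\le M/\lambda$, $\|\y-\X\bbs\|_2\le\sqrt{2M}$, and $\|\bbs\|_1\le M/\eta$.

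Next I convert near-feasibility into genuine $\ell_2$-proximity by truncation. Let $\whbb$ be $\bbs$ with all but its $k$ largest-magnitude entries set to zero; then $\whbb$ is feasible for \eqref{eqn:cvg2} and $\|\whbb\|_1\le\|\bbs\|_1$. Because $\pi_k^g(\bbs)=\sum_{i>k} g(|\beta^*_{(i)}|)\le M/\lambda$ with every summand non-negative, each discarded entry obeys $g(|\beta^*_{(i)}|)\le M/\lambda$. This is where the hypotheses on $g$ enter: being continuous, strictly increasing, unbounded, with $g(0)=0$, it is a bijection of $\R_+$ with continuous inverse $g^{-1}$ and $g^{-1}(0)=0$, so $|\beta^*_{(i)}|\le g^{-1}(M/\lambda)$ for each $i>k$, whence $\|\bbs-\whbb\|_2\le\sqrt{p-k}\,g^{-1}(M/\lambda)=:\delta_\lambda$, and $\delta_\lambda\to 0$ as $\lambda\to\infty$.

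Finally I compare objective values. Writing $\y-\X\whbb=(\y-\X\bbs)+\X(\bbs-\whbb)$, fixing a constant $C$ with $\|\X\vv\|_2\le C\|\vv\|_2$ for all $\vv$, and using $\|\y-\X\bbs\|_2\le\sqrt{2M}$, a one-line expansion of the squared norm gives $G(\whbb)-F_\lambda(\bbs)\le \sqrt{2M}\,C\delta_\lambda + \tfrac12 C^2\delta_\lambda^2 =: \theta_\lambda$, since the $\ell_1$ difference $\eta(\|\whbb\|_1-\|\bbs\|_1)$ and the dropped term $-\lambda\pi_k^g(\bbs)$ are both $\le 0$. Combining with $F_\lambda(\bbs)\le M$ and the feasibility bound $G(\whbb)\ge M$ yields the sandwich $M\le G(\whbb)\le F_\lambda(\bbs)+\theta_\lambda\le M+\theta_\lambda$. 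This delivers both claims at once: $\whbb$ is $\theta_\lambda$-optimal for \eqref{eqn:cvg2}, which with the proximity $\delta_\lambda$ gives Part 1; and $0\le M-F_\lambda(\bbs)\le\theta_\lambda$ bounds the suboptimality in \eqref{eqn:cvg1} of any $k$-sparse optimizer of \eqref{eqn:cvg2} (Part 2, as $F_\lambda$ and $G$ coincide on $k$-sparse vectors). Choosing $\ubarl(\epsilon)$ so that both $\delta_\lambda\le\epsilon$ and $\theta_\lambda\le\epsilon$ completes the argument.

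I expect the genuinely delicate point to be the truncation step: turning the \emph{penalty}-level fact that $\pi_k^g(\bbs)$ is small into the \emph{geometric} fact that $\bbs$ is $\ell_2$-close to the $k$-sparse feasible set. This is the only place the structural properties of $g$ are exploited (through continuity of $g^{-1}$ at $0$), and it must hold uniformly over all minimizers $\bbs$, which themselves depend on $\lambda$. The uniform residual bound $\|\y-\X\bbs\|_2\le\sqrt{2M}$ — a direct consequence of the coercivity-inducing $\eta\|\bb\|_1$ regularizer — is precisely what keeps the subsequent perturbation estimate $\theta_\lambda$ independent of the particular optimizer, so that a single threshold $\ubarl(\epsilon)$ works for all of them.
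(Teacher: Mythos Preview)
Your proof is correct and follows essentially the same approach as the paper's own sketch: bound $\pi_k^g(\bbs)$ by a constant over $\lambda$ (the paper uses $\|\y\|_2^2/2$ from plugging in $\bb=\mb0$, you use the tighter constant $M$), truncate $\bbs$ to its top $k$ entries to obtain $\whbb$, and invoke continuity of $g^{-1}$ at $0$ to convert the penalty bound into an entrywise (hence norm) bound on $\bbs-\whbb$. Your write-up is in fact more complete than the paper's, which stops at $\|\bbs-\whbb\|_1\le\epsilon$ and asserts that ``all the other claims essentially follow''; you carry out the objective-value sandwich explicitly and cleanly extract both parts from it.
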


\begin{proof}
The proof follows a basic continuity argument that is simpler than the one presented below in Theorem \ref{thm:corprecise}. For that reason, we do not include a full proof. Observe that the assumptions on $g$ imply that $g^{-1}$ is well-defined on, say, $g([0,1])$. If we let $\epsilon>0$ and suppose that $\bbs$ is optimal to \eqref{eqn:cvg1}, where $\lambda > \ubarl := \|\y\|_2^2/(2g(\epsilon/p))$, and if we define $\whbb$ to be $\bbs$ with all but the $k$ largest magnitude entries truncated to zero (ties broken arbitrarily), then $\pi_k^g(\bbs)\leq \|\y\|_2^2/(2\lambda)$ and $\pi_k^g (\bbs) = \sum_{i=1}^p g(|\beta_i^*-\widehat{\beta}_i|)$ so that $|\beta_i^*-\widehat{\beta}_i| \leq g^{-1}(\|\y\|_2^2/(2\lambda)) \leq \epsilon/p$ by definition of $\ubarl$. Hence, $\|\bbs-\whbb\|_1\leq \epsilon$, and all the other claims essentially follow from this.
\end{proof}

\begin{proof}[Proof of Theorem \ref{thm:robeivInterp}]
We begin by showing that for any $\bb$,
\begin{equation*}
\min_{\D\in\U_k^\lambda}\|\y-(\X+\D)\bb\|_2 =  \left(\|\y-\X\bb\|_2 - \lambda \sum_{i=1}^k |\beta_{(i)}|   \right)_+
\end{equation*}
where $(a)_+:=\max\{0,a\}$. Fix $\bb$ and set $\mb r = \y-\X\bb$. We assume without loss of generality that $\mb r\neq \mb0$ and that $\bb\neq\mb0$. For any $\D$, note that $\|\mb r - \D\bb\|_2 \geq 0$ and $\|\mb r - \D\bb\|_2 \geq \|\mb r\|_2 - \|\D\bb\|_2$ by the reverse triangle inequality. Now observe that for $\D\in\U_k^\lambda$,
$$\|\D\bb\|_2  \leq \sum_i |\beta_i| \|\D_i\|_2 \leq \sum_{i=1}^k \lambda |\beta_{(i)}|.$$
Therefore, $\|\mb r-\D\bb\|_2 \geq \left(\|\mb r\|_2 - \lambda\sum_{i=1}^k |\beta_{(i)}| \right)_+$. Let $I\sub\{1,\ldots,p\}$ be a set of $k$ indices which correspond to the $k$ largest entries of $\bb$ (if $|\beta_{(k)}|=|\beta_{(k+1)}|$, break ties arbitrarily). Define $\D\in\U_k^\lambda$ as the matrix whose $i$th column is
$$\left\{\begin{array}{rl}
\underline{\lambda}\sgn(\beta_i)\mb r / \|\mb r\|_2,&i\in I\\
0,&i\notin I,
\end{array}\right.$$
where $\underline{\lambda} = \min\left\{\lambda, \|\mb r\|_2/\left(\sum_{i=1}^k|\beta_{(i)}|\right)\right\}$. It is easy to verify that $\D\in\U_k^\lambda$ and  $\|\mb r-\D\bb\|_2 = \left(\|\mb r\|_2 - \lambda\sum_{i=1}^k |\beta_{(i)}| \right)_+$. 
Combined with the lower bound, we have
$$\min_{\D\in\U_k^\lambda} \|\y-(\X+\D)\bb\|_2 = \left(\|\y-\X\bb\|_2 -\lambda\sum_{i=1}^k |\beta_{(i)}|\right)_+$$
which completes the first claim.

It follows that the problem \eqref{eqn:eivconhompen} can be rewritten exactly as
\begin{equation}\label{eqn:pfsupp3}
\min_{\bb} \left(\|\y-\X\bb\|_2 - \lambda \sum_{i=1}^k |\beta_{(i)}|   \right)_+ + r(\bb).
\end{equation}

To finish the proof of the theorem, it suffices to show that if $\bb^*$ is a solution to \eqref{eqn:pfsupp3}, then
$$\|\y-\X\bb^*\|_2 - \lambda\sum_{i=1}^k |\beta_{(i)}^*| \geq 0.$$
If this is not true, then $\|\y-\X\bb^*\|_2 - \lambda\sum_{i=1}^k |\beta_{(i)}^*| <0$ and so $\bb^*\neq \mb 0$. However, this implies that for $1>\epsilon>0$ sufficiently small, $\bb_\epsilon:=(1-\epsilon)\bb^*$ satisfies $\|\y-\X\bb_\epsilon\|_2 - \lambda\sum_{i=1}^k |(\beta_\epsilon)_{(i)}| <0$. This in turn implies that
$$\begin{array}{l}
\left(\|\y-\X\bb_\epsilon\|_2 - \lambda\sum_{i=1}^k |(\beta_\epsilon)_{(i)}| \right)_+ + r(\bb_\epsilon)\\
<  \left(\|\y-\X\bb^*\|_2 - \lambda\sum_{i=1}^k |\beta_{(i)}^*| \right)_+ +r(\bb^*),
\end{array}$$
which contradicts the optimality of $\bb^*$. (We have used the absolute homogeneity of the norm $r$ and that $\bb^*\neq\mb0$.) Hence, any optimal $\bb^*$ to \eqref{eqn:pfsupp3} necessarily satisfies $\|\y-\X\bb^*\|_2 - \lambda\sum_{i=1}^k |\beta_{(i)}^*| \geq 0$ and so the desired results follows.
\end{proof}

\emph{N.B.} The assumption that $r$ is a norm can be relaxed somewhat (as is clear in the proof), although the full generality is not necessary for our purposes.

\subsection*{Corollary \ref{cor:slope} and related discussions}

Here we include a precise statement of the ``approximate'' claim in Corollary \ref{cor:slope}. After the proof, we include a discussion of related technical issues.

\begin{theorem}[Precise statement of Corollary \ref{cor:slope}]\label{thm:corprecise}
For $\tau>\lambda>0$, consider the problems
\begin{equation}\label{eqn:ced1}
\begin{array}{ll}
\ds\min_{\bb } &\ds\|\y-\X\bb\|_2 + (\tau-\lambda)\|\bb\|_1+ \lambda \tk{\bb} \\
\st & \ds\lambda\sum_{i=1}^k|\beta_{(i)}| \leq \|\y-\X\bb\|_2.
\end{array}
\end{equation}
and
\begin{equation}\label{eqn:ced2}
\min_{\bb } \ds\|\y-\X\bb\|_2 + (\tau-\lambda)\|\bb\|_1+ \lambda \tk{\bb}.
\end{equation}
For all $\epsilon>0$, there exists $\barl=\barl(\epsilon)>0$ so that whenever $\lambda\in(0,\barl)$,
\begin{enumerate}
\item Every optimal $\bb^*$ to \eqref{eqn:ced1} is $\epsilon$-optimal to \eqref{eqn:ced2}.
\item For every optimal $\bb^*$ to \eqref{eqn:ced2}, there is some $\whbb$ so that $\|\bb^*-\whbb\|_2\leq \epsilon$, $\whbb$ is feasible to \eqref{eqn:ced1}, and $\whbb$ is $\epsilon$-optimal to \eqref{eqn:ced1}. 
\end{enumerate}
\end{theorem}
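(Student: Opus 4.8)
The plan is to exploit that \eqref{eqn:ced1} and \eqref{eqn:ced2} share the \emph{same} objective
$$F(\bb) := \|\y-\X\bb\|_2 + (\tau-\lambda)\|\bb\|_1 + \lambda\tk{\bb} = \|\y-\X\bb\|_2 + \tau\|\bb\|_1 - \lambda\sum_{i=1}^k|\beta_{(i)}|,$$
where the second equality uses $\tk{\bb}=\|\bb\|_1-\sum_{i=1}^k|\beta_{(i)}|$; the two problems differ only in that \eqref{eqn:ced1} imposes the constraint $\lambda\sum_{i=1}^k|\beta_{(i)}|\leq\|\y-\X\bb\|_2$. Since \eqref{eqn:ced1} minimizes $F$ over a subset of $\R^p$, the inequality $Z\eqref{eqn:ced2}\leq Z\eqref{eqn:ced1}$ holds for free, so the entire content of the theorem is an upper bound on $Z\eqref{eqn:ced1}$ together with the accompanying near-feasible points. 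First I would record two facts uniform over $\lambda\in(0,\tau/2)$: (i) since $\bb=\mb0$ is feasible for both problems and $F(\mb0)=\|\y\|_2$, while $F(\bb)\geq(\tau-\lambda)\|\bb\|_1\geq\frac{\tau}{2}\|\bb\|_1$, every optimizer of either problem lies in the $\ell_1$-ball of radius $B:=2\|\y\|_2/\tau$; and (ii) $F$ is globally Lipschitz in $\ell_2$ with a constant $L$ (e.g. $L=\|\X\|_{\mathrm{op}}+2\tau\sqrt p$) that may be chosen uniformly in $\lambda\in(0,\tau)$. The degenerate case $\y=\mb0$ forces $\bb=\mb0$ in both problems and is disposed of separately.

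The heart of the argument is a single rounding construction. Suppose $\bb$ satisfies $\|\bb\|_1\leq B$ but is \emph{infeasible} for \eqref{eqn:ced1}, i.e. $\rho:=\|\y-\X\bb\|_2 < \lambda\sum_{i=1}^k|\beta_{(i)}|$. The crucial observation is that infeasibility is \emph{self-limiting}: it forces $\rho<\lambda\sum_{i=1}^k|\beta_{(i)}|\leq\lambda\|\bb\|_1\leq\lambda B$, so the residual is automatically small when $\lambda$ is small. I would then scale toward the origin, setting $\bb_s:=s\bb$ with $s\in[0,1]$, and use the reverse triangle inequality in the form $\|\y-s\X\bb\|_2=\|(1-s)\y+s(\y-\X\bb)\|_2\geq(1-s)\|\y\|_2-s\rho$ to see that the feasibility inequality $\lambda s\sum_{i=1}^k|\beta_{(i)}|\leq\|\y-s\X\bb\|_2$ holds as soon as $s(\|\y\|_2+\lambda B+\rho)\leq\|\y\|_2$. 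Choosing $s=\|\y\|_2/(\|\y\|_2+2\lambda B)$ (admissible because $\rho<\lambda B$) renders $\bb_s$ feasible while $1-s\leq 2\lambda B/\|\y\|_2$, whence $\|\bb_s-\bb\|_2=(1-s)\|\bb\|_2\leq 2\lambda B^2/\|\y\|_2$ and, by Lipschitzness, $F(\bb_s)\leq F(\bb)+2L\lambda B^2/\|\y\|_2$. Both error terms vanish as $\lambda\to0$.

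Given this, both conclusions follow by picking $\barl(\epsilon)$ small enough that $\lambda<\tau/2$ and both $2\lambda B^2/\|\y\|_2\leq\epsilon$ and $2L\lambda B^2/\|\y\|_2\leq\epsilon$. For Part 1, apply the construction to an optimizer $\bb'$ of \eqref{eqn:ced2}: if $\bb'$ is already feasible then $Z\eqref{eqn:ced1}\leq F(\bb')=Z\eqref{eqn:ced2}$, and otherwise $Z\eqref{eqn:ced1}\leq F(\bb'_s)\leq Z\eqref{eqn:ced2}+\epsilon$; combined with $Z\eqref{eqn:ced2}\leq Z\eqref{eqn:ced1}$, any optimizer $\bb^*$ of \eqref{eqn:ced1} then satisfies $F(\bb^*)=Z\eqref{eqn:ced1}\leq Z\eqref{eqn:ced2}+\epsilon$, i.e. it is $\epsilon$-optimal for \eqref{eqn:ced2}. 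For Part 2, apply the construction to an optimizer $\bb^*$ of \eqref{eqn:ced2}: if it is feasible take $\whbb=\bb^*$ (then $\whbb$ is in fact optimal for \eqref{eqn:ced1}), and otherwise take $\whbb=\bb^*_s$, which satisfies $\|\whbb-\bb^*\|_2\leq\epsilon$, is feasible, and obeys $F(\whbb)\leq Z\eqref{eqn:ced2}+\epsilon\leq Z\eqref{eqn:ced1}+\epsilon$, so $\whbb$ is $\epsilon$-optimal for \eqref{eqn:ced1}.

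The main obstacle I anticipate is bookkeeping the \emph{uniformity} of all constants in $\lambda$: because the objective $F$ itself varies with $\lambda$, one must check that the a priori bound $B$, the Lipschitz constant $L$, and the feasibility threshold can each be taken independent of $\lambda$ on $(0,\tau/2)$, and that the single scaling step simultaneously controls the distance $\|\whbb-\bb^*\|_2$ and the objective gap. The key enabling insight---that infeasibility of a bounded point automatically forces a small residual $\rho<\lambda B$---is what guarantees the required perturbation is $O(\lambda)$ and hence negligible; isolating and proving this self-limiting property is the one genuinely substantive step, with the remainder reducing to routine continuity and Lipschitz estimates.
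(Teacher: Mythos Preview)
Your argument is correct and takes a genuinely different route from the paper's. The paper fixes $\sigma=1-\epsilon\tau/(2\|\y\|_2)$ independent of $\lambda$, sets $\whbb=\sigma\bb^*$, and then must prove feasibility by lower-bounding $\|\y-\sigma\X\bb^*\|_2$; this forces a case split on whether the auxiliary problem $\min_\bb\|\y-\X\bb\|_2+\tau\|\bb\|_1$ has a nonzero optimizer, with a separate convexity argument in each case. You instead exploit the ``self-limiting'' observation that infeasibility of a bounded point already forces $\rho=\|\y-\X\bb\|_2<\lambda B$, so the elementary reverse-triangle bound $\|\y-s\X\bb\|_2\geq(1-s)\|\y\|_2-s\rho$ suffices and no case distinction is needed; your scaling factor $s$ depends on $\lambda$ rather than $\epsilon$, and you close with a single Lipschitz estimate in place of the paper's term-by-term bookkeeping. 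Your route is shorter and more transparent; the paper's approach avoids invoking a Lipschitz constant but at the cost of the somewhat artificial split into two cases.
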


\begin{proof}
Fix $\tau>0$ throughout. We assume without loss of generality that $\y\neq\mb0$, as otherwise the claim is obvious. We will prove the second claim first, as it essentially implies the first.

Let us consider two situations. In particular, we consider whether there exists a nonzero optimal solution to
\[\label{eqn:ced3}
\min_\bb \|\y-\X\bb\|_2+\tau\|\bb\|_1.
\]

\subsubsection*{Case 1---existence of nonzero optimal solution to \eqref{eqn:ced3}}
We first consider the case when there exists a nonzero solution to problem \eqref{eqn:ced3}. We show a few lemmata:

\begin{enumerate}
\item We first show that the norm of solutions to \eqref{eqn:ced2} are uniformly bounded away from zero, independent of $\lambda$. To proceed,
let $\whbb$ be any nonzero optimal solution to \eqref{eqn:ced3}. Observe that if $\bbs$ is optimal to \eqref{eqn:ced2}, then
\begin{align*}
\|\y-\X\bbs\|_2 + (\tau-\lambda)\|\bbs\|_1 + \lambda T_k({\bbs}) & 
\leq \|\y-\X\whbb\|_2 + (\tau-\lambda)\|\whbb\|_1 + \lambda T_k(\whbb)\\
&\leq \|\y-\X\bbs\|_2 + \tau\|\bbs\|_1 - \lambda \|\whbb\|_1 + \lambda T_k(\whbb),
\end{align*}
implying that $\|\whbb\|_1 - T_k(\whbb) \leq \|\bbs\|_1 - T_k(\bbs)$. In other words, $\sum_{i=1}^k|\widehat{\beta}_{(i)}| \leq \sum_{i=1}^k |{\beta}_{(i)}^*|\leq \|\bbs\|_1$. Using the fact that $\whbb\neq\mb0$, we have that any solution $\bbs$ to \eqref{eqn:ced2} has strictly positive norm:
$$\|\bbs\|_1 \geq C>0,$$
where $C:=\sum_{i=1}^k|\widehat{\beta}_{(i)}|$ is a universal constant depending only on $\tau$ (and not $\lambda$).

\item We now upper bound the norm of solutions to \eqref{eqn:ced2}.  In particular, if $\bbs$ is optimal to \eqref{eqn:ced2}, then
$$\|\y-\X\bbs\|_2 + (\tau-\lambda)\|\bbs\|_1+ \lambda T_k({\bbs}) \leq \|\y\|_2 + 0 + 0 = \|\y\|_2,$$
and so $\|\bbs\|_1\leq \|\y\|_2/(\tau-\lambda)$. (This bound is not uniform in $\lambda$, but if we restrict our attention to, say $\lambda\leq \tau/2$, it is.)

\item We now lower bound the loss for scaled version of optimal solutions. In particular, if $\sigma\in[0,1]$ and $\bbs$ is optimal to \eqref{eqn:ced2}, then by optimality we have that
$$\|\y-\X\bbs\|_2 + (\tau-\lambda)\|\bbs\|_1 + \lambda T_k(\bbs) \leq \|\y-\sigma\X\bbs\|_2 + (\tau-\lambda)\sigma\|\bbs\|_1 + \lambda \sigma T_k(\bbs),$$
which in turn implies that
\begin{align*}
\|\y-\sigma\X\bbs\|_2 &\geq \|\y-\X\bbs\|_2 + (\tau-\lambda)(1-\sigma) \|\bbs\|_1 + \lambda (1-\sigma) T_k(\bbs)\\
& \geq \|\y-\X\bbs\|_2 + (\tau-\lambda)(1-\sigma) C\geq (\tau-\lambda)(1-\sigma)C
\end{align*}
by combining with the first observation.

\end{enumerate}

Using these, we are now ready to proceed. Let $\epsilon>0$; we assume without loss of generality that $\epsilon<2\|\y\|_2/\tau$. Let
$$\barl:= \min\left\{\frac{\epsilon\tau^3 C}{4\|\y\|_2(2\|\y\|_2 - \epsilon\tau)},\frac{\tau}{2}\right\}.$$
Fix $\lambda\in(0,\barl)$ and let $\bbs$ be any optimal solution to \eqref{eqn:ced2}.   Define 
$$\sigma:= \left(1-\frac{\epsilon\tau}{2\|\y\|_2} \right) \text{\quad and \quad}\whbb := \sigma\bbs.$$
We claim that $\whbb$ satisfies the desired requirements of the theorem:

\begin{enumerate}
\item We first argue that $\|\bbs-\whbb\|_2 \leq \epsilon$. Observe that
$$\|\bbs-\whbb\|_2 = \epsilon\tau\|\bbs\|_2/({2\|\y\|_2} ) \leq {\epsilon\tau}\|\bbs\|_1/({2\|\y\|_2} ) \leq {\epsilon\tau} \|\y\|_2/({2\|\y\|_2}( \tau-\lambda)) \leq\epsilon.
$$

\item We now show that $\whbb$ is feasible to \eqref{eqn:ced1}. This requires us to argue that
$\lambda \sum_{i=1}^k |\widehat{\beta}_{(i)}| \leq \|\y-\X\whbb\|_2$. Yet,
\begin{align*}
\lambda\sum_{i=1}^k |\widehat{\beta}_{(i)}| &\leq \lambda \|\whbb\|_1 = \lambda \sigma\|\bbs\|_1\leq  2\lambda\sigma\|\y\|_2/\tau\leq  \frac{\tau}{2} (1-\sigma) C\\
& \leq (\tau-\lambda) (1-\sigma)C \leq \|\y-\sigma\X\bbs\|_2 = \|\y-\X\whbb\|_2,
\end{align*}
as desired. The only non-obvious step is the inequality $2\lambda\sigma\|\y\|_2/\tau\leq \tau(1-\sigma)C/2$, which follows from algebraic manipulations using the definitions of $\sigma$ and $\barl$.

\item Finally, we show that $\whbb$ is $\left(\epsilon\|\X\|_2\right)$-optimal to \eqref{eqn:ced1}. Indeed, because $\bbs$ is optimal to \eqref{eqn:ced2} which necessarily lowers bound problem \eqref{eqn:ced1}, we have that the objective value gap between $\whbb$ and an optimal solution to \eqref{eqn:ced1} is at most
\begin{align*}
&\|\y-\sigma\X\bbs\|_2 -\|\y-\X\bbs\|_2 + (\tau-\lambda)(\sigma-1)\|\bbs\|_1 + \lambda(\sigma-1)T_k(\bbs)\\
&\leq (1-\sigma)\|\X\bbs\|_2 + 0 + 0 \leq (1-\sigma)\|\X\|_2\|\bbs\|_2 \leq 2(1-\sigma) \|\X\|_2\|\y\|_2/\tau   \\
&=2\epsilon\tau/(2\|\y\|_2)\|\X\|_2\|\y\|_2/\tau = \epsilon\|\X\|_2.
\end{align*}
\end{enumerate}
As the choice of $\epsilon>0$ was arbitrary, this completes the proof of claim 2 in the theorem in the case when $\mb0$ is not a solution to \eqref{eqn:ced3}.

\subsubsection*{Case 2---no nonzero optimal solution to \eqref{eqn:ced3}}

In the case when there is no nonzero optimal solution to \eqref{eqn:ced3}, $\mb 0$ is optimal and it is the only optimal point. Our analysis will be similar to the previous approach, with the key difference being in how we lower bound the quantity $\|\y-\sigma\X\bbs\|_2$ where $\bbs$ is optimal to \eqref{eqn:ced2}. Again, we have several lemmata:

\begin{enumerate}
\item As before, if $\bbs$ is optimal to \eqref{eqn:ced2}, then $\|\bbs\|_1\leq \|\y\|_2/(\tau-\lambda)$.

\item We now lower bound the quantity $\|\y-\sigma\X\bbs\|_2$, where $\bbs$ is optimal to \eqref{eqn:ced2} and $\sigma\in[0,1]$. As such, consider the function
$$f(\sigma):= \|\y-\sigma\X\bbs\|_2 + \sigma\tau\|\bbs\|_1.$$
Because $f$ is convex in $\sigma$ and the unique optimal solution to \eqref{eqn:ced3} is $\mb0$, we have that
$$f(\sigma)\geq f(0) + \sigma f'(0)\;\;\;\forall\sigma\in[0,1]\text{\quad and \quad} f'(0)\geq0$$
(It is not difficult to argue that $f$ is differentiable at $0$.) An elementary computation shows that $f'(0) = \tau\|\bbs\|_1 -\langle\y,\X\bbs\rangle/\|\y\|_2$. Therefore, we have that
$$\|\y-\sigma\X\bbs\|_2 + \sigma\tau\|\bbs\|_1 \geq \|\y\|_2 + \sigma \left(\tau\|\bbs\|_1 - \langle \y,\X\bbs\rangle/\|\y\|_2\right),$$
implying that
$$\|\y-\sigma\X\bbs\|_2 \geq \|\y\|_2 - \sigma \langle \y,\X\bbs\rangle/\|\y\|_2 \geq \|\y\|_2 - \sigma \tau\|\bbs\|_1\geq \|\y\|_2 - \sigma\tau\|\y\|_2/(\tau-\lambda),$$
with the final step following by an application of the previous lemma.

\end{enumerate}

We are now ready to proceed. Let $\epsilon>0$; we assume without loss of generality that $\epsilon<2\|\y\|_2/\tau$. Let
$$\barl:= \min\left\{\frac{\epsilon\tau^2}{4\|\y\|_2 - \epsilon\tau},\frac{\tau}{2}\right\}.$$
Fix $\lambda\in(0,\barl)$ and let $\bbs$ be any optimal solution to \eqref{eqn:ced2}.   Define 
$$\sigma:= \left(1-\frac{\epsilon\tau}{2\|\y\|_2} \right) \text{\quad and \quad}\whbb := \sigma\bbs.$$
We claim that $\whbb$ satisfies the desired requirements:

\begin{enumerate}
\item The proof of the claim that $\|\bbs-\whbb\|_2 \leq \epsilon$ is exactly as before.

\item We now show that $\whbb$ is feasible to \eqref{eqn:ced1}, which requires a different proof. Again this requires us to argue that
$\lambda \sum_{i=1}^k |\widehat{\beta}_{(i)}| \leq \|\y-\X\whbb\|_2$. Yet,
\begin{align*}
\lambda\sum_{i=1}^k |\widehat{\beta}_{(i)}| &\leq \lambda \|\whbb\|_1 = \lambda \sigma\|\bbs\|_1\leq  \lambda\sigma\|\y\|_2/(\tau-\lambda)\leq \|\y\|_2 - \sigma\tau\|\y\|_2/(\tau-\lambda)\\
&\leq \|\y-\sigma\X\bbs\|_2 = \|\y-\X\whbb\|_2,
\end{align*}
as desired. The only non-obvious step is the inequality $\lambda \sigma\|\y\|_2/(\tau-\lambda) \leq \|\y\|_2 - \sigma\tau\|\y\|_2/(\tau-\lambda)$, which follows from algebraic manipulations using the definitions of $\sigma$ and $\barl$.

\item Finally, the proof that $\whbb$ is $\left(\epsilon\|\X\|_2\right)$-optimal to \eqref{eqn:ced1} follows in the same way as before.
\end{enumerate}

Therefore, we conclude that in the case when $\mb 0$ is the unique optimal solution to \eqref{eqn:ced3}, then again we have that the claim 2 of the theorem holds.

Finally, we show that claim 1 holds: any solution $\bbs$ to \eqref{eqn:ced1} is $\epsilon$-optimal to \eqref{eqn:ced2}. This follows by letting $\overline{\bb}$ be any optimal solution to \eqref{eqn:ced2}. By applying the entire argument above, we know that the objective value of some $\whbb$, feasible to \eqref{eqn:ced1} and close to $\overline{\bb}$, is within $\epsilon$ of the optimal objective value of \eqref{eqn:ced1}, i.e., the objective value of $\bbs$, and within $\epsilon$ of the objective value of \eqref{eqn:ced2}, i.e., the objective value of $\overline{\bb}$. This completes the proof.
\end{proof}

In short, the key complication is that the quantity $\|\y-\X\bb^*\|_2$ does not need to be uniformly bounded away from zero for solutions $\bb^*$ to problem \eqref{eqn:ced2}. This is part of the complication of working with the homogeneous form of the trimmed Lasso problem. For a concrete example, if one considers the homogeneous Lasso problem with $p=n=1$, $\y=(1)$, and $\X=(1)$, then the homogeneous Lasso problem $\min_\bb \|\y-\X\bb\|_2+\eta\|\bb\|_1$ is
$$\min_{\beta} |1-\beta| + \eta|\beta|.$$
For $\eta\in[0,1]$, $\beta^*=1$ is an optimal solution to this problem with corresponding error $\|\y-\X\bb^*\|=0$. If we make an assumption about the behavior of $\|\y-\X\bb^*\|$, then we do not need the setup as shown above.

\begin{proof}[Proof of Proposition \ref{prop:robeivslope}]
The proof is entirely analogous to that of Theorems \ref{thm:robeivInterp} and \ref{thm:corprecise} and is therefore omitted.
\end{proof}

\begin{proof}[Proof of validity of Example \ref{eg:cl}]

Let us consider the problem instance where $p=n=2$ with
$$\y = \begin{pmatrix}1\\1\end{pmatrix} \text{ \quad and \quad } \mb X = \begin{pmatrix} 1 & -1\\-1&2\end{pmatrix}.$$
Let $\lambda =1/2$ and $ \ell = 1$, and consider the problem
\begin{equation}\label{eqn:as1}
\min_{\bb} \|\y-\X\bb\|_2^2 + |\beta_{(2)}| = \min_{\beta_1,\beta_2} (1-\beta_1+\beta_2)^2 + (1+\beta_1-2\beta_2)^2 + |\beta_{(2)}|.
\end{equation}
We have omitted the factor of $1/2$ as shown in the actual example in the main text in order to avoid unnecessary complications.

Solving problem \eqref{eqn:as1} and its related counterparts (for $\ell\in\{0,2\}$) can rely on convex analysis because we can simply enumerate all possible scenarios. In particular, the solution to \eqref{eqn:as1} is $\bb^*=(3/2,1)$ based on an analysis of two related problems:
\begin{align*}
\min_{\beta_1,\beta_2} (1-\beta_1+\beta_2)^2 + (1+\beta_1-2\beta_2)^2 + |\beta_1|.\\
\min_{\beta_1,\beta_2} (1-\beta_1+\beta_2)^2 + (1+\beta_1-2\beta_2)^2 + |\beta_2|.
\end{align*}
(We should be careful to impose the additional constraints $|\beta_1|\leq |\beta_2|$ and $|\beta_1|\geq|\beta_2|$, respectively, although a simple argument shows that these constraints are not required in this example.) A standard convex analysis using the Lasso (e.g. by directly using subdifferentials) shows that the problems have respective solutions $(1/2,1/2)$ and $(3/2,1)$, with the latter having the better objective value in \eqref{eqn:as1}. As such, $\bb^*$ is indeed optimal. The solution in the cases of $\ell\in\{0,2\}$ follows a similarly standard analysis.

It is perhaps more interesting to study the general case where $\mu,\gamma\geq0$. In particular, we will show that $\bb^*=(3/2,1)$ is not an optimal solution to the clipped Lasso problem
\begin{equation}\label{eqn:as2}
\min_{\beta_1,\beta_2} (1-\beta_1+\beta_2)^2 + (1+\beta_1-2\beta_2)^2 + \mu\min\{\gamma|\beta_1|,1\}+\mu\min\{\gamma|\beta_2|,1\}
\end{equation}
for any choices of $\mu$ and $\gamma$. While in general such a problem may be difficult to fully analyze, we can again rely on localized analysis using convex analysis. To proceed, let
$$f(\beta_1,\beta_2) = (1-\beta_1+\beta_2)^2 + (1+\beta_1-2\beta_2)^2 + \mu\min\{\gamma|\beta_1|,1\}+\mu\min\{\gamma|\beta_2|,1\},$$
with the parameters $\mu$ and $\gamma$ implicit. We consider the following exhaustive cases:

\begin{enumerate}
\item $\boxed{\gamma>1}$ : In this case, $f$ is convex and differentiable in a neighborhood of $\bb^*$. Its gradient at $\bb^*$ is $\nabla f(\bb^*)=(0,-1)$, and therefore $\bb^*$ is neither locally optimal nor globally optimal to problem \eqref{eqn:as2}.

\item $\boxed{\gamma<2/3}$ : In this case, $f$ is again convex and differentiable in a neighborhood of $\bb^*$. Its gradient at $\bb^*$ is $\nabla f(\bb^*)=(\mu\gamma,\mu\gamma-1)$. Again, this cannot equal $(0,0)$ and therefore $\bb^*$ is neither locally nor globally optimal to problem \eqref{eqn:as2}.

\item $\boxed{2/3<\gamma<1}$ : In this case, $f$ is again convex and differentiable in a neighborhood of $\bb^*$. Its gradient at $\bb^*$ is $\nabla f(\bb^*)=(0,\mu\gamma-1)$. As a necessary condition for local optimality, we must have that $\mu\gamma=1$, implying that $\mu>1$. Further, if $\bb^*$ is optimal to \eqref{eqn:as2}, then $f(\bb^*)\leq f(0,0)$. Yet,
\begin{align*}
f(\bb^*) &= 1/2 + \mu + \mu\gamma = 3/2 + \mu\\
f(0,0) & = 2 ,
\end{align*}
implying that $\mu\leq 1/2 $, in contradiction of $\mu>1$. Hence, $\bb^*$ cannot be optimal to \eqref{eqn:as2}.

\item $\boxed{\gamma=2/3}$ : In this case, we make two comparisons, using the points $\bb^*$, $(0,0)$, and $(3,2)$:
\begin{align*}
f(\bb^*) &= 1/2 + \mu + 2\mu/3 = 1/2 + 5\mu/3\\
f(0,0) & = 2\\
f(3,2)&= 2\mu.
\end{align*}
Assuming optimality of $\bb^*$, we have that $f(\bb^*)\leq f(0,0)$, i.e., $\mu\leq 9/10$; similarly, $f(\bb^*)\leq f(3,2)$, i.e., $\mu\geq3/2$. Clearly both cannot hold, and so therefore $\bb^*$ cannot be optimal.

\item $\boxed{\gamma=1}$ : Finally, we see that $f(\bb^*)\leq f(3,2)$ would imply that $1/2+2\mu\leq 2\mu$, which is impossible; hence, $\bb^*$ is not optimal to \eqref{eqn:as2}. (This argument can clearly also be used in the case when $\gamma>1$, although it is instructive to see the argument given above in that case.)

\end{enumerate}

\noindent In any case, we have that $\bb^*$ cannot be a solution to the clipped Lasso problem \eqref{eqn:as2}. This completes the proof of validity of Example \ref{eg:cl}.
\end{proof}

\section{Supplementary details for Algorithms}\label{app:algsupp}

This appendix contains further details on algorithms as discussed in Section \ref{sec:algs}. The presentation here is primarily self-contained. Note that the alternating minimization scheme based on difference-of-convex optimization can be found in \cite{gotoh1}.

\subsection{Alternating minimization scheme}
 
Let us set the following notation:
$$\begin{array}{lll}
f(\bb) &=& \|\y-\X\bb\|_2^2/2 + \lambda \tk{\bb}+ \eta \|\bb\|_1,\\
f_1(\bb) &=& \|\y-\X\bb\|_2^2/2 + (\eta+\lambda) \|\bb\|_1,\\
f_2(\bb) &=& \lambda \sum_{i=1}^k |\beta_{(i)}|.
\end{array}
$$

\begin{definition}
For any function $F:\R^p\to\R$ and $\epsilon\geq0$, we define the $\epsilon$-subdifferential of $F$ at $\bb_0\in\R^p$ to be the set $\partial_\epsilon F(\bb_0)$ defined as
$$\left\{ \bg\in\R^p\; : \; F(\bb) - F(\bb_0) \geq \langle\bg,\bb-\bb_0\rangle - \epsilon \;\forall \;\bb\in\R^p\right\}.$$
In particular, when $\epsilon=0$, we refer to $\sd_0 F(\bb_0)$ as the subdifferential of $F$ at $\bb_0$, and we will denote this as $\sd F(\bb_0)$.
\end{definition}

Using this definition, we have the following result precisely characterizing local and global optima of \eqref{eqn:alg}.

\begin{theorem}\label{thm:optCharac}
\begin{enumerate}[(a)]
\item A point $\bb^*$ is a local minimum of $f$ if and only if $\sd f_2(\bb^*) \sub \sd f_1(\bb^*)$.

\item A point $\bb^*$ is a global minimum of $f$ if and only if $\sd_\epsilon f_2(\bb^*) \sub \sd_\epsilon f_1(\bb^*)$ for all $\epsilon\geq0$.
\end{enumerate}
\end{theorem}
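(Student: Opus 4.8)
The plan is to treat Theorem \ref{thm:optCharac} as an instance of standard difference-of-convex optimality theory (\emph{c.f.} \cite{dcSummary}) applied to the decomposition $f=f_1-f_2$, where $f_1,f_2$ are convex and, crucially, $f_2(\bb)=\lambda\sum_{i=1}^k|\beta_{(i)}|$ is \emph{polyhedral}: it equals $\lambda$ times the support function of the polytope $\{\w:\|\w\|_\infty\leq 1,\ \|\w\|_1\leq k\}$, equivalently a finite maximum of linear functions. Throughout I would use two facts about a finite convex function $F$: its subdifferential $\sd F(\bb^*)$ is nonempty and compact, and its one-sided directional derivative is the support function of the subdifferential, $F'(\bb^*;\mb d)=\max_{\bg\in\sd F(\bb^*)}\langle\bg,\mb d\rangle$.

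For the necessity half of part (a), I would argue via directional derivatives. Since $f_1,f_2$ are convex, $f$ has directional derivatives with $f'(\bb^*;\mb d)=f_1'(\bb^*;\mb d)-f_2'(\bb^*;\mb d)$, and local optimality gives $f'(\bb^*;\mb d)\geq 0$, hence $f_1'(\bb^*;\mb d)\geq f_2'(\bb^*;\mb d)$ for every direction $\mb d$. Fixing any $\bg\in\sd f_2(\bb^*)$, the support-function identity yields $\langle\bg,\mb d\rangle\leq f_2'(\bb^*;\mb d)\leq f_1'(\bb^*;\mb d)=\max_{\mb h\in\sd f_1(\bb^*)}\langle\mb h,\mb d\rangle$ for all $\mb d$; since $\sd f_1(\bb^*)$ is closed and convex, this inequality between support functions forces $\bg\in\sd f_1(\bb^*)$. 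This step uses only convexity and works for any DC decomposition.

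The sufficiency half of part (a) is the step I expect to be the main obstacle, and it is exactly where polyhedrality of $f_2$ is essential; indeed the converse is false for general DC functions, since if $f_1,f_2$ are smooth at $\bb^*$ the inclusion reduces to $\nabla f(\bb^*)=0$, which need not be a local minimum. Because $f_2$ is polyhedral, on a sufficiently small neighborhood $U$ of $\bb^*$ only the affine pieces active at $\bb^*$ are relevant, so that $f_2(\bb)-f_2(\bb^*)=\max_{\bg\in\sd f_2(\bb^*)}\langle\bg,\bb-\bb^*\rangle$ for all $\bb\in U$. Given $\sd f_2(\bb^*)\sub\sd f_1(\bb^*)$, I would, for each $\bb\in U$, choose a maximizer $\bg_{\bb}\in\sd f_2(\bb^*)\sub\sd f_1(\bb^*)$ of this local linearization; the global subgradient inequality for the convex $f_1$ then gives $f_1(\bb)-f_1(\bb^*)\geq\langle\bg_{\bb},\bb-\bb^*\rangle=f_2(\bb)-f_2(\bb^*)$, i.e. $f(\bb)\geq f(\bb^*)$ on $U$, so $\bb^*$ is a local minimum. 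The care needed here is verifying the local-linearization identity (active-piece argument) and the compactness that guarantees the maximizer exists.

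For part (b) I would work entirely with $\epsilon$-subdifferentials, and no polyhedrality is required. For necessity, a global minimum satisfies $f_1(\bb)-f_1(\bb^*)\geq f_2(\bb)-f_2(\bb^*)$ for all $\bb$; chaining this with the defining inequality of any $\bg\in\sd_\epsilon f_2(\bb^*)$ immediately gives $f_1(\bb)-f_1(\bb^*)\geq\langle\bg,\bb-\bb^*\rangle-\epsilon$, i.e. $\bg\in\sd_\epsilon f_1(\bb^*)$. For sufficiency, fix an arbitrary $\bb_0$ and pick an exact subgradient $\bg\in\sd f_2(\bb_0)$ (nonempty since $f_2$ is finite convex); a short computation shows $\bg\in\sd_\epsilon f_2(\bb^*)$ with $\epsilon:=f_2(\bb^*)-f_2(\bb_0)-\langle\bg,\bb^*-\bb_0\rangle\geq 0$, the nonnegativity being precisely the subgradient inequality for $f_2$ at $\bb_0$ evaluated at $\bb^*$. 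The hypothesis then yields $\bg\in\sd_\epsilon f_1(\bb^*)$; evaluating its defining inequality at $\bb=\bb_0$ and substituting the value of $\epsilon$, the terms $\langle\bg,\cdot\rangle$ cancel and one is left with $f_1(\bb_0)-f_2(\bb_0)\geq f_1(\bb^*)-f_2(\bb^*)$. Since $\bb_0$ is arbitrary, $\bb^*$ is globally optimal, completing the proof.
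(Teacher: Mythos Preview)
Your proposal is correct and follows essentially the same approach as the paper: both invoke standard difference-of-convex optimality theory, with the crucial observation that $f_2$ is polyhedral being what makes the converse in part (a) go through. The only difference is that the paper simply cites the relevant results from \cite{taoan97} and \cite{dcSummary}, whereas you have written out a self-contained proof of exactly those cited facts in this setting.
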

\begin{proof}
This is a direct application of results in \cite[Thm. 1]{taoan97}. Part (b) is immediate. The forward implication of part (a) is immediate as well; the converse implication follows by observing that $f_2$ is a \emph{polyhedral} convex function \cite[Thm. 1(ii)]{dcSummary} (see definition therein).
\end{proof}

Let us note that $\sd f_1$ and $\sd f_2$ are both easily computable, and hence,  local optimality can be verified given some candidate $\bb^*$ per Theorem \ref{thm:optCharac}.\footnote{For the specific functions of interest, verifying local optimality of a candidate $\bb^*$ can be performed in $O(p\min\{n,p\}+p\log p)$ operations; the first component relates to the computation of $\X'\X\bb^*$, while the second captures the sorting of the entries of $\bb^*$. See Appendix \ref{app:alg1supp} for details.
}
We now discuss the associated alternating minimization scheme (or equivalently, as a sequential linearization scheme), shown in Algorithm \ref{alg:1} for finding local optima of \eqref{eqn:alg} by making use of Theorem \ref{thm:optCharac}. Through what follows, we make use of the standard notion of a conjugate function, defined as follows:

\begin{definition}
For any function $F:\R^p\to\R$, we define its conjugate function $F^*:\R^p\to\R$ to be the function
$$F^*(\bg) = \sup_{\bb} \langle\bg,\bb - F(\bb)\rangle.$$
\end{definition}

We will make the following minor technical assumption: in step 2) of Algorithm \ref{alg:1}, we assume without loss of generality that the $\bg^\ell$ so computed satisfies the additional criteria:
\begin{enumerate}
\item it is an extreme point of the relevant feasible region, 
\item and that if $\sd f_2(\bb^\ell) \not\sub \sd f_1(\bb^\ell)$, then $\bg^\ell$ is chosen such that $\bg^\ell \in \sd f_2(\bb^\ell) \setminus\sd f_1(\bb^\ell)$.
\end{enumerate}
Solving \eqref{eqn:wrtbg} with these additional assumptions can nearly be solved in closed form by simply sorting the entries of $|\bb|$, i.e., by finding $|\beta_{(1)}|,\ldots,|\beta_{(p)}|$.
We must take some care to ensure that the second without loss of generality condition on $\bg$ is satisfied. This is straightforward but tedious; the details are shown in Appendix \ref{app:alg1supp}.

Using this modification, the convergence properties of Algorithm \ref{alg:1} can be proven as follows:

\begin{proof}[Proof of Theorem \ref{thm:altConvProp}]
This is an application of \cite[Thms. 3-5]{taoan97}. The only modification is in requiring that $\bg^\ell$ is chosen so that $\bg^\ell \in \sd f_2(\bb^*) \setminus\sd f_1(\bb^*)$ if $\bb^\ell$ is not a local minimum of $f$---see \cite[\textsection3.3]{taoan97} for a motivation and justification for such a modification. Finally, the correspondence between $\bg^\ell\in \sd f_2(\bb^\ell)$ and \eqref{eqn:wrtbg}, and between $\bb^{\ell+1}\in \sd f_1^*(\bg^\ell)$ and \eqref{eqn:wrtbb}, is clear from an elementary argument applied to subdifferentials of variational formulations of functions.
\end{proof}

\subsection{Algorithm \ref{alg:1}, Step 2}\label{app:alg1supp}

Here we present the details of solving \eqref{eqn:wrtbg} in Algorithm \ref{alg:1} in a way that ensures that the associated without loss of generality claims hold. In doing so, we also implicitly study how to verify the conditions for local optimality (\emph{c.f.} Theorem \ref{thm:optCharac}). 
Throughout, we use the $\sgn$ function defined as 
$$\sgn(x) = \left\{\begin{array}{rl}
1,& x>0\\
-1,&x<0\\
0,&x=0.
\end{array}\right.$$

For fixed $\bb$, the problem of interest is
\begin{equation*}
\begin{array}{ll}
\ds\max_{\bg} & \langle\bb,\bg\rangle\\
\st& \ds\sum_i |\gamma_i| \leq \lambda k\\
& \ds|\gamma_i|\leq \lambda\;\forall i.
\end{array}
\end{equation*}
We wish to find a maximizer $\bg$ for which the following hold:
\begin{enumerate}
\item $\bg$ is an extreme point of the relevant feasible region, 
\item and that if $\sd f_2(\bb) \not\sub \sd f_1(\bb)$, then $\bg$ is such that $\bg \in \sd f_2(\bb) \setminus\sd f_1(\bb)$.
\end{enumerate}
As the problem on its own can be solved by sorting the entries of $\bb$, the crux of the problem is ensuring that 2) holds.

Given the highly structured nature of $f_1$ and $f_2$ in our setup, it is simple, albeit tedious, to ensure that such a condition is satisfied. Let $I = \{i: |\beta_i|=|\beta_{(k)}|\}$. If $|I|=1$, the optimal solution is unique, and there is nothing to show. Therefore, we will assume that $|I|\geq2$. We will construct an optimal solution $\bg$ which satisfies the desired conditions. First observe that we necessarily must have that 1) $\gamma_i = \lambda\sgn(\beta_i)$ if $|\beta_i|> |\beta_{(k)}|$ and 2) $\gamma_i=0$ if $|\beta_i|<|\beta_{(k)}|$. We now proceed to define the rest of the entries of $\bg$. We consider two cases:

\begin{enumerate}
\item First consider the case when $|\beta_{(k)}|>0$. We claim that $\sd f_2(\bb) \not\sub \sd f_1(\bb)$. To do so, we will inspect the $i$th entries of $\sd f_1(\bb)$ for $i\in I$; as such, let $P_i^j =\{\delta_i: \bd\in \sd f_j(\bb)\}$ for $j\in\{1,2\}$ and $i\in I$ (a projection). For each $i\in I$, we have using basic convex analysis that $P_i^1$ is a singelton: $P_i^1 = \{\langle\X_i,\X\bb-\y\rangle + (\eta+\lambda)\sgn(\beta_i)\}$, where $\X_i$ is the $i$th column of $\X$. In contrast, because $|I|\geq2$, the set $P_i^2$ is an interval with strictly positive length for each $i\in I$ (it is either $[-\lambda,0]$ or $[0,\lambda]$, depending on whether $\beta_i<0$ or $\beta_i>0$, respectively). Therefore, $\sd f_2(\bb) \not\sub \sd f_1(\bb)$, as claimed.

Fix an arbitrary $j\in I$. Per the above argument, we must have that $\langle\X_j,\X\bb-\y\rangle+ (\eta+\lambda)\sgn(\beta_j)\neq 0$ or $\langle\X_j,\X\bb-\y\rangle + (\eta+\lambda)\sgn(\beta_j)\neq\lambda\sgn(\beta_j)$. In the former case, set $\gamma_i=0$, while in the latter case we define $\gamma_i=\lambda\sgn(\beta_i)$ (if both are true, either choice suffices). It is clear that it is possible to fill in the remaining entries of $\gamma_i$ for $i\in I\setminus\{j\}$ in a straightforward manner so that $\bg\in \partial f_2(\bb)$.  Further, by construction, $\bg\notin \partial f_1(\bb)$, as desired.

\item Now consider the case when $|\beta_{(k)}|=0$. Using the preceding argument, we see that $P_i^1$ is the interval $[\langle\X_i,\X\bb-\y\rangle-(\eta+\lambda),\langle\X_i,\X\bb-\y\rangle+\eta+\lambda] $ for $i\in I$. In contrast, $P_i^2$ is the interval $[-\lambda,\lambda]$ for $i\in I$. If for all $i\in I$ one has that $P_i^2\sub P_i^1$, then the choice of $\gamma_i$ for $i\in I$ is obvious: any optimal extreme point $\bg$ of the problem will suffice. (Note here that it may or may not be that $\sd f_2(\bb) \sub \sd f_1(\bb)$. This entirely depends on $\beta_i$ for $i\notin I$.)

Therefore, we may assume that there exists some $j\in I$ so that $P_j^2\not\sub P_j^1$. (It follows immediately that $\sd f_2(\bb) \not\sub \sd f_1(\bb)$.) We must have that $\langle\X_j,\X\bb-\y\rangle -(\eta+\lambda)>-\lambda$ or $\langle\X_j,\X\bb-\y\rangle + (\eta+\lambda)<\lambda$. In the former case, set $\gamma_i=-\lambda$, while in the latter case we define $\gamma_i=\lambda$ (if both are true, either choice suffices). It is clear that it is possible to fill in the remaining entries of $\gamma_i$ for $i\in I\setminus\{j\}$ in a straightforward manner so that $\bg\in \partial f_2(\bb)$. By construction, $\bg\notin \partial f_1(\bb)$, as desired.

\end{enumerate}

In either case, we have that one can choose $\bg\in\sd f_2(\bb)$ so that 1) $\bg$ is an extreme point of the feasible region $\{\bg:\sum_i|\gamma_i|\leq\lambda k,\; |\gamma_i|\leq \lambda\;\forall i\}$ and that 2) $\bg \in \sd f_2(\bb) \setminus\sd f_1(\bb)$ whenever $\sd f_2(\bb) \not\sub \sd f_1(\bb)$. This concludes the analysis; thus, we have shown the validity (and computational feasibility) of the without loss of generality claim present in Algorithm \ref{alg:1}. Indeed, per our analysis, Step 2 in Algorithm \ref{alg:1} can be solved in $O(p\min\{n,p\}+p\log p)$ operations (sorting of $\bb$ in $O(p\log p)$ followed by $O(p)$ conditionals and gradient evaluation in $O(np)$). In reality, if we keep track of gradients in Step 3, there is no need to recompute gradients in Step 2, and therefore in practice Step 2 is of the same complexity of sorting a list of $p$ numbers. (We assume that $\X'\y$ has been computed offline and store throughout for simplicity.)

\subsection{Algorithm \ref{alg:admm}, Step 3}\label{app:admmsupp}

Here we show how to solve Step 3 in Algorithm \ref{alg:admm}, namely, solving the orthogonal design trimmed Lasso problem
\begin{equation}\label{eqn:pf6}
\min_{\bg} \lambda \tk{\bg} + \frac{\sigma}{2}\|\bb-\bg\|_2^2  - \langle \mb q,\bg\rangle,
\end{equation}
where $\bb$ and $\mb q$ are fixed. This is solvable in closed form. Let $\ba=\bb-\mb q/\sigma$. First observe that we can rewrite \eqref{eqn:pf6} as
\begin{align*}
\eqref{eqn:pf6} &= \min_{\bg} \lambda \tk{\bg} + \sigma\|\bg-\ba\|_2^2/2\\
&= \min_{\substack{\bg,\zz:\\\sum_i z_i=p-k\\
\zz\in\{0,1\}^p}}\lambda\langle \zz,|\bg|\rangle + \sigma\|\bg-\ba\|_2^2/2\\
&= \min_{\substack{\bg,\zz:\\\sum_i z_i=p-k\\
\zz\in\{0,1\}^p}}\sum_i \left(\lambda z_i|\gamma| +   \sigma(\gamma_i-\alpha_i)^2/2\right).
\end{align*}
The penultimate step follows via Lemma \ref{lemma:miprep}. Per this final representation, the solution becomes clear. In particular, let $I$ be a set of $k$ indices of $\ba$ corresponding to $\alpha_{(1)}$, $\alpha_{(2)}$, \ldots, $\alpha_{(k)}$. (If $|\alpha_{(k)}| = |\alpha_{(k+1)}|$, we break ties arbitrarily.) Then a solution $\bg^*$ to \eqref{eqn:pf6} is
$$\gamma_i^* = \left\{\begin{array}{rl}
\alpha_i, & i\in I\\
\soft_{\lambda/\sigma}(\alpha_i),&i\notin I,
\end{array}\right.$$
where $\soft_{\lambda/\sigma}(\alpha_i) = \sgn(\alpha_i) \left|\alpha_i-\lambda/\sigma\right|$.

\subsection{Computational details}\label{app:compdetail}

For completeness and reproducibility, we also include all computational details. For Figure \ref{fig:coeffpath}, the following parameters were used to generate the test instance: $n = 100$, $p = 20$, $\text{SNR} = 10$, \texttt{julia} seed = 1, $\eta=0.01$, $k=2$. The example was generated from the following true model:
\begin{enumerate}
\item $\bb_\text{true}$ is a vector with ten entries equal to 1 and all others equal to zero. (So $\|\bb_\text{true} \|_0=10$.)
\item covariance matrix $\s$ is generated with $\Sigma_{ij} = .8^{|i-j|}$.
\item $\X\sim N(\mb 0,\s)$.
\item $\epsilon_i\stackrel{\text{i.i.d.}}{\sim} N(0, \bb_0'\s\bb_0/\text{SNR})$
\item $\y$ is then defined as $\X\bb_0+\be$

\end{enumerate}

The 100 examples generated for Figure \ref{fig:optgap} were using the following parameters: $n = 100$, $p = 20$, $\text{SNR} = 10$, \texttt{julia} seed $\in\{ 1,\ldots,100\}$, $\eta=0.01$, $k=2$, $\text{bigM} = 20$. MIO using Gurobi solver. Max iterations: alternating minimization---1000; ADMM (inner)---2000; ADMM (outer)---10000. ADMM parameters: $\sigma=1$, $\tau=0.9$. The examples themselves had the same structure as the previous example. The optimal gaps shown are relative to the objective in \eqref{eqn:alg}. The averages are computed as geometric means (relative to optimal 100\%) across the 100 instances, and then displayed relative to the optimal 100\%.

\end{appendices}

\section*{Acknowledgments}

Copenhaver was partially supported by the Department of Defense, Office of Naval Research, through the National Defense Science and Engineering Graduate (NDSEG) Fellowship. Mazumder was partially supported by ONR Grant N000141512342.


\bibliographystyle{IEEEtranS}
\bibliography{References-clean}

\end{document}